\theoremstyle{plain}
\newtheorem{theorem}{Theorem}[section]
\newtheorem{lemma}[theorem]{Lemma}
\theoremstyle{definition}
\newtheorem{definition}[theorem]{Definition}
\theoremstyle{remark}
\icmltitlerunning{Partitioning for Intrinsic Model Inversion Resistance in Collaborative Inference}
\begin{document}

\twocolumn[
  \icmltitle{Partitioning for Intrinsic Model Inversion Resistance in Collaborative Inference}



  \icmlsetsymbol{equal}{*}




  \begin{icmlauthorlist}
    \icmlauthor{Rongke Liu}{nuaa}
    \icmlauthor{Youwen Zhu}{nuaa}
    \icmlauthor{Lei Zhou}{nuaa}
    \icmlauthor{Xianglong Zhang}{thu}
    \icmlauthor{Dong Wang}{hdu}
  \end{icmlauthorlist}
  \icmlaffiliation{nuaa}{School of Computer Science and Technology, Nanjing University of Aeronautics and Astronautics, Nanjing, China}
  \icmlaffiliation{thu}{Institute for Network Sciences and Cyberspace, Tsinghua University, Beijing, China}
  \icmlaffiliation{hdu}{School of Cybersecurity, Hangzhou Dianzi University, Hangzhou, China}

  \icmlcorrespondingauthor{Youwen Zhu}{zhuyw@nuaa.edu.cn}

  \icmlkeywords{Neural network partitioning, Collaborative inference, Model inversion attack, Information entropy}

  \vskip 0.3in
]



\printAffiliationsAndNotice{}  

\begin{abstract}
In collaborative inference (CI), transmitting intermediate representations $Z$ from edge devices enables model inversion attacks (MIA) that reconstruct the original inputs $X$, while existing defences mainly perturb shallow-layer $Z$ at the cost of utility. We instead ask: \textit{where should an edge–cloud model be partitioned to obtain intrinsic resistance to MIA?} We challenge the intuition that depth is the driver of MIA resistance, and show that depth is sufficient only insofar as it enables a representational transition; this transition is necessary for \textit{intrinsic} resistance and is marked by an abrupt rise in the lower bound of $H(X\!\mid\!Z)$. Correspondingly, the decisive variance term in the entropy bound shifts from a global variance to the intra-class mean-squared radius \(R^2_c\) rather than dimensionality alone, yielding an \(R^2_c\)-based criterion to locate the transition zone, or identify it post hoc from MIA outcomes, which we term the \textit{Golden Partition Zone} (GPZ). We further explain how \(R^2_c\) evolves during training and show that it can be controlled through the label distribution; we refer to this controllable dynamic behavior as the \textit{Neural Vortex}, an analysis-backed explanatory concept. Across four representative deep vision models, partitioning at the GPZ yields over 4× higher reconstruction MSE compared to shallow splits; under entropy and inversion-model enhancements, decision-level representations provide 66\% stronger resistance than feature-level ones, and we further observe that data type affects both the transition boundary and reconstruction.
\end{abstract}

\section{Introduction}

Collaborative inference (CI)~\cite{kang2017neurosurgeon} has become a widely adopted paradigm in edge–cloud computing, and is increasingly the backbone of artificial intelligence (AI) services, where a deep neural network (NN) is partitioned between an edge device and a cloud server. This design enables efficient inference and raw data protection by allowing the edge to perform early computations while the cloud processes the remaining. The method supports real-time performance and has seen broad adoption in UAVs~\cite{qu2023elastic}, IoT systems~\cite{shlezinger2022collaborative}, and private cloud computing (PCC)~\cite{Apple}. With the continued rise of 5G, IoT, and AI, CI is expected to become even more prevalent. However, recent studies have shown that the transmitted information can be exploited by adversaries to reconstruct the original input through model inversion attacks (MIA) \cite{he2019model}.

\begin{figure}[t] 
    \centering
    \includegraphics[width=1.0\columnwidth]{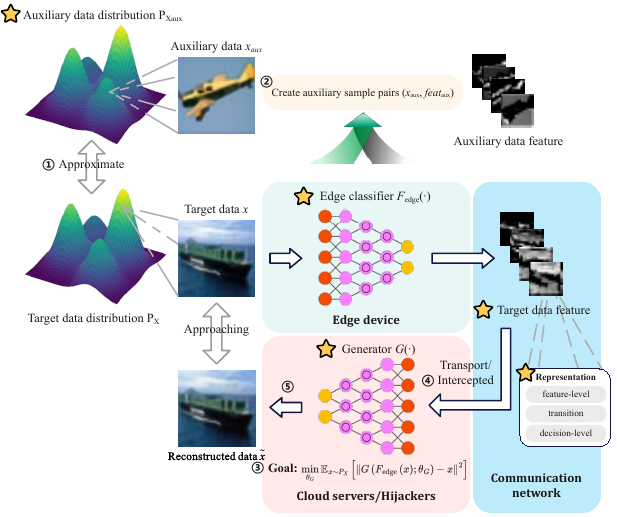}
    \caption{Paradigm of MIA in collaborative inference, with $\star$ indicating the parts covered in our contributions.}
    \label{fig:back}
    \vspace{-1em}
\end{figure}

As illustrated in Fig.~\ref{fig:back}, MIA under CI aims to train a generator using auxiliary data that approximates the target distribution, such that the generator can approximate the inverse mapping of the edge-side model~\cite{he2019model}. Unlike centralized MIA~\cite{zhang2020secret, liu2024unstoppable}, which typically recover class-level distributions, collaborative inference's MIA~\cite{he2019model, chen2024dia} focuses on reconstructing individual samples, posing a more fine-grained privacy risk for AI services.

Most existing MIA-in-CI studies focus on shallow layers, where transmitted information preserves rich semantics and remains highly informative about the inputs, enabling reconstruction. Existing defences largely fall into two lines: (i) information perturbation (e.g., noise~\cite{wang2022privacy}, masks~\cite{ding2024patrol}, deeper models and information-theoretic bottlenecks~\cite{xia2025theoretical, duan2023privascissors}), which can mitigate reconstruction but often comes at the cost of utility and generalization; and (ii) cryptographic approaches (e.g., homomorphic encryption~\cite{zhu2023passive}), which secures communication and computation but leaves intrinsic leakage unchanged and incurs overhead for edge. Accordingly, our goal is not to propose another incremental defence and compete on the privacy–utility trade-off; instead, we raise a more upstream question: \emph{\textbf{where should an edge–cloud model be partitioned to obtain intrinsic resistance to MIA?}} Identifying such a zone could enable more effective and diverse defence strategies in the future.

In this work, we challenge the intuition that depth drives MIA resistance~\cite{he2019model, ding2024patrol}. Experiments on Vision Transformer (ViT)~\cite{dosovitskiy2020image} reveal that depth alone fails to ensure protection; inversion is suppressed when the mutual information \(I(X, Z)\) between the input \(X\) and intermediate information \(Z\) drops sufficiently. On residual architectures~\cite{he2016deep} such as IR-152 and ResNet-50, we observe that within a certain depth range, deeper layers can even mildly improve inversion quality. From an information-theoretic perspective, we derive the lower bound on the conditional entropy $H(X\!\mid\!Z)$ and show that it increases abruptly when the representation transitions from feature-level to decision-level; correspondingly, the decisive variance term shifts from global total variance to the intra-class mean-squared radius \(R^2_c\) rather than dimensionality alone. This implies that \textit{\textbf{depth is sufficient only insofar as it enables a representational transition, whereas the transition itself is necessary for intrinsic resistance.}} Motivated by this, we use \(R^2_c\) to locate (or infer post hoc from MIA outcomes) the transition zone, termed the \textit{Golden Partition Zone} (GPZ). We further explain how \(R^2_c\) evolves during training and show that it can be controlled through the label distribution; we refer to this controllable dynamic behavior, formulated as an analysis-backed explanatory concept, as the \textit{Neural Vortex}.

Based on these insights, we conduct experiments on four representative deep vision models, IR-152, ResNet-50~\cite{he2016deep}, VGG19~\cite{simonyan2014very}, and ViT~\cite{dosovitskiy2020image}, and demonstrate that partitioning at the GPZ yields over 4× higher reconstruction MSE compared to shallow splits. We further introduce targeted enhancement strategies, including mathematically grounded and NN–based techniques to enrich $Z$, as well as a progressive attention mechanism that better approximates the behavior of $H(X\!\mid\!Z)$. Experiments demonstrate that decision-level representations provide, on average, 66\% higher resistance compared to feature-level ones. Finally, we study data dependence and find that the transition boundary shifts across data types rather than occurring at a fixed network depth. Moreover, higher intra-class diversity typically yields stronger post-transition resistance to inversion.

Our main contributions are summarized as follows:
\begin{itemize}
    \item We recognize and actively locate the GPZ for CI partitioning, where the representational transition triggers an abrupt increase in the lower bound of $H(X\!\mid\!Z)$, challenging the depth-based safety intuition. (Sec.~\ref{sec:3.1})

    \item We characterize how $R_c^2$ evolves toward and beyond the GPZ during training, how label-distribution design shapes this evolution, and how it is associated with inversion resistance; we refer to this analysis-backed explanatory concept as the \textit{Neural Vortex}. (Sec.~\ref{sec:3.2})

    \item We validate the intrinsic MIA resistance of GPZ partitioning on four representative vision models, and further stress-test it with higher-entropy $Z$ and stronger inversion models. (Sec.~\ref{sec:4.2})

    \item We show that data shift the transition boundary and affect reconstruction quality, shaping both GPZ localization and inversion outcomes. (Sec.~\ref{sec:4.2})
\end{itemize}

\section{Background}
We follow the standard threat model for MIA-in-CI, focusing on images and deep vision models.

\subsection{Collaborative Inference Setting}

In collaborative inference, the information flow follows \(x \xrightarrow{f_{\text{edge}}} z \xrightarrow{f_{\text{cloud}}} y\), where the deep model is split into an edge-side component \(f_{\text{edge}}\) and a cloud-side component \(f_{\text{cloud}}\). The edge device, controlled by the end user, processes the sensitive input \(x\). This edge component performs early-layer computations and transmits intermediate information \(z\) to the cloud. The cloud server, managed by a service provider, completes the remaining computation and returns the prediction \(y\), either for direct feedback or downstream processing~\cite{liu2024review}.

In this setup, the cloud may act as the adversary, and an attacker may also eavesdrop on the transmission of \(z\); in all cases, the adversary does not access the raw input \(x\). Importantly, transport-layer encryption such as TLS~\cite{dierks2008transport} only protects data in transit from plaintext leakage, but cannot mitigate privacy risks arising from the semantic information embedded in $z$. Thus, the threat lies in \emph{what} is transmitted rather than \emph{how} it is transmitted, and encryption alone does not guarantee privacy in CI.

\subsection{Model Inversion Attack}

The goal of the adversary is to reconstruct the original input \(x\) from the intermediate information \(z\), by learning an approximate inverse mapping \(g \approx f_{\text{edge}}^{-1}\). The attack remains feasible under both white-box and black-box conditions: with white-box access the adversary can build a surrogate that more closely approximates the true inverse mapping, while with only black-box access the procedure of~\cite{yang2019neural, he2019model, yin2023ginver} still enables effective training of the inverse model. The inversion model construction and training details are provided in Section~\ref{sec:4}.

\section{Theoretical Analysis for Model Partitioning}
\label{sec:3}
This section develops a theory for intrinsic inversion resistance in CI. We show that depth is not the driver: it is sufficient only insofar as it enables a representational transition, while the transition is the necessary mechanism that induces an abrupt increase in the lower bound of $H(X\!\mid\!Z)$. We then relate the bound to information dimensionality $D$ and $R_c^2$, yielding an $R_c^2$-guided criterion to locate the GPZ. Finally, we characterize the training dynamics of $R_c^2$ and show that label-distribution design (e.g., label smoothing) can deliberately steer it. Table~\ref{tab:notation} summarizes the notation used throughout our derivations.

\begin{table}[t]
  \centering
  \small
  \setlength{\tabcolsep}{10pt}
  \renewcommand{\arraystretch}{1.25}
  \caption{Notation used in the theoretical analysis.}
  \vspace{-0.2em}
  \label{tab:notation}
  \resizebox{\columnwidth}{!}{%
  \begin{tabular}{@{}l l@{}}
    \toprule
    \textbf{Symbol} & \textbf{Definition} \\
    \midrule
    $K$ & Number of classes \\
    $c$ & Class index \\
    $x_i,\; X=\{x_i\}_{i=1}^B$ & Input of sample $i$ and a mini-batch of inputs \\
    $B$ & Mini-batch size \\
    $d_\ell,D_\ell$ & Feature and decision-level representation dimension at layer $\ell$ \\
    $z_i \in \mathbb{R}^{d_\ell}$ & Representation (feature) of sample $i$ at the current layer $\ell$ \\
    $Z=\bigl(z^{(1)},\dots,z^{(B)}\bigr)\in\mathbb{R}^{d}$ & Batch concatenation of $B$ representations \\
    $d=B\cdot d_\ell$ & Dimension of the concatenated batch vector $Z$ \\
    $y_i \in \{0,1\}^K$ & One-hot ground-truth label of sample $i$ (so $y_i=e_c$ for class $c$) \\
    $\mathcal I_c=\{i:\, y_i=e_c\},\; N_c=|\mathcal I_c|$ & Index set and number of samples in class $c$ \\
    $\mu_c = \frac{1}{N_c} \sum_{i \in \mathcal I_c} z_i$ & Class center \\
    $\Sigma_c = \frac{1}{N_c} \sum_{i \in \mathcal I_c} (z_i - \mu_c)(z_i - \mu_c)^\top$ & Class covariance \\
    $o_i = f_{\mathrm{cloud}}(z_i) \in \mathbb{R}^K$ & Logits for sample $i$ \\
    $p_i = \operatorname{softmax}(o_i)\in\Delta^{K-1}$ & Predicted class probabilities \\
    $J_i = \frac{\partial o_i}{\partial z_i}\in\mathbb{R}^{K\times d_\ell}$ & Jacobian of logits w.r.t.\ representation \\
    $e_k\in\mathbb{R}^K$ & $k$-th standard basis vector \\
    $\delta_i = \frac{\partial \mathcal{L}_{\mathrm{CE}}}{\partial o_i} = p_i - y_i$ & CE gradient w.r.t.\ logits \\
    $y'_i$ & Label-smoothed target (A): $y'_{i,c}=1-\alpha,\; y'_{i,k}=\alpha/(K-1)\;(k\neq c)$ \\
    $\tilde{\delta}_i = \frac{\partial \mathcal{L}_{\mathrm{LS}}}{\partial o_i} = p_i - y'_i$ & LS gradient w.r.t.\ logits \\
    \bottomrule
  \end{tabular}%
  }
\end{table}

\subsection{Golden Partition Zone}
\label{sec:3.1}


The mutual information between the input \( X \) and intermediate information \( Z \) is defined as:
\begin{equation}
I(X; Z) = H(X) - H(X|Z) = H(Z) - H(Z|X)
\label{eq:1}
\end{equation}
Throughout, $X$ and $Z$ are represented with finite-precision variables, and $H(\cdot)$ denotes Shannon entropy. When the mapping $f_{\text{edge}}: X \mapsto Z$ is deterministic, as in standard feedforward networks~\cite{xia2025theoretical}, $H(Z|X) = 0$, and thus $I(X; Z) = H(Z)$.

In standard forward networks, information transformations form a Markov chain, and by the data processing inequality:
\begin{equation}
I(X; Z_{\ell1}) \geq I(X; Z_{\ell2}) \geq \cdots \geq I(X; Z_{\ell})
\label{eq:2}
\end{equation}
implying that $I(X;Z)$ decreases with depth, corresponding to increased $H(X\!\mid\!Z)$ and reduced $H(Z)$, which hinders MIA success~\cite{2025arXiv250100824L}.

However, the rate at which $I(X; Z)$ decreases is highly \textit{architecture-dependent}. For example, ResNet-style residual, skip connections and concatenation do not violate Eq.~\ref{eq:2}, but make layer mappings closer to injective (approximately invertible), thereby slowing the decay of $I(X;Z)$ with depth.


\begin{figure}[t] 
    \centering
    \includegraphics[width=1.0\columnwidth]{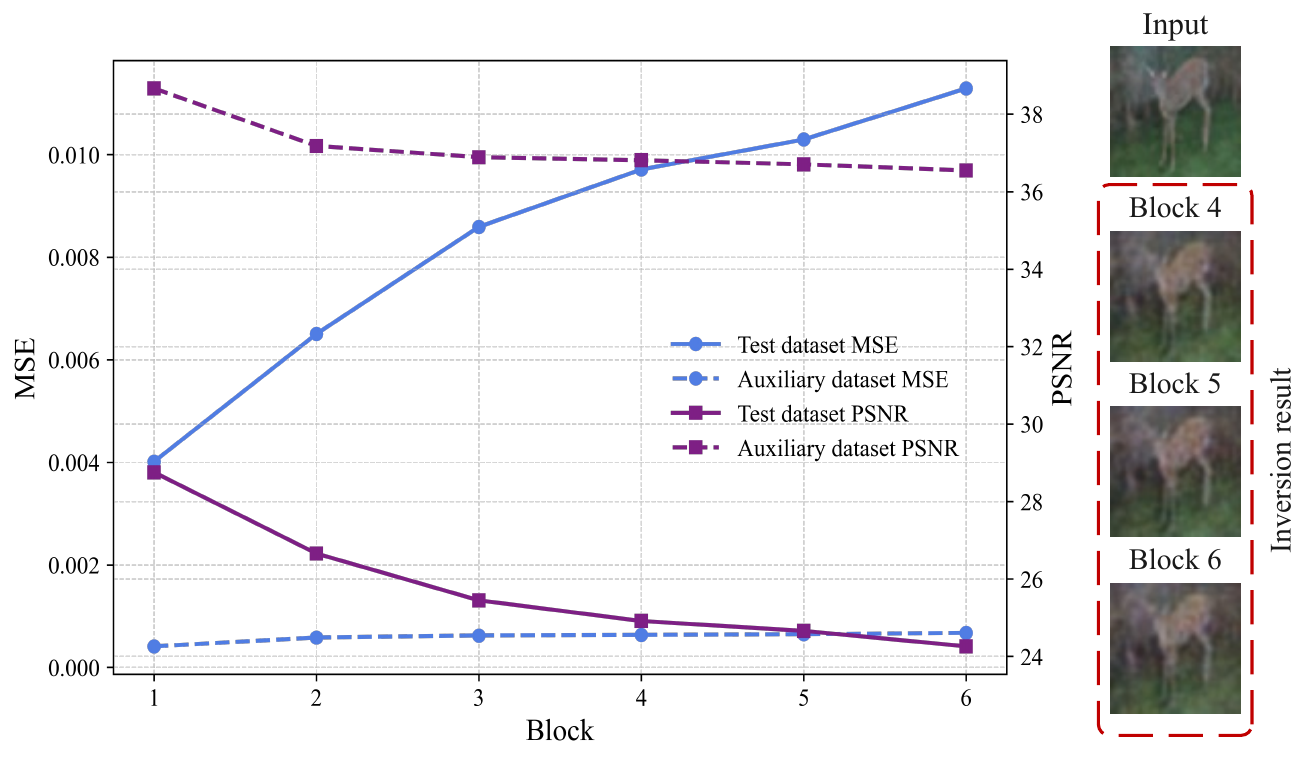}
    \caption{Comparison of MSE and PSNR after performing MIA on different depths of ViT  (same settings as Sec.~\ref{sec:4}).}
    \label{fig:vit}
    \vspace{-0.4em}
\end{figure}


Building on this observation, we present a concrete case study showing that depth alone does not guarantee MIA resistance in ViT. As shown in Fig.~\ref{fig:vit}, even attacking the 6th (final) layer of a ViT~\cite{dosovitskiy2020image} yields accurate reconstructions. Each layer retains a hybrid representation (a 1D decision token plus 256D patch embeddings), which remains rich in instance-specific information. Consequently, increasing depth does not substantially reduce \( I(X; Z) \) or induce a representational transition. The model architecture and training code are available on GitHub\footnote{\url{https://github.com/GoldenPartitionZone/GoldenPartitionZone}}.

Consistently, our ResNet results (Sec.~\ref{sec:4.1} and Appendices) show that increasing depth within a certain range does not necessarily hinder MIA and can even facilitate it. One might attribute these cases to the dimensionality of $z$. However, our VGG19 results show the opposite trend even at matched representation dimensionality (Sec.~\ref{sec:4.1}). Taken together, these results suggest the following empirical takeaway: \emph{\textbf{depth is sufficient only insofar as it enables a representational transition, whereas the transition itself is necessary for intrinsic resistance}}. This motivates three questions: \textit{why does the transition impede MIA, how strong can the impediment be, and can it be deliberately controlled?} We next address them via an information-theoretic analysis.


\begin{definition}[Differential Entropy~\cite{shannon1948mathematical}]\label{def:2.1}
Let $X$ be a continuous random vector in $\mathbb R^d$ with probability density function $f_X(x)$.  Its \emph{differential entropy} is defined as
\[
  h(X)\;=\;- \int_{\mathbb R^d} f_X(x)\,\ln f_X(x)\,\mathrm{d}x,
\]
provided the integral exists.
\end{definition}

\begin{lemma}[Maximum Entropy Principle for Fixed Covariance~\cite{jaynes1957information}]\label{lem:2.2}
Among all continuous distributions on $\mathbb R^d$ with a given covariance matrix $\Sigma$, the Gaussian distribution $\mathcal N(\mu,\Sigma)$ uniquely maximizes the differential entropy.  In particular, for any $X$ with $\mathrm{Cov}(X)=\Sigma$,
\[
  h(X)\;\le\;h\bigl(\mathcal N(\mu,\Sigma)\bigr)
  \;=\;\tfrac12\ln\!\bigl[(2\pi e)^d\det\Sigma\bigr].
\]
\end{lemma}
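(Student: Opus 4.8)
The plan is to establish both halves of the statement — the explicit closed form for the Gaussian's entropy and the extremal inequality — through the non‑negativity of relative entropy (Gibbs' inequality), exploiting the fact that the cross‑entropy term against a Gaussian is insensitive to the underlying distribution as long as its first two moments are pinned.

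First I would compute $h(\mathcal N(\mu,\Sigma))$ directly. Writing the Gaussian density $g$ explicitly and taking logarithms gives $\ln g(x) = -\tfrac{d}{2}\ln(2\pi) - \tfrac12\ln\det\Sigma - \tfrac12 (x-\mu)^\top \Sigma^{-1}(x-\mu)$. Integrating $-\int g\,\ln g$ and using the identity $\mathbb E_g\!\bigl[(X-\mu)^\top \Sigma^{-1}(X-\mu)\bigr] = \operatorname{tr}(\Sigma^{-1}\Sigma) = d$ yields the claimed value $\tfrac12\ln[(2\pi e)^d \det\Sigma]$, matching the right‑hand side of the lemma.

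For the inequality, let $f$ be the density of an arbitrary continuous $X$ with $\operatorname{Cov}(X)=\Sigma$ and mean $\mu$ (differential entropy is translation invariant, so the value of $\mu$ is immaterial). Consider the relative entropy $D(f\|g) = \int f\,\ln(f/g) = -h(X) - \int f\,\ln g$. The decisive step is that $\ln g$ is an affine function of the quadratic form $(x-\mu)^\top \Sigma^{-1}(x-\mu)$ plus constants, so $\int f\,\ln g$ depends on $f$ only through its first and second moments. Since these coincide with those of $g$, the same computation as above gives $\int f\,\ln g = \int g\,\ln g = -h\bigl(\mathcal N(\mu,\Sigma)\bigr)$. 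Substituting back, $D(f\|g) = h\bigl(\mathcal N(\mu,\Sigma)\bigr) - h(X)$, and Gibbs' inequality $D(f\|g)\ge 0$ delivers $h(X)\le h\bigl(\mathcal N(\mu,\Sigma)\bigr)$. Uniqueness then follows from the equality case: $D(f\|g)=0$ if and only if $f=g$ almost everywhere, so the Gaussian is the \emph{unique} maximizer.

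The main point requiring care — more a technical subtlety than a genuine obstacle — is the justification that the integrals are well defined. One must assume $h(X)$ exists, as Definition~\ref{def:2.1} stipulates, and verify that $\int f\,\ln g$ is finite; this holds because that term reduces to a trace of $\Sigma^{-1}\Sigma$ together with constants, all finite once the second moments are finite by hypothesis. A further edge case is a singular $\Sigma$: then $X$ is supported on a lower‑dimensional affine subspace, $h(X)=-\infty$, and the bound holds trivially, so the argument above may be stated for nondegenerate $\Sigma$ without loss of generality.
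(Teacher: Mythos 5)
Your proof is correct and complete: the explicit computation of $h(\mathcal N(\mu,\Sigma))$, the observation that the cross-entropy $\int f\ln g$ depends on $f$ only through its first two moments, and the conclusion via $D(f\|g)=h(\mathcal N(\mu,\Sigma))-h(X)\ge 0$ with equality iff $f=g$ almost everywhere is exactly the canonical argument (as in Cover--Thomas, Thm.~8.6.5). The paper itself offers no proof of this lemma---it is stated as a known result with a citation to Jaynes---so there is nothing to diverge from; your treatment of the edge cases (translation invariance of $\mu$, finiteness of the cross-entropy term under finite second moments, and the degenerate-$\Sigma$ case, where no density on $\mathbb R^d$ exists and the bound is vacuous or trivially $-\infty$) is the right level of care.
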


In the following analysis, we first treat intermediate representations as continuous variables and use Definition~\ref{def:2.1} and Lemma~\ref{lem:2.2} to upper-bound their differential entropy. We then bridge to the finite-precision setting by incorporating a discretization correction term, yielding corresponding bounds for Shannon entropy.

First, we turn to the feature‐level representations. Let $Z_{\mathrm{feat}}\in\mathbb{R}^d$ denote the continuous vector of intermediate features at some layer. By Lemma~\ref{lem:2.2}, among all distributions on \(\mathbb{R}^d\) that share the covariance \(\Sigma_{\mathrm{feat}}\), the Gaussian attains the largest differential entropy. Hence
\begin{equation}
h\bigl(Z_{\mathrm{feat}}\bigr)
\le
h\bigl(\mathcal{N}(\mu_{feat},\Sigma_{\mathrm{feat}})\bigr)
= \frac12\ln\!\bigl[(2\pi e)^d\det\Sigma_{\mathrm{feat}}\bigr].
\label{eq:app1}
\end{equation}

Moreover, if we let $\sigma_{\mathrm{feat}}^2=\frac{1}{d}\,\mathrm{tr}\bigl(\Sigma_{\mathrm{feat}}\bigr)$, then by the determinant–trace inequality~\cite{horn2012matrix}
\(\det\Sigma_{\mathrm{feat}}\le(\sigma_{\mathrm{feat}}^2)^d\) it follows that
\begin{equation}
h\bigl(Z_{\mathrm{feat}}\bigr)
\le
\frac{d}{2}\,\ln\!\Bigl(2\pi e\,\sigma_{\mathrm{feat}}^2\Bigr)
= O\!\bigl(d\ln\sigma_{\mathrm{feat}}^2\bigr).
\label{eq:3}
\end{equation}

Next, non-generative neural networks or language models consistently exhibit a clustering behaviour, whereby intermediate representations collapse into the “decision” that encodes class information.  If we let \(Z_{\mathrm{dec}}\in\mathbb{R}^D\) denote the decision-level information, then its overall entropy satisfies:
\begin{equation}
\begin{aligned}
&h\bigl(Z_{\mathrm{dec}}\bigr)
\le 
\max_{c}h\bigl(Z_{\mathrm{dec}}\mid Y=c\bigr)\;+\;\ln K.
\end{aligned}
\label{eq:dec}
\end{equation}

For the decision‐level representation, let $\Sigma_c \triangleq \mathrm{Cov}(Z_{\mathrm{dec}}\mid Y=c)$. By Lemma~\ref{lem:2.2}, and applying the orthogonal spectral decomposition of \(\Sigma_c\) together with the Arithmetic Mean–Geometric Mean Inequality (AM-GM) inequality~\cite{hardy1952inequalities}, we obtain
\begin{equation}
h\bigl(Z_{\mathrm{dec}}\mid Y=c\bigr)
\le \frac{1}{2}\ln\!\bigl[(2\pi e)^D\det\Sigma_c\bigr]
\le \frac{D}{2}\ln\!\Bigl(2\pi e\,\frac{R_c^2}{D}\Bigr)    
\label{eq:4}
\end{equation}
where $R_c^2$ denotes the intra-class mean‐squared radius, i.e., $R_c^2 = \mathrm{tr}\bigl(\Sigma_c\bigr)
= \frac{1}{N_c}\sum_{i:y_i=c}\|z_i - \mu_c\|^2$. Refer to Appendix~\ref{appA} for the above calculation details.

Define the entropy upper-bound surrogates
\begin{equation}
\begin{aligned}
\bar h_{\mathrm{feat}}
&\triangleq
\frac{d}{2}\ln\!\left(2\pi e\,\sigma_{\mathrm{feat}}^{2}\right),\\
\bar h_{\mathrm{dec}}
&\triangleq
\max_{c}
\frac{D}{2}
\ln\!\left(2\pi e\,\frac{R_c^{2}}{D}\right)
+\ln K .
\end{aligned}
\end{equation}

Combining Eqs.~(\ref{eq:3}) and (\ref{eq:4}) yields the surrogate gap
\begin{equation}
\begin{aligned}
&\Delta \bar h
\triangleq \bar h_{\mathrm{feat}}-\bar h_{\mathrm{dec}} \\&
=
\frac{d}{2}\ln\sigma_{\mathrm{feat}}^{2}
-\max_{c}\frac{D}{2}\ln\!\Bigl(\tfrac{R_{c}^{2}}{D}\Bigr)
+\frac{d-D}{2}\ln(2\pi e)
-\ln K .
\end{aligned}
\end{equation}

Thus, the gap depends on
\(d,\; D,\;\sigma_{\mathrm{feat}}^2\), and \(R_c^{2}\).
Typically \(d\!\ge\!D\) (often only 1–4 × larger), and
\(\sigma_{\mathrm{feat}}^{2}\) contains both intra- and inter-class variance, whereas \(R_c^{2}\) reflects only the intra-class
spread of class \(c\). Hence, the first two logarithmic terms with the constant \(\tfrac{d-D}{2}\ln(2\pi e)\) yield a sizeable positive gap, and the \(\ln K\) correction is negligible.

Although the bounds are derived for differential entropy, Appendix~\ref{appB} provides a continuous--discrete bridge that relates them to the Shannon entropy of finite-precision activations.




Using the upper bounds in Eqs.~(\ref{eq:3}) and (\ref{eq:4}) together with the information identity in Eq. (\ref{eq:1}), we can derive a lower bound on $H(X\!\mid\!Z)$:
\begin{equation}
\begin{aligned}
H(X \mid Z)
&\ge H(X)-H(Z_\Delta) \\
&\ge
\begin{cases}
\begin{aligned}
&H(X)
-\dfrac{d}{2}
\ln\!\bigl(2\pi e\,\sigma_{\mathrm{feat}}^{2}\bigr) \\
&\quad -\kappa_\Delta(d),
\end{aligned}
& Z=Z_{\mathrm{feat}}, \\[8pt]
\begin{aligned}
&H(X)
-\dfrac{D}{2}
\ln\!\Bigl(2\pi e\,\dfrac{R_c^{2}}{D}\Bigr) \\
&\quad -H(Y)-\kappa_\Delta(D),
\end{aligned}
& Z=Z_{\mathrm{dec}} .
\end{cases}
\end{aligned}
\label{eq:5}
\end{equation}
where $\kappa_\Delta(\cdot)$ is the finite-precision correction induced by the discretizer $Q$ at precision $\Delta$ (defined and derived in Appendix~\ref{appC}, based on Appendix~\ref{appB}).

Since \(H(X)\) remains unchanged along the NN information flow, the lower bound on $H(X\!\mid\!Z)$ is dominated by the subtracted term.  When \(Z\) is a decision‐level representation, this term \(\tfrac{D}{2}\ln\!\bigl(2\pi e\,R_c^{2}/D\bigr)\) is typically smaller than its feature‐level counterpart, so the bound on $H(X\!\mid\!Z)$ is larger, i.e.\ there is more residual
uncertainty about the input. This indicates that \textit{\textbf{the decision‐level representation and its transition zone exhibit enhanced resistance}}, which we define as the \textit{Golden Partition Zone}.


\textbf{Active location. }Although the full lower bound in Eq.~(\ref{eq:5}) is difficult to estimate directly, its key quantity $R_c^2$ can be computed layer by layer from class means and within-class mean squared distances. Monitoring these layer-wise $R_c^2$ values allows active location of the representational transition. Compared with estimating mutual information~\cite{belghazi2018mutual, duan2023privascissors}, this approach is much easier and provides a practical means to locate the GPZ across architectures, offering deployment-oriented guidance rather than mere post-hoc observation. Pseudocode for automatic location and experimental validation are given in Appendix~\ref{appD}.

During the training phase, since the \(D\) is determined by the model architecture and the \(K\), which constrains upper bound of \(H(Y)\), is fixed and generally negligible unless exceedingly large, the primary factor we can manipulate to influence the lower bound is the \(R^2_c\). Accordingly, the next question we ask is: \textbf{\emph{how does \(R_c^{2}\) evolve through training, and can we deliberately control it?}}



\subsection{Neural Vortex}
\label{sec:3.2}
With cross-entropy training and one-hot vectors, the backpropagated
gradient is \(\tilde g_i = J_i^{\!\top}\delta_i\) with
\(\delta_i = p_i - y_i\) (see Table~\ref{tab:notation}).
The backpropagation update \(z_i^{+}=z_i-\gamma\,\tilde g_i\), thus induces the
first-order gradient-induced change in $R^2_c$:
\begin{equation}
\Delta R_c^{2}
= -\frac{2\gamma}{N_c}
  \sum_{i\in c}(z_i-\mu_c)^{\!\top}\tilde g_i
\label{eq:6}
\end{equation}
where the \(1/N_c\) factor appears because \(R_c^{2}\) is defined as an average. Substituting
\(\tilde g_i = J_i^{\!\top}(p_i-y_i)\) gives
\begin{equation}
\Delta R_c^{2}(\text{one-hot})
= -\frac{2\gamma}{N_c}
  \sum_{i\in c}\Bigl[(p_{ic}-1)\,T_{\!\mathrm{corr},i}
                     +\sum_{k\neq c}p_{ik}\,T_{k,i}\Bigr]
\label{eq:7}
\end{equation}
where \(T_{\!\mathrm{corr},i}=(z_i-\mu_c)^{\!\top}J_i^{\!\top}e_c\) and
\(T_{k,i}=(z_i-\mu_c)^{\!\top}J_i^{\!\top}e_k\). (The full derivation is provided in Appendix~\ref{appE}.)

Rewrite the $T$-terms in Eq.~\ref{eq:7} via angles. Let $r_i:=z_i-\mu_c$.
For the correct class, $
T_{\mathrm{corr},i}=r_i^{\!\top}J_i^{\!\top}e_c
=\|r_i\|\,\|J_i^{\!\top}e_c\|\,\cos\theta_{c,i},
$ where $\theta_{c,i}$ is the angle between $r_i$ and $J_i^{\!\top}e_c$.
Similarly, for each off-class $k\neq c$, $
T_{k,i}=r_i^{\!\top}J_i^{\!\top}e_k
=\|r_i\|\,\|J_i^{\!\top}e_k\|\,\cos\theta_{k,i}.
$ Hence, $T_{\mathrm{corr},i}<0$ (resp.\ $>0$) corresponds to an obtuse (resp.\ acute) alignment between
the residual direction and the $c$-logit gradient direction; the same holds for $T_{k,i}$.

With one-hot labels, $(p_{ic}-1)\le 0$.
Therefore, in the contraction regime where the representation is pulled toward the class center,
the dominant, probability-weighted contribution typically comes from samples with $T_{\mathrm{corr},i}<0$
(i.e., $\theta_{c,i}>90^\circ$), making $(p_{ic}-1)T_{\mathrm{corr},i}\ge 0$ and thus pushing
the bracketed term in Eq.~\ref{eq:7} to be positive on average.
In contrast, the off-class sum $\sum_{k\neq c}p_{ik}T_{k,i}$ is sign-mixed, but it is multiplicatively
scaled by $p_{ik}$ and quickly diminishes as training proceeds ($p_{ik}\to 0$),
so it cannot dominate except possibly at very early epochs.
Consequently, $\Delta R_c^2(\text{one-hot})$ is typically negative during the mid-to-late stage,
indicating continued contraction of $R_c^2$, and it gradually vanishes as $p_{ic}\to 1$ and $p_{ik}\to 0$. See Appendix~\ref{E.4} for the angle interpretation of the $T$-terms.

\begin{figure}[t] 
    \centering
    \includegraphics[width=1.0\columnwidth]{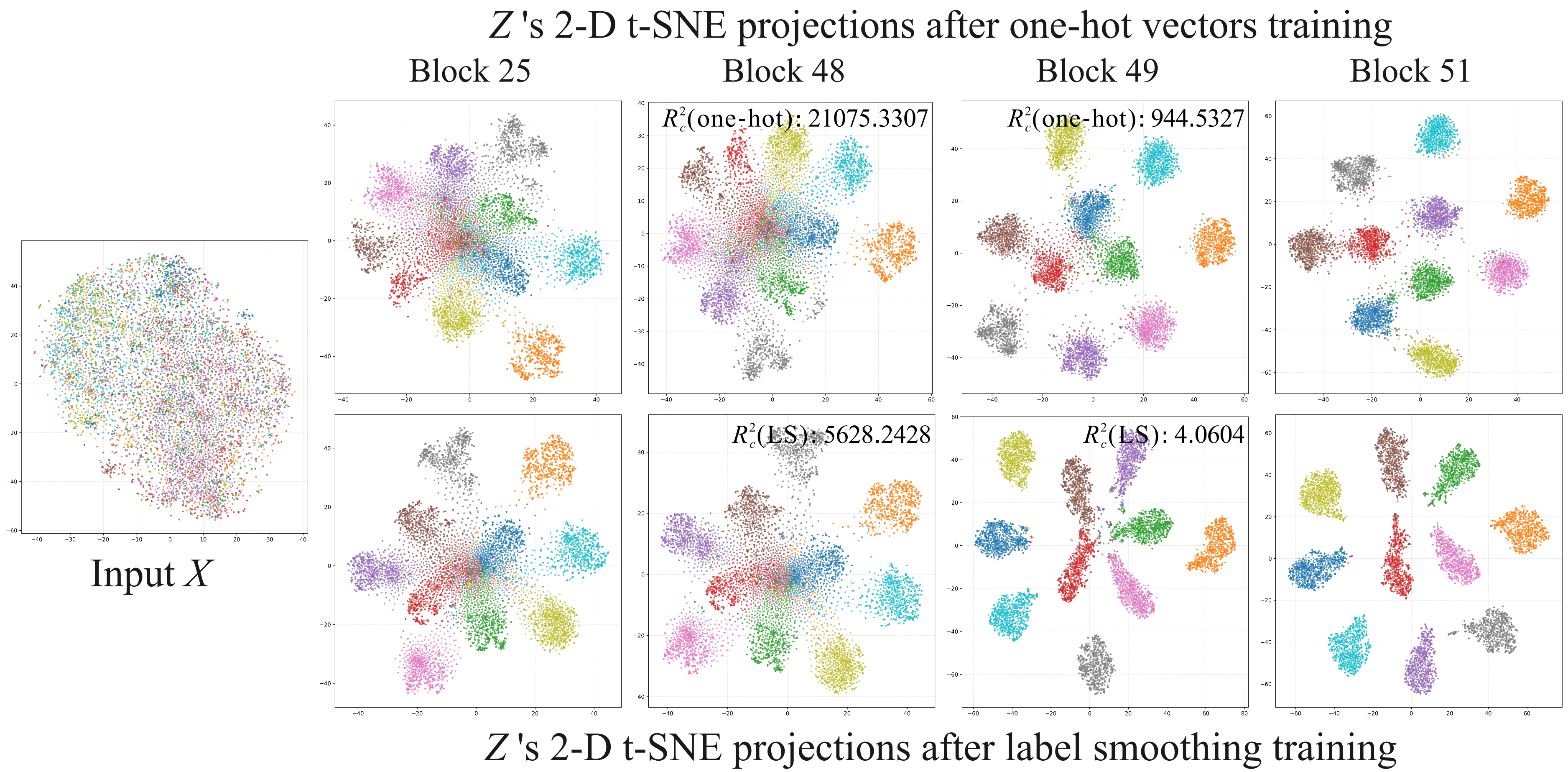}
    \caption{Block-wise 2-D t-SNE of IR-152 information $Z$.}
    \label{fig:Rreflect}
    \vspace{-0.3em}
\end{figure}

Under label smoothing (LS) with parameter $\alpha$, the residual becomes
$\tilde\delta_i=p_i-y_i'$ where $y_c'=1-\alpha$ and $y_k'=\alpha/(K-1)$.
Thus Eq.~\ref{eq:7} carries over by replacing the one-hot coefficients with
$(p_{ic}-1+\alpha)$ and $(p_{ik}-\alpha/(K-1))$, yielding
\begin{align}
&\Delta R_c^{2}(\text{LS})  \notag\\[-0.2em]
&= -\frac{2\gamma}{N_c}
  \sum_{i\in c}\Bigl[(p_{ic}-1+\alpha)\,T_{\!\mathrm{corr},i}
  + \sum_{k\neq c}\bigl(p_{ik}-\tfrac{\alpha}{K-1}\bigr)\,T_{k,i}\Bigr] \label{eq:8}
\end{align}


Once $p_{ic}>1-\alpha$, $(p_{ic}-1+\alpha)$ flips sign and, in practice, the alignment of $T_{\!\mathrm{corr},i}$ reverses, keeping $\Delta R_c^2<0$ (see Appendix~\ref{E.4}). Thus, a key property of LS is that it discourages over-confidence: when $p_{ic}$ sharpens beyond the softened target, sustaining a geometric drive that vanishes only as $p_i\!\to\! y'_i$. See Appendix~\ref{appF} for an explicit lower bound on the gradient norm in this over-confident regime. This correction tends to further reduce $R_c^2$ at the decision-representation layers, which by Eq.~\eqref{eq:5} increases the lower bound on $H(X\!\mid\!Z)$ and strengthens resistance to MIA.

In summary, with $T$-term signs locally consistent, $(p_i-y_i)$ is the key lever that reweights the $T$-terms and thereby modulates the direction and magnitude of $\Delta R_c^{2}$. \textbf{\textit{Accordingly, the target label distribution sets this lever, shaping the evolution of $R_c^{2}$ and the gradient-norm lower bound.}}

Different from information-bottleneck methods that explicitly penalize information~\cite{wang2021improving} and from neural collapse~\cite{papyan2020prevalence} that describes end-stage geometry a posteriori, we take a training-dynamics perspective and show that reshaping the label distribution, thereby increasing the output entropy, can continuously drive the contraction of $R_c^2$, raise the lower bound of $H(X\!\mid\!Z)$, and consequently suppress $I(X ; Z)$ and $H(Z)$. We refer to this counterintuitive phenomenon, higher output entropy coupled with lower intermediate entropy, as the \textit{Neural Vortex}, which is intended as an analysis-backed explanatory concept for the training dynamics discussed above rather than a standalone formal mechanism; an intuitive interpretation is provided in Appendix~\ref{appG}.

Figure~\ref{fig:Rreflect} shows 2-D t-SNE~\cite{van2008visualizing} maps of IR-152~\cite{he2016deep} $z$ from different blocks after one-hot (top) and label-smoothed (bottom) training.  Although t-SNE cannot quantify and reflect entropy or the class radius, it can illustrate that LS tightens class clusters earlier, and the decision layer is much more compact, hinting at a smaller intra-class radius. We directly computed a batch of $z$'s \(R_{c}^{2}\) at each block and found that the LS model yields a noticeably lower value around the GPZ.



\section{Experiments}
\label{sec:4}
This section describes the experimental setup and analyzes four aspects: representation effects on MIA, intermediate entropy enhancement, inversion-model enhancement, and data-type impacts.

\subsection{Experiment Setup}
\label{sec:4.1}
\textit{1) Datasets.} We use seven image datasets, all with a resolution of \(64\times64\). The data allocation is detailed in Table~\ref{tab:dataset_model_dims}, and dataset descriptions are provided in the Appendix~\ref{H.1}.

\textit{2) Target Models.} To examine representational transitions across network depths, we selected four representative vision models: IR-152, ResNet-50~\cite{he2016deep}, a residual network that mitigates gradient vanishing via skip connections; VGG19~\cite{simonyan2014very}, a classical sequential convolutional architecture without residuals; and ViT~\cite{dosovitskiy2020image}, here using a six-layer configuration (patch size 4, embedding 128, 10 heads) with self-attention to capture global context. Model accuracies on different datasets and the corresponding information dimensions across depths are shown in Table~\ref{tab:dataset_model_dims}.


\begin{table}[t]
  \centering
  \small
  \setlength{\tabcolsep}{4pt}
  \renewcommand{\arraystretch}{1.1}
  \caption{Dataset, model and intermediate‐layer dimension. $^+$ indicates positive factor training. Test accuracy is the maximum achieved after the model fully fits the training set.}
  \vspace{-0.5em}
  \label{tab:dataset_model_dims}
  \resizebox{1.0\columnwidth}{!}{%
    \begin{tabular}{@{}lll@{}}
      \toprule
      \multicolumn{3}{l}{\textbf{Data splits and auxiliary data}} \\
      \midrule
      Task                       & Train / Test     & Auxiliary Data                                 \\
      \midrule
      CIFAR-10 (10 classes)      & 66.6\% / 16.7\%  & 16.7\% of CIFAR-10                             \\
      FaceScrub (530 classes)    & 80\% / 20\%      & CelebA (non-individual overlapping)            \\
      KMNIST (49 classes)        & 70\% / 15\%      & 15\% of KMNIST, MNIST, EMNIST                  \\
      \midrule
      \multicolumn{3}{l}{\textbf{Model allocation and accuracy}} \\
      \midrule
      Task       & Model            & Accuracy (LS$^+$’s $\alpha$ = 0.3, LS$^-$’s $\alpha$ = -0.05)                                \\
      \midrule
      CIFAR-10   & IR-152           & One-hot: 90.59\%, LS$^+$: 90.76\%, LS$^-$: 90.49\%              \\
                 & VGG19            & One-hot: 91.50\%, LS$^+$: 91.85\%               \\
                 & ViT (6-depth)    & One-hot: 78.65\%, LS$^+$: 79.26\%               \\
      FaceScrub  & VGG19            & LS$^+$: 90.14\%                                 \\
      KMNIST     & VGG19            & One-hot: 97.01\%                            \\
      \midrule
      \multicolumn{3}{l}{\textbf{Dimensions of information at different model depths}} \\
      \midrule
      Model      & Block (Depth)    & Dimensions                                  \\
      \midrule
      IR-152     & 3; 8; 25, 40, 48; 49, 50, 51 
                  & (64,32,32); (128,16,16); (256,8,8); (512,4,4) \\
      VGG19      & 13; 17, 26; 30, 39; 43, 52
                  & (128,32,32); (256,16,16); (512,8,8); (512,4,4) \\
      ViT        & 1–6              & (257,128)                                   \\
      \bottomrule
    \end{tabular}%
  }
  \vspace{-1em}
\end{table}


\textit{3) Inversion model and Implementation details.}
We adopt an inversion architecture based on~\cite{yang2019neural, zhang2023analysis} and train both target and inversion models with standard optimization settings. To avoid relying on a single fixed attack, we further strengthen the inversion evaluation from two directions as analysis-guided stress tests: on the representation side, we enrich the transmitted information through FFT-based, normalization-based, and NN-based augmentation strategies, with details deferred to Appendix~\ref{H.4}; on the model side, we progressively enhance the inversion architecture through a series of attention mechanisms, as well as an independent reverse-structured residual variant, with details provided in Appendix~\ref{H.3} and Sec.~\ref{sec:4.2} (Part~3). All training hyperparameters and hardware details are deferred to Appendix~\ref{H.3} \&~\ref{H.4}.


\textit{4) Evaluation Metrics.} For evaluation, we use \textbf{mean squared error (MSE)} and \textbf{peak signal-to-noise ratio (PSNR)} to measure pixel-level reconstruction quality, \textbf{structural similarity index measure (SSIM)} to assess structural fidelity, and \textbf{learned perceptual image patch similarity (LPIPS)} to evaluate perceptual similarity. We also perform visual comparisons for qualitative analysis. Lower MSE and LPIPS, together with higher PSNR and SSIM, indicate better reconstruction and thus weaker inversion resistance. Compared with prior works that mainly rely on pixel-wise metrics, we include SSIM and especially LPIPS as complementary structural and perceptual metrics. In particular, LPIPS has been shown to correlate better with human perceptual judgments than traditional pixel-wise metrics~\cite{zhang2018unreasonable}; in our implementation, LPIPS is computed using the default LPIPS setting with an AlexNet backbone~\cite{krizhevsky2012imagenet}. Based on our results, an \textbf{MSE below 0.02} serves as an empirical threshold for high reconstruction quality, as images at this error level typically exhibit high visual fidelity. The reported results are obtained from attacks on models trained both with and without LS.

\subsection{Experimental Results}
\label{sec:4.2}
\begin{figure}[t] 
    \centering
    \includegraphics[width=1.00\columnwidth]{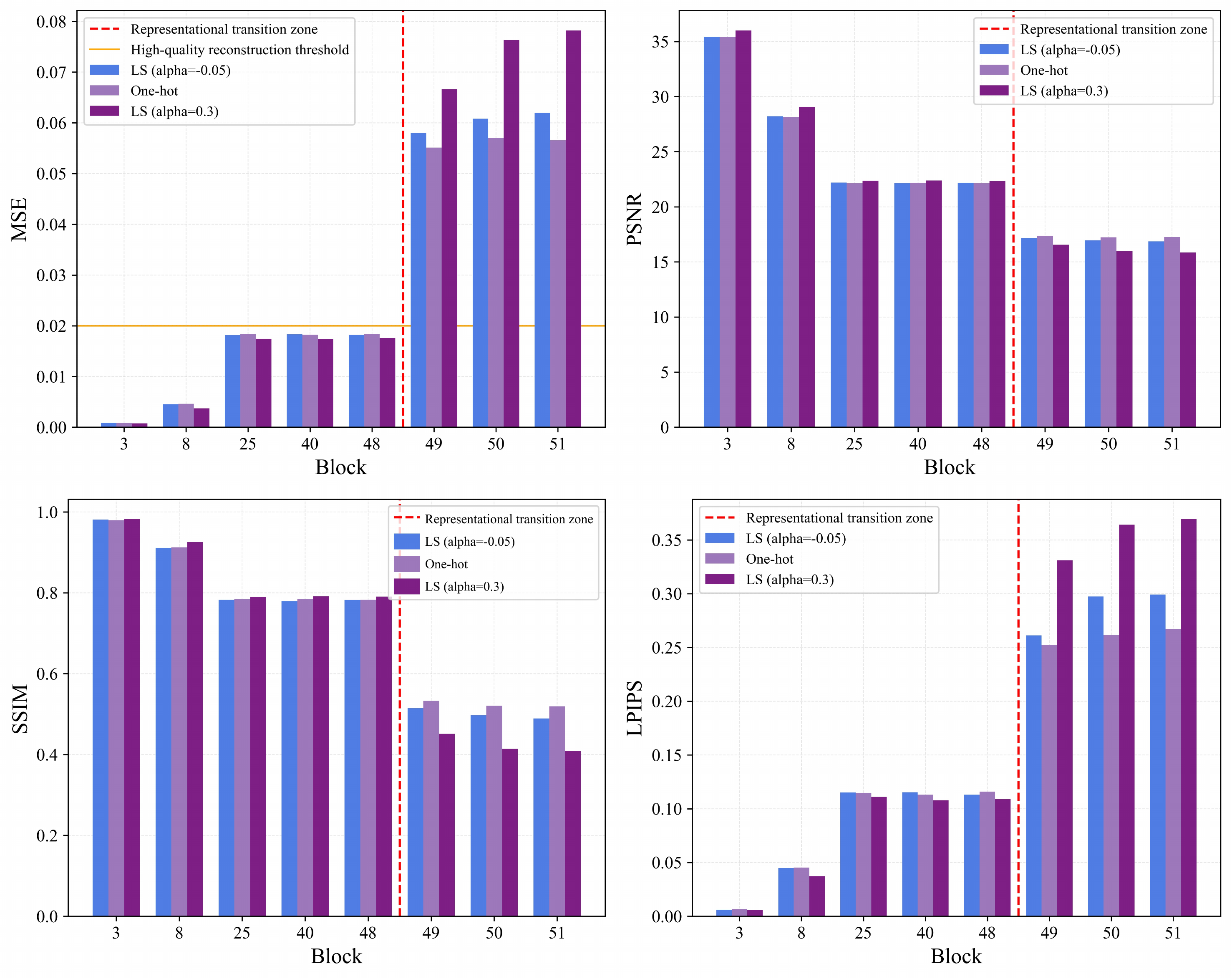}
    \caption{MIA performance results of different IR-152 Blocks.}
    \label{fig:ir152_mse_psnr}

\end{figure}

\textit{1) The effect of information representation on MIA.} As shown in Fig.~\ref{fig:ir152_mse_psnr}, for IR-152, which employs residual skip connections, the forward accumulation of intermediate information slows down the transition of representation. As a result, even up to the 48th block, the model maintains an MSE below 0.02 (an empirical threshold indicating that high-fidelity reconstructions remain possible). However, once reaching Block~49, where the spatial resolution is reduced to \(4 \times 4\), the representation abruptly shifts to decision-level, leading to a sharp rise in MSE. This phenomenon is common in residual NNs, where even depths as large as 48 blocks fail to resist MIA effectively, highlighting the necessity of the representational transition. We further observe that dimensionality affects inversion difficulty through the roles of $d$ and $D$ in Eq.~(\ref{eq:5}); however, once the representation undergoes the transition toward the decision-level zone, the sharp change in $R_c^2$ becomes more prominent, making the effect of dimensionality relatively less pronounced than in the feature-representation zone. See Appendix~\ref{H.5} for a targeted analysis of this effect.

\begin{figure}[t] 
    \centering
    \includegraphics[width=1.00\columnwidth]{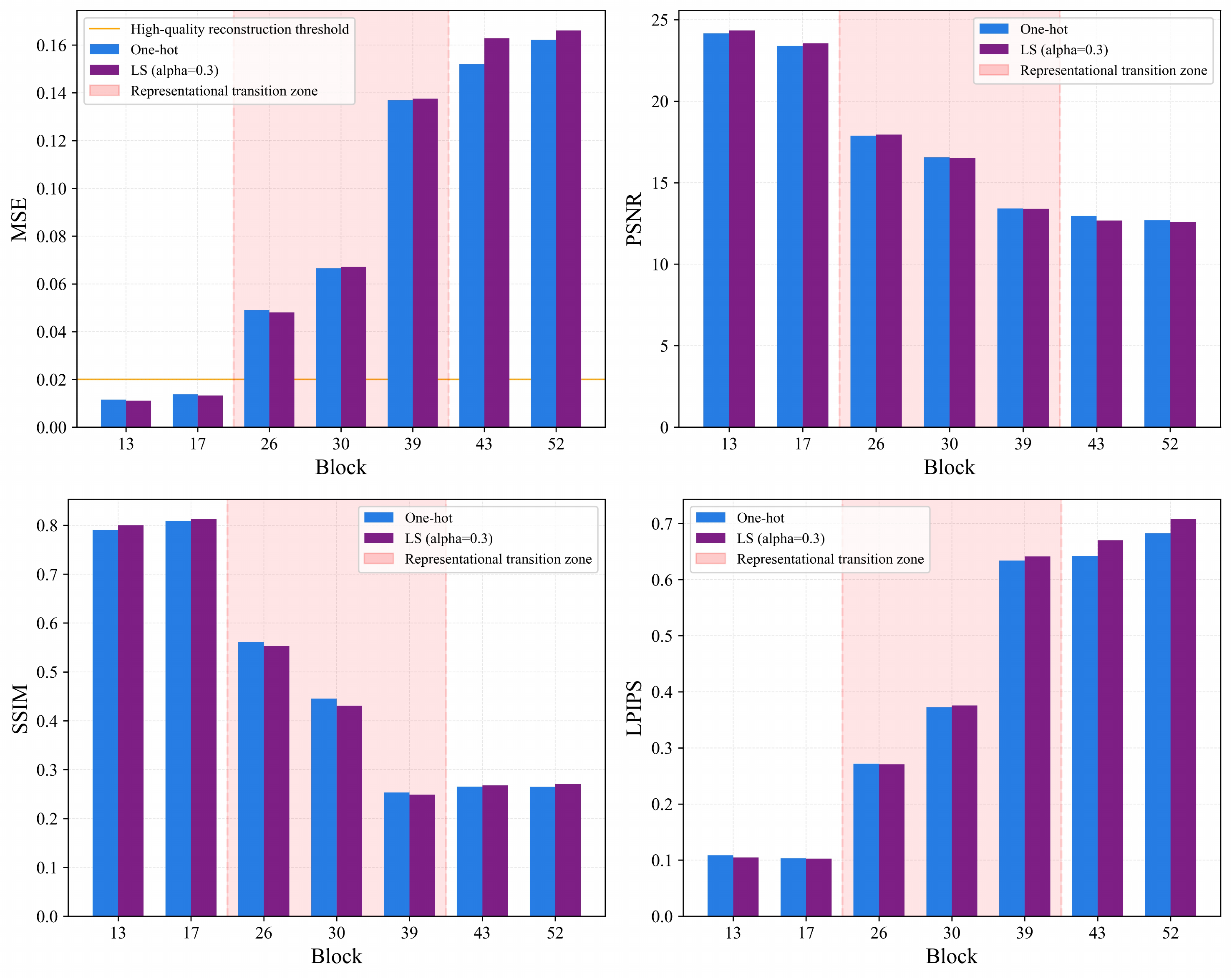}
    \caption{MIA performance results of different VGG19 Depths.}
    \label{fig:vgg_mse_psnr}
    \vspace{-0.15em}
\end{figure}

It is important to note that the red zones in the figure are based on the rationale established in Section \ref{sec:3}. Specifically, before entering the decision zone, LS lacks sufficient strength to contract $R^2_c$. Therefore, we designate the transition zone as the region just before clear MSE divergence emerges between one-hot and LS training. 

As shown in Fig.~\ref{fig:vgg_mse_psnr}, VGG19, lacking residual connections, exhibits a smoother transition in information representation. At Block~39, the MSE rises sharply from 0.066493 at Block~30 to 0.136896, even though the spatial resolution remains at \(8 \times 8\), unlike IR-152. While prior studies often assume that decision-level information emerges in the final linear layers, our results suggest that, in models like VGG, this transition occurs much earlier, in the intermediate blocks. Compared to IR-152, VGG is therefore more suitable for lightweight edge-side partitioning in collaborative inference. As further visualized in Fig.~\ref{fig:vision}, reconstructions before Block~26 remain visually accurate, with MSE consistently below 0.02. Once the model enters the representational transition zone, reconstructions rapidly degrade, and when the information becomes purely decision-level, they lose semantic interpretability altogether.

\begin{figure}[t] 
    \centering
    \includegraphics[width=1.00\columnwidth]{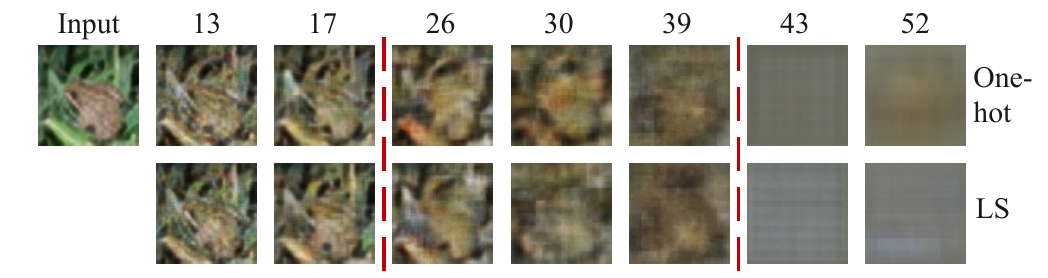}
    \caption{Visual performance of different VGG19 Depths. Dotted lines distinguish the representational transition zone.}
    \label{fig:vision}
\end{figure}

\begin{figure}[t] 
    \centering
    \includegraphics[width=1.00\columnwidth]{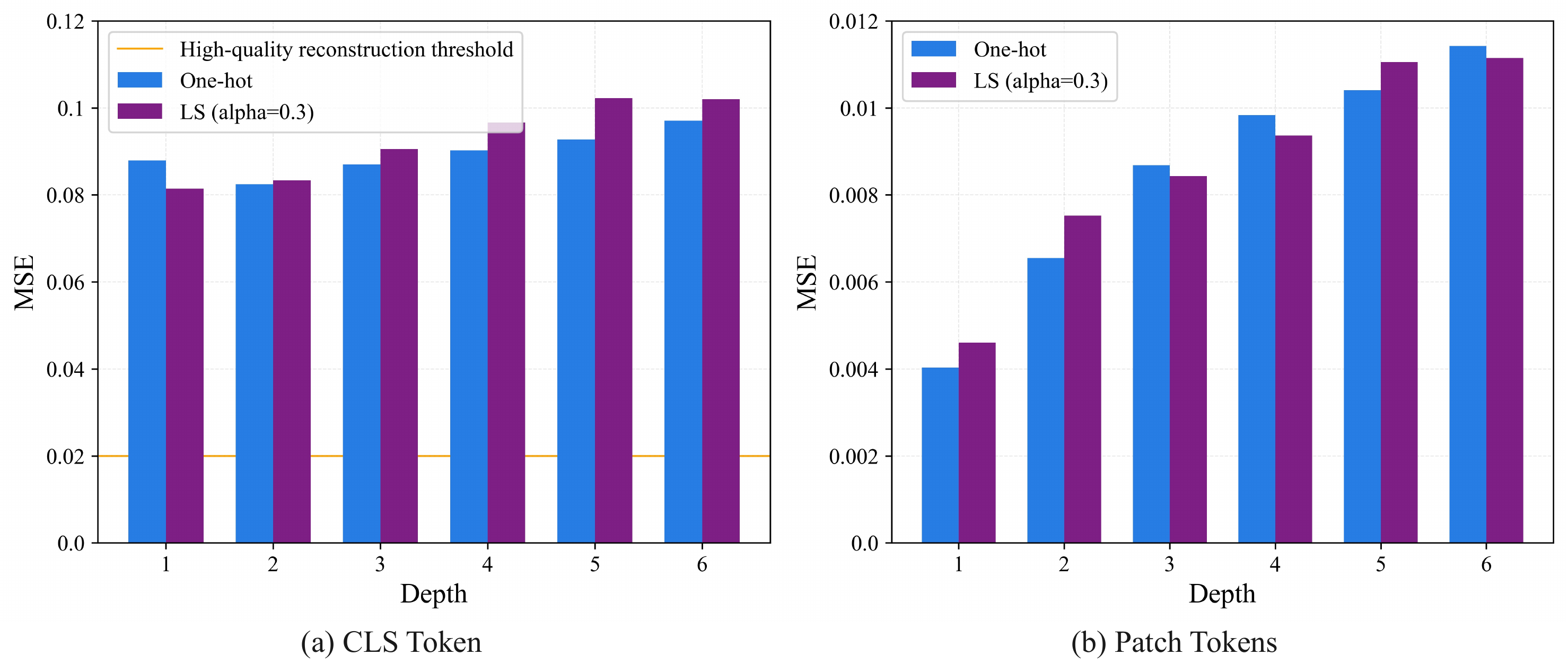}
    \caption{MIA performance results of different ViT Depths.}
    \label{fig:vit_mse_psnr}
    \vspace{-0.1em}
\end{figure}

Finally, as shown in Fig.~\ref{fig:vit_mse_psnr}, when attacking using only the CLS token, the behaviour under LS aligns with our theoretical analysis. However, when using patch tokens, the reconstruction quality exhibits fluctuations. This is attributed to the alternating influence of self-attention: one layer may pull representations toward the CLS token (favouring decision-oriented), while the next may refocus on patch-level features, resulting in the observed oscillation.

Notably, ViT’s output structure, a fixed number of 256 patch tokens, remains consistent across depths. This architectural design prevents any representational transition; as analyzed in Section~\ref{sec:3.1}, when the output simultaneously carries both decision and feature information, no GPZ emerges, and thus ViT fails to provide effective resistance against MIA regardless of how deep the partition point is set.

\begin{table}[t]
  \centering
  \small
  \caption{Attack results on Block-50 (decision-level) and Block-40 (feature-level) of IR-152 using different feature enhancement methods. Methods marked with * use the same technique as above.}
  \resizebox{1.0\columnwidth}{!}{%
    \begin{tabular}{lcccc}
      \toprule
      Method & \multicolumn{2}{c}{Auxiliary Data Set} & \multicolumn{2}{c}{Test Data Set} \\
      \cmidrule(lr){2-3} \cmidrule(lr){4-5}
       & MSE & PSNR & MSE & PSNR \\
      \midrule
      Block-50-baseline 
        & 0.021366 & 21.500321 
        & 0.057010 & 17.224045 \\
      \midrule
      \multicolumn{5}{l}{\bfseries Augment by Analysis} \\
      FFT (Residual) 
        & 0.089661 & 15.271254 
        & 0.112854 & 14.260188 \\
      FFT (Concat) 
        & \textbf{0.017355} & \textbf{22.411592} 
        & 0.068011 & 16.458754 \\
      Normalize (Input) 
        & 0.022920 & 21.195560 
        & 0.056266 & 17.283856 \\
      Normalize (Residual) 
        & 0.022929 & 21.192309 
        & 0.056317 & 17.279251 \\
      Normalize (Concat) 
        & 0.023193 & 21.145331 
        & \textbf{0.055434} & \textbf{17.344250} \\
      \midrule
      \multicolumn{5}{l}{\bfseries Augment by NN} \\
      w/o-Dropout (Residual) 
        & \textbf{0.014784} & \textbf{23.099889} 
        & 0.061043 & 16.929223 \\
      with-Dropout (Residual) 
        & 0.016584 & 22.591667 
        & 0.055617 & 17.334325 \\
      w/o-Dropout (Concat) 
        & 0.018930 & 22.038503 
        & 0.057163 & 17.213750 \\
      with-Dropout (Concat) 
        & 0.016095 & 22.722609 
        & \textbf{0.051740} & \textbf{17.646756} \\
      \midrule
      \multicolumn{5}{l}{\bfseries Augment by Analysis \& NN (Concat)} \\
      Normalize (Input) \& w/o-Dropout 
        & \textbf{0.015285} & \textbf{22.958731} 
        & 0.055669 & 17.326721 \\
      Normalize (Input) \& with-Dropout* 
        & 0.015489 & 22.891474 
        & \textbf{0.051937} & \textbf{17.627761} \\
      \midrule
      Block-40-baseline 
        & 0.011796 & 24.080942 
        & 0.018245 & 22.171516 \\
      Block-40-Augment* 
        & \textbf{0.004460} & \textbf{28.299952} 
        & \textbf{0.014026} & \textbf{23.318254} \\
      \bottomrule
    \end{tabular}%
  }
  \label{tab:ir152_aug}
  \vspace{-0.1em}
\end{table}

\textit{2) The effect of intermediate information entropy enhancement.} Prior work~\cite{2025arXiv250100824L} has highlighted the critical role of entropy \(H(z)\) in determining MIA susceptibility. To further assess the resistance of the decision-representational zone, we design several enhancement strategies, summarized in Table~\ref{tab:ir152_aug}, including an FFT-based method (Fast Fourier Transform\cite{cooley1965algorithm}); global normalization via mean and variance, and NN-based enhancement. $z$ is either combined residually with, or concatenated with, its enhanced counterpart. See more details in Appendix~\ref{H.4}. Our experiments show that improper enhancement can cause the inversion model to overfit, i.e., achieve strong reconstruction on the auxiliary set while lacking generalization. NN-based methods without dropout are particularly prone to this. In contrast, combining global normalization with dropout-regularized NN modules effectively mitigates overfitting and improves attack quality. For example, PSNR gains on IR-152 reach only +0.4 at Block~50 but rise to around +1.13 at Block~40, indicating that enhancement strategies are more effective in the feature representation zone, while the decision-level zone remains robust against MIA.

\begin{table}[t]
  \centering
  \small
    \caption{Attack results on Block-50 and Block-40 of IR-152 using different inversion model enhancement mechanisms.}
  \resizebox{1.0\columnwidth}{!}{%
    \begin{tabular}{@{}>{\raggedright\arraybackslash}m{4cm} cccc@{}}
      \toprule
      Method & \multicolumn{2}{c}{Auxiliary Data Set} & \multicolumn{2}{c}{Test Data Set} \\
      \cmidrule(lr){2-3} \cmidrule(lr){4-5}
       & MSE & PSNR & MSE & PSNR \\
      \midrule
      Block-50-baseline 
        & 0.021366 & 21.500321 
        & 0.057010 & 17.224045 \\
      \midrule
      MHA 
        & 0.018156 & 22.200025 
        & 0.052159 & 17.610650 \\
      AttentionAsConv 
        & 0.020217 & 21.736728 
        & 0.053299 & 17.515945 \\
      AttentionAsConv+SE 
        & 0.017121 & 22.458065 
        & 0.051737 & 17.647012 \\
      \makecell[l]{AttentionAsConv+SE \\ +LSK+MSCA} 
        & 0.019283 & 21.942643 
        & 0.050579 & 17.743654 \\
      \makecell[l]{Attention\&Augment by \\ NN with Dropout (Concat)*}
        & 0.013986 & 23.333684 
        & \textbf{0.049272} & \textbf{17.861294} \\
      Inversion-IR152 
        & \textbf{0.013218} & \textbf{23.589063} 
        & 0.051795 & 17.643729 \\
      \midrule
      Block-40-baseline 
        & 0.011796 & 24.080942 
        & 0.018245 & 22.171516 \\
      Block-40-Augment*
        & \textbf{0.003258} & \textbf{29.662543} 
        & \textbf{0.011905} & \textbf{24.036657} \\
      \bottomrule
    \end{tabular}%
  }

  \label{tab:ir152_attack}
  \vspace{-0.5em}
\end{table}

\textit{3) The effect of inversion model enhancement.} The design of the inversion model plays a critical role in capturing the true conditional entropy $H(X\!\mid\!Z)$. Based on existing architecture, we introduce attention mechanisms between deconvolutional blocks to improve representational capacity and mitigate overfitting. We first experimented with multi-head attention (MHA)~\cite{vaswani2017attention}, which led to performance gains but incurred nearly 10× training cost. To balance efficiency and performance, we adopted the lightweight Attention-as-Conv. To further enhance the model, we incrementally integrated this into shallow blocks, augmented with Squeeze-and-Excitation (SE) modules~\cite{hu2018squeeze}. In deeper blocks, we introduced large kernel attention (LSK)~\cite{peng2017large} and multiscale channel attention (MSCA)~\cite{liu2021swin} for stronger decoupling. Importantly, attention design in the inversion model should follow a weak-to-strong decoupling principle. Light attention in shallow layers helps avoid overfitting, while deeper decoupled outputs improve the expressiveness of $H(X\!\mid\!Z)$. Our analysis confirms this intuition. Combined with $H(z)$ enhancement, the improved model raises PSNR by +0.64 at Block~50 of IR-152. However, compared to the +1.9 gain in the feature representation zone, this still reflects the inherently stronger MIA resistance of decision-level representations. Additionally, we explore a reverse-structured variant of the IR-152 residual block in the inversion model, which shows stronger expressiveness than basic deconvolutional stacks, though it still fails to achieve effective inversion in the decision zone.

\begin{figure}[t] 
    \centering
    \includegraphics[width=1.00\columnwidth]{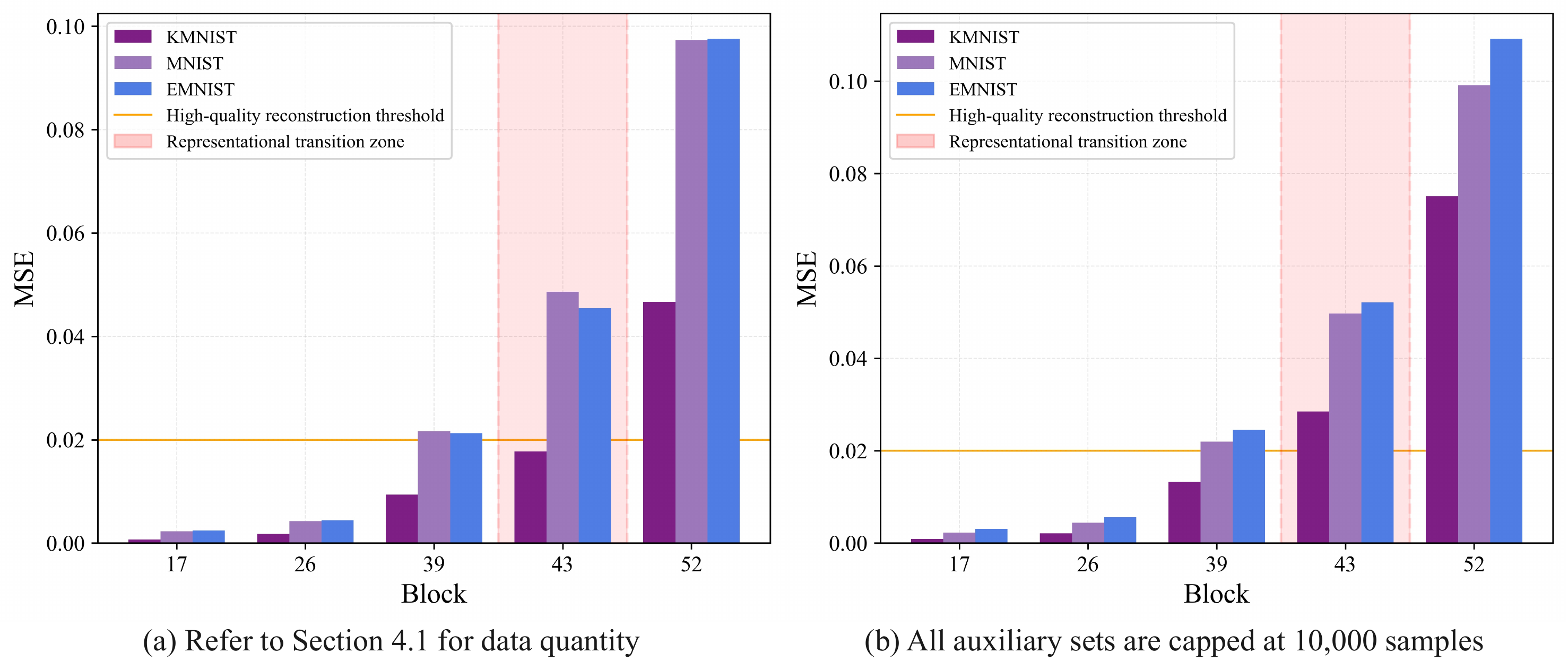}
    \caption{Reconstruction performance of KMNIST data under different auxiliary data sets of VGG19.}
    \label{fig:vgg_kmnist}
\end{figure}

\begin{figure}[t] 
    \centering
    \includegraphics[width=0.95\columnwidth]{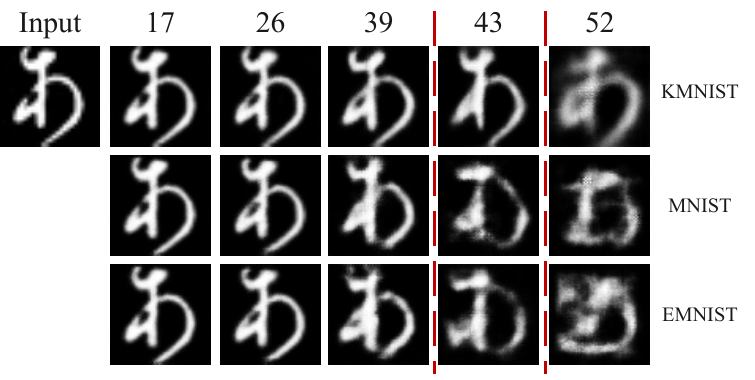}
    \caption{Visual reconstruction performance of KMNIST data under different auxiliary data sets of VGG19.}
    \label{fig:vgg_kmnist_visual}
    \vspace{-0.2em}
\end{figure}

\textit{4) The role of data types.} We conduct experiments using three greyscale datasets, targeting a handwritten Japanese character recognizer and using different auxiliary data types: KMNIST, MNIST and EMNIST. As shown in Fig.~\ref{fig:vgg_kmnist}, due to the simplicity of the input data, the representational transition zone is delayed. The presence of many zero-valued pixels in KMNIST causes feature extraction to persist deeper into the network, with the GPZ occurring around Block~43. When training the inversion model with different auxiliary datasets, we observe that data from the same distribution yield significantly better reconstruction than similar but out-of-distribution datasets. Among the latter, performance improves with greater distributional proximity to the target. We also find that the larger sample size of EMNIST improves attack quality, but when sample sizes are equalized, MNIST performs better, suggesting it is closer in distribution to KMNIST. As visualized in Fig.~\ref{fig:vgg_kmnist_visual}, reconstructions before the transition zone are accurate across all auxiliary datasets. However, after entering the GPZ, the output drifts toward the auxiliary data distribution: MNIST-trained models tend to produce “0”-like digits, while EMNIST-trained models resemble characters like “D”. See Appendix~\ref{H.7} for more analysis.


\section{Discussion}
\label{sec:5}


\paragraph{\textit{1) Practical deployment guidance.}}
For practical privacy-preserving CI, partition selection should balance intrinsic MIA resistance against deployment costs, including edge latency, communication and memory overheads, and energy. Based on Sec.~\ref{sec:4.2}, Appendices~\ref{appH}--\ref{appI}, VGG-like backbones provide the most favorable trade-off among the evaluated architectures: they reach a GPZ with only 2.5\% edge-side parameters, while ResNet-style models delay the transition and would require over 78\% of parameters on the edge. More importantly, Appendix~\ref{appI} shows that within VGG, moving from depth-26 to depth-30 nearly preserves edge latency while halving the transmitted payload, offering a practical split knob under bandwidth constraints; by contrast, although the deep IR152 split further reduces transmission size, it incurs much larger edge compute, model footprint, and energy cost. Taken together, these results provide lightweight deployment guidance for selecting architectures and partition points with stronger intrinsic resistance to MIA under real resource constraints.

\paragraph{\textit{2) Architecture specificity.}}
Based on our results and analysis, the emergence of a GPZ is closely related to architectural design. Common edge-model designs such as residual connections and Transformer architectures with persistent patch representations tend to retain feature information across depth and thereby attenuate or delay the representational transition, which in turn benefits MIA. We therefore reiterate that increasing depth alone does not necessarily make MIA harder~\cite{ding2024patrol}.

\paragraph{\textit{3) Data dependence.}}
Based on our results and analysis, the emergence and location of the GPZ are also data-dependent. Across the datasets studied in this work, different data characteristics lead to different transition patterns: CIFAR-10 shows a more gradual transition and a wider GPZ (Fig. \ref{fig:vision}), FaceScrub exhibits an earlier and narrower transition (Fig. \ref{fig:vgg_face_mse}), while MNIST and KMNIST shift the GPZ deeper and make it narrower because feature extraction persists across more layers (Fig. \ref{fig:vgg_kmnist_visual}). This observation is consistent with Eq.~(\ref{eq:5}), where inversion difficulty is jointly affected by the data entropy term and $R_c^2$. Under more adverse user distributions, a smaller $H(X)$ and a larger $R_c^2$ at the candidate split layer would weaken the lower bound on $H(X\!\mid\!Z)$, delay the onset of intrinsic resistance, and potentially require deeper splitting before a GPZ appears. At the same time, such worst-case-like distributions are less representative of practical CI deployment, which is typically motivated by more complex visual data rather than low-variance, sparsely structured inputs such as MNIST. In this sense, realistic CI data are more likely to resemble complex natural-image settings such as FaceScrub than the more adverse low-variance cases discussed above. A fuller analysis of such worst-case-like distributions is left for future work.

\paragraph{\textit{4) Scope and possible extension.}}
The current study focuses on images and deep vision models, so we are cautious about making strong claims regarding the text domain without direct empirical evidence. Nevertheless, our results on ViT suggest a cautious hypothesis for Transformer-style models: when a fixed set of patch tokens persists across depth and continues to preserve sample-level information, the representational transition may be attenuated, making a clear GPZ harder to form. In this sense, whether a GPZ can emerge may depend less on depth itself than on whether deeper hidden states continue to retain token-level information. Verifying this hypothesis in text Transformers and other sequence models is a direction for future work.


\section{Conclusion}
In this paper, we revisit model inversion threats in CI and show that depth alone is not a reliable driver of resistance; it is sufficient only insofar as it enables a representational transition. At this transition, the lower bound of $H(X\!\mid\!Z)$ rises abruptly, and the governing variance term shifts from global total variance to the intra-class mean-squared radius $R_c^2$, motivating an $R_c^2$-based rule to locate (or post hoc identify) the \textit{Golden Partition Zone}. We further show that $R_c^2$ follows training dynamics that can be controlled via the label distribution, which we refer to as the \textit{Neural Vortex}, an explanatory concept grounded in our analysis. Our experiments on representative vision models reveal that partitioning at the decision-level representations yields an average 4× increase in MSE compared to feature-level representations, and maintains a 66\% advantage in resistance even when intermediate information entropy and inversion models are both enhanced. Additionally, the transition boundary depends on both data type and class count, and larger intra-class variance tends to increase reconstruction resistance.

\nocite{langley00}


\section*{Impact Statement}
This paper presents work whose goal is to advance the field of Machine Learning and Collaborative Inference (CI). As edge deployment becomes increasingly common, protecting CI pipelines against model inversion attacks often relies on add-on defences that introduce privacy--utility trade-offs and extra computation. We instead provide principled guidance for privacy-preserving model partitioning by identifying where representations undergo a transition that sharply increases reconstruction uncertainty. By clarifying this training-dynamics mechanism, our findings can support CI deployments that are more robust to inversion without relying solely on heavy perturbation. We anticipate that this can support more privacy-preserving CI deployments.

\bibliography{sample-base}
\bibliographystyle{icml2026}

\newpage
\appendix
\onecolumn

\appendix

\section*{Appendix Roadmap}
This roadmap is provided because the appendices are divided into multiple largely self-contained sections.
It is intended to help readers quickly locate proofs, derivations, algorithms, and supporting experimental details.
For each item below, the boldface parenthetical tag indicates the main-text section where the appendix is primarily used.

\begin{itemize}
\item \textbf{Appendix~\ref{appA}. Proofs of entropy upper bounds (Sec.~\ref{sec:3.1}).}
  Provides full proofs for the entropy upper bounds used in the main theory, including the feature-level bound and the class-conditional bound (supporting the bound used in Eq.~(\ref{eq:4})).

  \item \textbf{Appendix~\ref{appB}. From differential entropy to discrete Shannon entropy under finite precision (Sec.~\ref{sec:3.1}).}
  Derives the bridge from differential entropy to discrete Shannon entropy under finite-precision or quantized activations, justifying the use of discrete entropy quantities in the analysis.

  \item \textbf{Appendix~\ref{appC}. Derivation of Eq.~(\ref{eq:5}) (Sec.~\ref{sec:3.1}).}
  Gives the detailed derivation of Eq.~(\ref{eq:5}), and explains how the finite-precision quantization induces the resolution-dependent term (e.g., $\kappa_\Delta$) in the bound.

  \item \textbf{Appendix~\ref{appD}. Pseudocode for Active Location of the Golden Partition Zone (Sec.~\ref{sec:3.1}).}
  Provides pseudocode and implementation details for automatic GPZ location and its experimental validation.

  \item \textbf{Appendix~\ref{appE}. First-order change of $R_c^2$ under one-hot and label smoothing (Sec.~\ref{sec:3.2}).}
  Gives the complete derivation for the first-order change of $R_c^2$ (e.g., the derivation behind Eq.~(\ref{eq:7})), and extends the same analysis to label smoothing.

  \item \textbf{Appendix~\ref{appF}. Gradient-norm lower bound under label smoothing (Sec.~\ref{sec:3.2}).}
  Provides the explicit lower bound on the feature-gradient norm referenced in the main text (e.g., Eq.~(\ref{eq:F4_eps})), supporting the claim that label smoothing can sustain representation contraction effects.

  \item \textbf{Appendix~\ref{appG}. Intuitive interpretation for the \textit{Neural Vortex} metaphor (Sec.~\ref{sec:3.2}).}
  Contains an intuitive interpretation supporting the \textit{Neural Vortex} metaphor, clarifying how output-entropy effects and representation geometry relate.

  \item \textbf{Appendix~\ref{appH}. Supplement for Experiment (Sec.~\ref{sec:4}).}
  Collects experimental details: datasets, target models, inversion models, implementation, hyperparameters, hardware, plus additional analyses, ablations (e.g., residual networks, VGG, label-smoothing reconstruction behavior, and data-type studies).

  \item \textbf{Appendix~\ref{appI}. Deployment overhead (Sec.~\ref{sec:5}).}
  Provides a detailed deployment-oriented cost assessment supporting the practical CI guidance, including edge-side parameter fraction and overhead considerations.

\item \textbf{Appendix~\ref{appJ}. Related Work.}
    Summarizes prior model inversion attacks and defences in collaborative inference, and positions our paper.

\end{itemize}

\section{Proofs of entropy upper bounds}
\label{appA}

This appendix provides detailed derivations for the entropy upper bounds used in Sec.~3.
Throughout, $\ln$ denotes the natural logarithm. We use $h(\cdot)$ for differential entropy of
continuous variables and $H(\cdot)$ for Shannon entropy of discrete variables.

\subsection{Feature-level entropy bound (Eq.~(\ref{eq:3}))}

\paragraph{Setup.}
Let $Z_{\mathrm{feat}}\in\mathbb{R}^d$ be a continuous random vector representing feature-level
information at a given layer, with mean $\mu_{\mathrm{feat}}=\mathbb{E}[Z_{\mathrm{feat}}]$ and
covariance
\[
\Sigma_{\mathrm{feat}} \;=\; \mathrm{Cov}(Z_{\mathrm{feat}})
\;=\; \mathbb{E}\big[(Z_{\mathrm{feat}}-\mu_{\mathrm{feat}})(Z_{\mathrm{feat}}-\mu_{\mathrm{feat}})^\top\big].
\]
Define the average per-dimension variance
\[
\sigma_{\mathrm{feat}}^2 \;\triangleq\; \frac{1}{d}\,\mathrm{tr}(\Sigma_{\mathrm{feat}}).
\]

\begin{lemma}[Maximum-entropy upper bound with fixed covariance]\label{lem:feat_gauss_upper}
Among all continuous distributions on $\mathbb{R}^d$ with covariance $\Sigma_{\mathrm{feat}}$,
the Gaussian $\mathcal{N}(\mu_{\mathrm{feat}},\Sigma_{\mathrm{feat}})$ maximizes differential entropy.
In particular,
\[
h(Z_{\mathrm{feat}})
\;\le\;
h\bigl(\mathcal{N}(\mu_{\mathrm{feat}},\Sigma_{\mathrm{feat}})\bigr)
\;=\;
\frac{1}{2}\ln\!\bigl[(2\pi e)^d \det \Sigma_{\mathrm{feat}}\bigr].
\]
\end{lemma}

\begin{proof}
This follows directly from the maximum entropy principle for fixed covariance
(Lemma~\ref{lem:2.2} in the main text).
For completeness, the Gaussian with covariance $\Sigma_{\mathrm{feat}}$ uniquely maximizes $h(\cdot)$,
and its differential entropy admits the closed form
$\frac{1}{2}\ln\!\bigl[(2\pi e)^d \det \Sigma_{\mathrm{feat}}\bigr]$ when $\Sigma_{\mathrm{feat}}$ is
positive definite. If $\Sigma_{\mathrm{feat}}$ is singular, the Gaussian is degenerate and the
differential entropy is $-\infty$, making the inequality trivial.
\end{proof}

\begin{lemma}[Determinant--trace inequality]\label{lem:det_trace}
For any \emph{real symmetric} positive semidefinite matrix $\Sigma\in\mathbb{R}^{d\times d}$,
\[
\det(\Sigma) \;\le\; \left(\frac{\mathrm{tr}(\Sigma)}{d}\right)^d.
\]
\end{lemma}

\begin{proof}
Since $\Sigma$ is real and symmetric, it admits an orthogonal eigendecomposition
$\Sigma = U \Lambda U^\top$, where $U$ is orthogonal and
$\Lambda=\mathrm{diag}(\lambda_1,\dots,\lambda_d)$ with $\lambda_i\in\mathbb{R}$.
Moreover, $\Sigma\succeq 0$ implies $\lambda_i\ge 0$ for all $i$
(e.g., by the Rayleigh quotient $u_i^\top \Sigma u_i=\lambda_i\ge 0$ for each eigenvector $u_i$).

Therefore,
\[
\det(\Sigma)=\det(\Lambda)=\prod_{i=1}^d \lambda_i,
\qquad
\mathrm{tr}(\Sigma)=\mathrm{tr}(\Lambda)=\sum_{i=1}^d \lambda_i.
\]
Applying the AM--GM inequality to the nonnegative scalars $\{\lambda_i\}_{i=1}^d$ yields
\[
\left(\prod_{i=1}^d \lambda_i\right)^{\!1/d}
\;\le\;
\frac{1}{d}\sum_{i=1}^d \lambda_i,
\]
hence
\[
\det(\Sigma)
=\prod_{i=1}^d \lambda_i
\;\le\;
\left(\frac{1}{d}\sum_{i=1}^d \lambda_i\right)^{\!d}
=
\left(\frac{\mathrm{tr}(\Sigma)}{d}\right)^{\!d}.
\]
\end{proof}

\begin{lemma}[Feature-level entropy upper bound in Eq.~(\ref{eq:app1})]\label{lem:feat_entropy_bound}
With $\sigma_{\mathrm{feat}}^2=\frac{1}{d}\mathrm{tr}(\Sigma_{\mathrm{feat}})$,
\[
h(Z_{\mathrm{feat}})
\;\le\;
\frac{d}{2}\ln\!\Bigl(2\pi e\,\sigma_{\mathrm{feat}}^2\Bigr)
\;=\;
O\!\bigl(d\ln\sigma_{\mathrm{feat}}^2\bigr).
\]
\end{lemma}

\begin{proof}
Start from Lemma~\ref{lem:feat_gauss_upper}:
\[
h(Z_{\mathrm{feat}})
\le
\frac12\ln\!\bigl[(2\pi e)^d \det \Sigma_{\mathrm{feat}}\bigr].
\]
Apply Lemma~\ref{lem:det_trace} with $\Sigma=\Sigma_{\mathrm{feat}}$:
\[
\det \Sigma_{\mathrm{feat}}
\le
\left(\frac{\mathrm{tr}(\Sigma_{\mathrm{feat}})}{d}\right)^d
=
(\sigma_{\mathrm{feat}}^2)^d.
\]
Substitute this bound into the previous inequality:
\[
\begin{aligned}
h(Z_{\mathrm{feat}})
&\le
\frac12\ln\!\bigl[(2\pi e)^d (\sigma_{\mathrm{feat}}^2)^d\bigr]
=
\frac12\ln\!\bigl[(2\pi e\,\sigma_{\mathrm{feat}}^2)^d\bigr]\\
&=
\frac{d}{2}\ln\!\Bigl(2\pi e\,\sigma_{\mathrm{feat}}^2\Bigr),
\end{aligned}
\]
which is exactly Eq.~(\ref{eq:app1}), (\ref{eq:3}) in the main text.
\end{proof}

\subsection{Bounding $h(Z_{\mathrm{dec}})$ by class-conditional entropies Eq.~\eqref{eq:dec}}

\paragraph{Setup.}
Let $Y\in\{1,\dots,K\}$ be a discrete class label and $Z_{\mathrm{dec}}\in\mathbb{R}^D$ be a continuous
decision-level representation.

\begin{lemma}[Reduction of $h(Z_{\mathrm{dec}})$ to class-conditional entropies]\label{lem:zdec_reduce}
The overall decision-level entropy satisfies
\[
h(Z_{\mathrm{dec}})
\;\le\;
\max_{c} h(Z_{\mathrm{dec}}\mid Y=c) + H(Y)
\;\le\;
\max_{c} h(Z_{\mathrm{dec}}\mid Y=c) + \ln K.
\]
\end{lemma}

\begin{proof}
\textbf{Step 1: Use the identity linking mutual information and conditional differential entropy.}
For a discrete $Y$ and continuous $Z_{\mathrm{dec}}$, mutual information can be written as
\[
I(Y;Z_{\mathrm{dec}})
\;=\;
h(Z_{\mathrm{dec}}) - h(Z_{\mathrm{dec}}\mid Y),
\]
where
\[
h(Z_{\mathrm{dec}}\mid Y)
\;\triangleq\;
\sum_{c=1}^K \Pr(Y=c)\,h(Z_{\mathrm{dec}}\mid Y=c)
\]
is the conditional differential entropy averaged over $Y$.
Rearranging gives
\[
h(Z_{\mathrm{dec}}) = h(Z_{\mathrm{dec}}\mid Y) + I(Y;Z_{\mathrm{dec}}).
\]

\textbf{Step 2: Upper bound $I(Y;Z_{\mathrm{dec}})$ by $H(Y)$.}
Using the alternative expression
\[
I(Y;Z_{\mathrm{dec}}) = H(Y) - H(Y\mid Z_{\mathrm{dec}}),
\]
and the fact that conditional Shannon entropy is nonnegative,
$H(Y\mid Z_{\mathrm{dec}})\ge 0$, we obtain
\[
I(Y;Z_{\mathrm{dec}}) \le H(Y).
\]
Moreover, since $Y$ takes $K$ values, $H(Y)\le \ln K$.

\textbf{Step 3: Upper bound $h(Z_{\mathrm{dec}}\mid Y)$ by $\max_c h(Z_{\mathrm{dec}}\mid Y=c)$.}
Because $h(Z_{\mathrm{dec}}\mid Y)$ is a convex combination of $\{h(Z_{\mathrm{dec}}\mid Y=c)\}_{c=1}^K$,
\[
h(Z_{\mathrm{dec}}\mid Y)
=
\sum_{c=1}^K \Pr(Y=c)\,h(Z_{\mathrm{dec}}\mid Y=c)
\le
\max_{c} h(Z_{\mathrm{dec}}\mid Y=c).
\]

\textbf{Step 4: Combine the bounds.}
Putting Steps 1--3 together,
\[
h(Z_{\mathrm{dec}})
=
h(Z_{\mathrm{dec}}\mid Y)+I(Y;Z_{\mathrm{dec}})
\le
\max_{c} h(Z_{\mathrm{dec}}\mid Y=c) + H(Y)
\le
\max_{c} h(Z_{\mathrm{dec}}\mid Y=c) + \ln K.
\]
\end{proof}

\subsection{Class-conditional entropy upper bound (Eq.~(\ref{eq:4}))}

\paragraph{Setup and meaning of $R_c^2$.}
For each class $c$, let
\[
\mu_c \;\triangleq\; \mathbb{E}[Z_{\mathrm{dec}}\mid Y=c],
\qquad
\Sigma_c \;\triangleq\; \mathrm{Cov}(Z_{\mathrm{dec}}\mid Y=c).
\]
We define the intra-class mean-squared radius as
\[
R_c^2 \;\triangleq\; \mathrm{tr}(\Sigma_c).
\]
Equivalently, $R_c^2$ is the class-conditional mean squared deviation from the class mean:
\[
\mathrm{tr}(\Sigma_c)
=
\mathrm{tr}\!\Big(\mathbb{E}\big[(Z_{\mathrm{dec}}-\mu_c)(Z_{\mathrm{dec}}-\mu_c)^\top \mid Y=c\big]\Big)
=
\mathbb{E}\big[\|Z_{\mathrm{dec}}-\mu_c\|_2^2 \mid Y=c\big].
\]
With $N_c$ samples from class $c$ and decision-level representations $\{z_i\}$, the empirical estimate is
\[
R_c^2
=
\frac{1}{N_c}\sum_{i:\,y_i=c}\|z_i-\mu_c\|_2^2,
\qquad
\mu_c=\frac{1}{N_c}\sum_{i:\,y_i=c} z_i .
\]

\begin{lemma}[Entropy upper bound for class-conditional Gaussian]\label{lem:class_cond_entropy}
For each class $c$,
\[
h(Z_{\mathrm{dec}} \mid Y=c)
\;\le\;
\frac{D}{2} \ln\left(2\pi e \cdot \frac{R_c^2}{D}\right).
\]
\end{lemma}

\begin{proof}
\textbf{Step 1: Start from the closed-form Gaussian entropy.}
If $\Sigma_c$ is positive definite, the multivariate Gaussian differential entropy is
\[
h(Z_{\mathrm{dec}}\mid Y=c)
=
\frac{1}{2}\ln\!\bigl[(2\pi e)^D\det\Sigma_c\bigr].
\]
If $\Sigma_c$ is singular, then $\det\Sigma_c=0$ and the (degenerate) Gaussian differential entropy is
$-\infty$, so the desired upper bound holds trivially.\footnote{If additionally $R_c^2=0$, then the
right-hand side becomes $\frac{D}{2}\ln(0)=-\infty$, and the inequality holds with equality.}

\textbf{Step 2: Upper bound $\det\Sigma_c$ by its trace.}
Apply Lemma~\ref{lem:det_trace} to the $D\times D$ matrix $\Sigma_c$:
\[
\det\Sigma_c
\le
\left(\frac{\mathrm{tr}(\Sigma_c)}{D}\right)^D
=
\left(\frac{R_c^2}{D}\right)^D.
\]

\textbf{Step 3: Substitute and simplify.}
Substitute the determinant bound into the Gaussian entropy expression:
\[
\begin{aligned}
h(Z_{\mathrm{dec}}\mid Y=c)
&\le
\frac{1}{2}\ln\!\left[(2\pi e)^D\left(\frac{R_c^2}{D}\right)^D\right]\\
&=
\frac{1}{2}\ln\!\left[\left(2\pi e\cdot\frac{R_c^2}{D}\right)^{\!D}\right]
=
\frac{D}{2}\ln\!\left(2\pi e\cdot\frac{R_c^2}{D}\right),
\end{aligned}
\]
which matches Eq.~(\ref{eq:4}) in the main text.
\end{proof}

\section{From differential entropy to discrete Shannon entropy under finite precision}
\label{appB}

Our theory is stated for continuous representations and hence uses differential entropy $h(\cdot)$.
In actual deployments, however, intermediate activations are stored and transmitted in finite
precision (e.g., FP16/FP32 or INT8), i.e., as discrete tensors, for which the natural quantity is
the discrete Shannon entropy $H(\cdot)$. This appendix formalizes a standard bridge between the two.
The key message is:

Finite-precision storage can be viewed as a discretizer $Q$ that induces a partition $\{C_k\}$ of
$\mathbb{R}^m$. The discrete entropy $H(Q(Z))$ equals the continuous entropy $h(Z)$ plus a
deterministic ``cell-volume'' offset and a nonnegative mismatch term (a KL divergence).
Uniform quantization with step size $\Delta$ is a special case, but the same bridge also applies to
non-uniform discretizations such as floating point.

Throughout, $\log$ denotes the natural logarithm (nats).

\paragraph{General finite-precision discretization model (covers FP32/INT8).}
Let $Z\in\mathbb{R}^{m}$ be a continuous random vector with pdf $f$ and finite differential entropy
\[
h(Z) \;=\; -\int_{\mathbb{R}^m} f(z)\log f(z)\,dz .
\]
Let $Q$ be a (possibly non-uniform) discretizer mapping $\mathbb{R}^m$ to a countable alphabet.
The pre-images of symbols form a measurable partition $\{C_k\}$ of $\mathbb{R}^m$ (quantization cells).
Assume the cells have finite volume $\mathrm{vol}(C_k)\in(0,\infty)$ for all $k$ that occur with
nonzero probability. Define
\[
p_k \;:=\; \mathbb{P}(Z\in C_k)=\int_{C_k} f(z)\,dz,
\qquad
H(Q(Z)) \;:=\; -\sum_k p_k \log p_k.
\]
Define the associated piecewise-constant (histogram) density
\[
f_Q(z)\;:=\;\frac{p_k}{\mathrm{vol}(C_k)} \quad \text{for } z\in C_k.
\]
Note that $f_Q$ is a valid pdf on $\mathbb{R}^m$ and is constant within each cell.

\paragraph{Exact bridge identity.}
\begin{lemma}[Exact entropy bridge under a partition]
\label{lem:exact_bridge}
With the above definitions,
\begin{equation}
H(Q(Z))
\;=\;
h(Z)
\;+\;
\mathbb{E}\!\left[\log \frac{1}{\mathrm{vol}(C_{Q(Z)})}\right]
\;+\;
D\!\left(f\,\middle\|\,f_Q\right),
\label{eq:general_bridge}
\end{equation}
where $D(f\|f_Q)=\int f(z)\log\!\frac{f(z)}{f_Q(z)}\,dz\ge 0$ is the KL divergence.
Equivalently, defining the compensated discrete entropy
\begin{equation}
\widetilde h_{Q}(Z)
\;:=\;
H(Q(Z)) \;+\; \mathbb{E}\!\left[\log \mathrm{vol}(C_{Q(Z)})\right],
\label{eq:comp_general_def}
\end{equation}
we have the exact identity
\begin{equation}
\widetilde h_{Q}(Z)
\;=\;
h(Z) \;+\; D\!\left(f\,\middle\|\,f_Q\right)
\;\ge\;
h(Z).
\label{eq:comp_general_exact}
\end{equation}
\end{lemma}

\begin{proof}
For $z\in C_k$, $\log f_Q(z)=\log\!\bigl(p_k/\mathrm{vol}(C_k)\bigr)$. Therefore,
\[
-\int f(z)\log f_Q(z)\,dz
=
-\sum_k \int_{C_k} f(z)\log\!\Bigl(\frac{p_k}{\mathrm{vol}(C_k)}\Bigr)\,dz
=
-\sum_k p_k\log p_k \;+\;\sum_k p_k\log \mathrm{vol}(C_k).
\]
The right-hand side is $H(Q(Z))+\mathbb{E}[\log \mathrm{vol}(C_{Q(Z)})]$, which by definition equals
$\widetilde h_Q(Z)$. Subtracting $h(Z)=-\int f\log f$ yields
\[
\widetilde h_Q(Z)-h(Z)
=
\int f(z)\log\frac{f(z)}{f_Q(z)}\,dz
=
D(f\|f_Q)\ge 0,
\]
which proves \eqref{eq:comp_general_exact}. Rearranging gives \eqref{eq:general_bridge}.
\end{proof}

\paragraph{Interpretation.}
Equation \eqref{eq:general_bridge} decomposes the discrete entropy into three parts:
(i) the continuous entropy $h(Z)$; (ii) an offset that depends on the typical cell volume seen under
$Z$; and (iii) a nonnegative discretization mismatch term $D(f\|f_Q)$, which is small when $f$ varies
slowly within cells.

\paragraph{Uniform quantization as a special case.}
Consider coordinate-wise uniform quantization with step size $\Delta>0$ on each coordinate.
Then every cell has volume $\mathrm{vol}(C_k)=\Delta^m$, hence
$\mathbb{E}[\log \mathrm{vol}(C_{Q(Z)})]=m\log\Delta$, and \eqref{eq:general_bridge} reduces to
\begin{equation}
H(Z_\Delta)
=
h(Z) + m\log\tfrac{1}{\Delta} + r_Z(\Delta),
\qquad
r_Z(\Delta)=D(f\|f_\Delta)\ge 0,
\label{eq:uniform_bridge}
\end{equation}
where $f_\Delta$ is the corresponding histogram density and $Z_\Delta:=Q_\Delta(Z)$.
This is the standard differential-to-discrete relation under fine quantization (e.g., Cover--Thomas).

\paragraph{High-resolution limit.}
If the partition is refined so that cells become small on the region where $f$ concentrates
(e.g., $\Delta\to 0$ in uniform quantization, or increasing floating-point precision),
then under mild regularity assumptions one has $D(f\|f_Q)\to 0$, and consequently
$\widetilde h_Q(Z)\to h(Z)$. This is an asymptotic statement with respect to the fineness of the
discretization, not a claim that the runtime hardware precision ``tends to zero.''

\paragraph{Special case: neural network features stored in FP32 (non-uniform discretization).}
Neural network activations stored in IEEE-754 floating point (e.g., FP32) are not produced by a
uniform quantizer with a single global step size $\Delta$. Instead, FP32 induces a non-uniform grid
whose spacing depends on the exponent (the unit-in-the-last-place, ULP). Roughly, for values with
magnitude $|z|\approx 2^{e}$, the local spacing is on the order of $2^{e-23}$ due to the $24$-bit
significand (including the hidden bit), i.e., a relative precision of about $2^{-23}$ around that scale.
Therefore, it is generally incorrect to model FP32 storage as uniform quantization with a single
$\Delta$ over all of $\mathbb{R}^m$.

Nevertheless, Lemma~\ref{lem:exact_bridge} applies directly to floating point by treating IEEE-754
rounding as a discretizer $Q$ with non-uniform cells $\{C_k\}$.
For strict applicability of \eqref{eq:general_bridge}, we restrict attention to the \emph{finite}
representable region (i.e., excluding $\pm\infty$ and NaNs) and assume that overflow/exception events
occur with zero (or negligible) probability under the population distribution of activations.
Under this mild assumption, every cell with $p_k>0$ has a finite volume, and the bridge
\eqref{eq:general_bridge} holds exactly, with a distribution-dependent cell-volume term
$\mathbb{E}[\log 1/\mathrm{vol}(C_{Q(Z)})]$ and mismatch term $D(f\|f_Q)$.

In typical neural networks, activations are numerically bounded (or effectively bounded after
normalization/clipping) and concentrate on moderate magnitudes where FP32 cells are extremely
small. In regimes where the induced cell diameters are sufficiently fine on the high-probability
region of $Z$, the mismatch term $D(f\|f_Q)$ is expected to be small, and the continuous-entropy
theory provides an accurate description of finite-precision measurements.

\paragraph{Carrying over an entropy contraction across layers.}
Let $Z_{\mathrm{feat}}\in\mathbb{R}^{m}$ and $Z_{\mathrm{dec}}\in\mathbb{R}^{M}$ be the (feature-level
and decision-level) continuous representations considered in the main text, and let $Q$ denote the
finite-precision discretizer used at run time (e.g., FP32 rounding, or INT8 quantization).
Applying Lemma~\ref{lem:exact_bridge} to both yields
\[
\widetilde h_{Q}(Z_{\mathrm{feat}}) - \widetilde h_{Q}(Z_{\mathrm{dec}})
=
\bigl(h(Z_{\mathrm{feat}})-h(Z_{\mathrm{dec}})\bigr)
+
D(f_{\mathrm{feat}}\|f_{\mathrm{feat},Q})
-
D(f_{\mathrm{dec}}\|f_{\mathrm{dec},Q}),
\]
where $f_{\mathrm{feat},Q}$ and $f_{\mathrm{dec},Q}$ are the corresponding histogram densities.
Hence, any population-level contraction established for differential entropy (e.g.,
$h(Z_{\mathrm{feat}})>h(Z_{\mathrm{dec}})$ with a nontrivial margin) persists for sufficiently fine
discretizations, since both KL terms vanish as the induced cells become small.

\paragraph{Remark (raw $H(\cdot)$ vs.\ compensated $\widetilde h_Q(\cdot)$).}
When $m\neq M$, comparing raw discrete entropies $H(Q(Z_{\mathrm{feat}}))$ and $H(Q(Z_{\mathrm{dec}}))$
may conflate genuine information contraction with deterministic effects caused by different
dimensionalities and cell volumes. The compensated entropy $\widetilde h_Q(\cdot)$ removes the
cell-volume artifact and aligns the finite-precision measurement with the continuous theory.

\section{Derivation of Eq.~(\ref{eq:5})}
\label{appC}

We derive the lower bound on $H(X\mid Z)$ used in Eq.~(\ref{eq:5}) under finite precision.
At run time, the transmitted/stored activation is discrete due to finite precision; we denote this
runtime representation by $Z_\Delta:=Q(Z)$, where $Q$ is the discretizer (e.g., FP32 rounding or INT8
quantization). Accordingly, $H(\cdot)$ in this appendix denotes discrete Shannon entropy, and
$H(X\mid Z)$ in Eq.~(\ref{eq:5}) should be understood as $H(X\mid Z_\Delta)$.

\paragraph{Step 1: A generic lower bound via mutual information.}
Using the Shannon-entropy identity,
\begin{align}
H(X\mid Z_\Delta) \;=\; H(X)-I(X;Z_\Delta),
\label{eq:appC_id}
\end{align}
and the fact that for any (possibly stochastic) mapping $X\mapsto Z_\Delta$,
\begin{align}
I(X;Z_\Delta)
\;=\;
H(Z_\Delta)-H(Z_\Delta\mid X)
\;\le\;
H(Z_\Delta),
\label{eq:appC_I_le_H}
\end{align}
since $H(Z_\Delta\mid X)\ge 0$, we obtain
\begin{align}
H(X\mid Z_\Delta)
\;\ge\;
H(X)-H(Z_\Delta).
\label{eq:appC_base}
\end{align}
This is the first inequality in Eq.~\eqref{eq:5}. (When the inference-time pipeline is deterministic,
$H(Z_\Delta\mid X)=0$ and \eqref{eq:appC_I_le_H} holds with equality; we keep the generic inequality
to cover potential stochastic components.)

\paragraph{Step 2: Upper bounding $H(Z_\Delta)$ from differential-entropy bounds.}
To connect $H(Z_\Delta)$ with our continuous-entropy analysis, Appendix~\ref{appB} models finite
precision as a discretizer $Q$ that induces a partition $\{C_k\}$ of $\mathbb{R}^m$.
Lemma~\ref{lem:exact_bridge} provides the exact bridge for any $m$-dimensional continuous $Z$:
\begin{align}
H\!\bigl(Q(Z)\bigr)
=
h(Z)
+
\underbrace{
\mathbb{E}\!\left[\log \tfrac{1}{\mathrm{vol}\!\{C_{Q(Z)}\}}\right]
+
D\!\left[f\|f_Q\right]
}_{\triangleq\ \kappa_\Delta(Z)} ,
\label{eq:appC_bridge_general}
\end{align}
where $f$ is the pdf of $Z$, $f_Q$ is the histogram density induced by the partition, and
$D\!\left[f\|f_Q\right]\ge 0$ is the KL divergence. The correction term $\kappa_\Delta(Z)$ depends on the
run-time discretizer (indexed by $\Delta$) and the distribution of $Z$.

To streamline notation in Eq.~\eqref{eq:5}, we introduce \emph{dimension-wise} correction constants
$\kappa_\Delta(d)$ and $\kappa_\Delta(D)$ for the $d$- and $D$-dimensional representations considered
in this work under the same run-time discretizer $Q$ (i.e., the same precision $\Delta$). Specifically,
we choose them such that
\begin{align}
\kappa_\Delta(d) \;\ge\; \kappa_\Delta(Z_{\mathrm{feat}}),
\qquad
\kappa_\Delta(D) \;\ge\; \kappa_\Delta(Z_{\mathrm{dec}}),
\label{eq:appC_kappa_dim}
\end{align}
where $\kappa_\Delta(Z)$ is defined in \eqref{eq:appC_bridge_general}.

\noindent\textbf{Remark (role of $\kappa_\Delta(\cdot)$).}
These constants are introduced solely to simplify the presentation while preserving a valid upper bound
under a fixed run-time discretizer. Since they upper-bound $\kappa_\Delta(Z)$ over the encountered
representations of a given dimension, they may be loose; importantly, they act as additive corrections
governed by the deployment precision, rather than layer-specific modeling assumptions.

Consequently, for any $m$-dimensional representation $Z$ stored at this precision, we have
\begin{align}
H(Z_\Delta)=H\!\bigl(Q(Z)\bigr)
\;\le\;
h(Z)+\kappa_\Delta(m),
\label{eq:appC_H_le_h_plus_kappa}
\end{align}
and in particular this holds with $m=d$ for $Z_{\mathrm{feat}}$ and with $m=D$ for $Z_{\mathrm{dec}}$.

\noindent\textbf{Practical note.}
Because $\kappa_\Delta(m)$ is an upper envelope, the resulting bound can be numerically loose and may become vacuous in
absolute value (e.g., after substitution into $H(X\mid Z_\Delta)\ge H(X)-H(Z_\Delta)$). In this work, $\kappa_\Delta(m)$
serves to account for finite-precision effects under a fixed deployment discretizer, while our layer-wise comparisons rely
on the variation of the differential-entropy terms rather than tightness of this additive offset.

Under a fixed deployment precision, $\kappa_\Delta(m)$ acts as a precision-dependent additive offset.
It does not encode the layerwise geometric contraction captured by $\sigma_{\mathrm{feat}}^2$ and $R_c^2$,
so the pronounced increase of the bound across the representational transition is driven primarily by
the differential-entropy surrogates rather than by $\kappa_\Delta(m)$.

\paragraph{Remark: uniform quantization as a special case.}
If $Q$ is a coordinate-wise uniform quantizer with step size $\Delta$, then each cell has
$\mathrm{vol}\!\{C_k\}=\Delta^m$, and Eq.~\eqref{eq:appC_bridge_general} specializes to
\[
H(Z_\Delta)=H\!\bigl(Q(Z)\bigr) = h(Z) + m\log \tfrac{1}{\Delta} + D\!\left[f\|f_\Delta\right],
\]
where $f_\Delta$ denotes the histogram density induced by the uniform partition.
This recovers the familiar continuous--discrete entropy bridge. In the high-resolution regime,
where the discretization cells are sufficiently fine and the mismatch term $D\!\left[f\|f_\Delta\right]$ is small,
the dominant correction is $m\log \tfrac{1}{\Delta}$. Importantly, Eq.~\eqref{eq:appC_bridge_general} is not restricted to
uniform quantization and also applies to non-uniform discretizations such as FP32 rounding; hence,
Eq.~\eqref{eq:appC_H_le_h_plus_kappa} is not tied to the uniform-quantization assumption.

For fixed $\Delta$, the uniform-quantization bridge contains an explicit linear term $m\log \tfrac{1}{\Delta}$ plus a
nonnegative mismatch term. Thus, when the high-resolution approximation is reasonable, the leading dependence of the
finite-precision correction on dimension is linear in $m$, consistent with treating it as a precision-dependent additive offset.

\paragraph{Step 3: Instantiating the bound for $Z_{\mathrm{feat}}$ and $Z_{\mathrm{dec}}$.}
For the feature-level representation $Z_{\mathrm{feat}}\in\mathbb{R}^d$, Eq.~\eqref{eq:3} gives
\begin{align}
h(Z_{\mathrm{feat}})
\le
\frac{d}{2}\ln\!\bigl(2\pi e\,\sigma_{\mathrm{feat}}^{2}\bigr).
\label{eq:appC_feat_h}
\end{align}
Combining \eqref{eq:appC_base}, \eqref{eq:appC_H_le_h_plus_kappa}, and \eqref{eq:appC_feat_h} yields
\begin{align}
H(X\mid Z_{\mathrm{feat},\Delta})
\ge
H(X)
-
\frac{d}{2}\ln\!\bigl(2\pi e\,\sigma_{\mathrm{feat}}^{2}\bigr)
-
\kappa_\Delta(d).
\label{eq:appC_feat_final}
\end{align}

For the decision-level representation $Z_{\mathrm{dec}}\in\mathbb{R}^D$, Eq.~(\ref{eq:4}) gives
\begin{align}
h(Z_{\mathrm{dec}})
\le
\frac{D}{2}\ln\!\Bigl(2\pi e\,\frac{R_c^{2}}{D}\Bigr)
+
H(Y).
\label{eq:appC_dec_h}
\end{align}
Similarly,
\begin{align}
H(X\mid Z_{\mathrm{dec},\Delta})
\ge
H(X)
-
\frac{D}{2}\ln\!\Bigl(2\pi e\,\frac{R_c^{2}}{D}\Bigr)
-
H(Y)
-
\kappa_\Delta(D).
\label{eq:appC_dec_final}
\end{align}

\paragraph{Step 4: Concluding Eq.~\eqref{eq:5}.}
Eqs.~\eqref{eq:appC_feat_final} and \eqref{eq:appC_dec_final} yield the two cases in Eq.~\eqref{eq:5}.

\paragraph{Ordering of $H(X\mid Z)$ across split depths.}
Fix a deployment setting with a fixed run-time precision $\Delta$, and let $Z_{\ell,\Delta}$ denote the
run-time activation at layer $\ell$ under this same precision. Consider two split points
$\ell_{\mathrm{feat}}<\ell_{\mathrm{dec}}$ and define
$Z_{\mathrm{feat},\Delta}:=Z_{\ell_{\mathrm{feat}},\Delta}$ and
$Z_{\mathrm{dec},\Delta}:=Z_{\ell_{\mathrm{dec}},\Delta}$.
Under the run-time forward computation, the mapping from layer $\ell_{\mathrm{feat}}$ to layer
$\ell_{\mathrm{dec}}$ is deterministic and uses no additional input beyond $Z_{\mathrm{feat},\Delta}$.
Therefore there exists a deterministic function $g$ such that
$Z_{\mathrm{dec},\Delta}=g(Z_{\mathrm{feat},\Delta})$, and
\[
X \to Z_{\mathrm{feat},\Delta} \to Z_{\mathrm{dec},\Delta}
\]
forms a Markov chain. By the data processing inequality,
\[
I(X;Z_{\mathrm{feat},\Delta}) \ge I(X;Z_{\mathrm{dec},\Delta}).
\]
Using the identity $I(X;Z_\Delta)=H(X)-H(X\mid Z_\Delta)$ and the fact that $H(X)$ is the same for both
split points, it follows that
\[
H(X\mid Z_{\mathrm{feat},\Delta}) \le H(X\mid Z_{\mathrm{dec},\Delta}).
\]

Consequently, even if the lower bound in Eq.~\eqref{eq:5} at a shallower split is numerically smaller than
that at a deeper split, the true conditional entropies cannot reverse order under this setting, and the
inequality is typically strict when the deeper representation discards instance-specific information,
which is the regime where the gap becomes pronounced around the transition.

\section{Pseudocode for Active Location of the Golden Partition Zone}
\label{appD}

The algorithm~\ref{al} in Appendix~\ref{appD} actively locates the \textit{Golden Partition Zone} by tracking how the intra-class mean-squared radius $R_c^{2(l)}$ evolves across layers. 
Our motivation is dynamics-driven: once the intermediate representation $z^{(l)}$ becomes decision-level, contracting the intra-class radius is aligned with the late-stage training objective. 
This mechanism is supported by Appendix~\ref{appE} and \ref{appF}. 
Consequently, compared with feature-level representations, decision-level representations tend to exhibit a more pronounced decrease in $R_c^{2(l)}$, producing a stable peak signal for locating the transition.

The input is a mini-batch of samples that may contain one or multiple classes. For a single-class batch, $R_c^{2(l)}$ is computed directly. For a multi-class batch, we compute $R_c^{2(l)}$ per class and average across classes; empirically, this averaging preserves the transition peak and improves robustness.
To avoid spurious effects caused by changing representation dimensionality across layers, we further normalize the radius by the layer dimension $d_l$, and perform all drop calculations using the normalized radius $\widetilde{R}_c^{2(l)} \triangleq R_c^{2(l)}/d_l$.

To locate the transition zone, the algorithm computes the percentage decrease of the (dimension-normalized) intra-class radius between adjacent target layers. 
Specifically, for a layer $l$ with representation dimension $d_l$, we use the normalized radius
$\widetilde{R}_c^{2(l)} \triangleq R_c^{2(l)}/d_l$ and define
\begin{equation}
R_{\mathrm{drop}}^{2(l)} \;=\;
\frac{\widetilde{R}_c^{2(l_{\mathrm{previous}})}-\widetilde{R}_c^{2(l_{\mathrm{current}})}}{\widetilde{R}_c^{2(l_{\mathrm{previous}})}}\times 100\% .
\end{equation}

\begin{table}[t]
\centering
\caption{Comparison of representation transition zone location methods.}
\label{tab:transition_zone_comparison}
\begin{tabularx}{\columnwidth}{cXX}
\toprule
\textbf{Model Architecture} & \textbf{Empirical location} & \textbf{Active location via $R_c^2$} \\
\midrule
IR-152 & 48$\rightarrow$49 & 48$\rightarrow$49 \\
IR-152 & 48$\rightarrow$49 & (w/o dim.\ norm.) 48$\rightarrow$49 \\
VGG19  & 17$\rightarrow$26 and 39$\rightarrow$43 & 26$\rightarrow$30 and 30$\rightarrow$39 \\
VGG19  & 17$\rightarrow$26 and 39$\rightarrow$43 & (w/o dim.\ norm.) 26$\rightarrow$30 and 39$\rightarrow$43 \\
\bottomrule
\end{tabularx}
\end{table}

\begin{algorithm}[t]
\caption{Locating the Golden Partition Zone (dimension-normalized)}
\label{al}
\begin{algorithmic}[1]
\Require Input batch $X=\{x_i\}_{i=1}^B$, labels $\{y_i\}$, edge trunk $f_{\mathrm{edge}}$, target layer list $L_{\mathrm{target}}=[l_1,\dots,l_T]$ (shallow$\to$deep), threshold $\tau$ (e.g., $20\%$)
\Ensure Transition zone: from $l_{\mathrm{TS}}$ to $l_{\mathrm{TP}}$

\Function{Extract}{$f_{\mathrm{edge}}, X, l$}
    \State \Return representations $Z^{(l)}=\{z_i^{(l)}\}_{i=1}^B$ at layer $l$ \Comment{e.g., via forward hooks}
\EndFunction

\For{$t=1$ to $T$}
    \State $l \gets l_t$
    \State $Z^{(l)} \gets \Call{Extract}{f_{\mathrm{edge}}, X, l}$
    \State $d_l \gets \mathrm{dim}(z^{(l)})$ \Comment{flattened representation dimension}

    \If{All samples belong to the same class $c$}
        \State $\mu_c^{(l)} \gets \frac{1}{B}\sum_{i=1}^B z_i^{(l)}$
        \State $R_c^{2(l)} \gets \frac{1}{B}\sum_{i=1}^B \|z_i^{(l)}-\mu_c^{(l)}\|^2$
    \Else
        \State Initialize $S \gets 0$, $C \gets$ number of classes present in the batch
        \For{each class $c$ in the batch}
            \State $\mathcal{I}_c \gets \{i:\, y_i=c\}$, $N_c \gets |\mathcal{I}_c|$
            \State $\mu_c^{(l)} \gets \frac{1}{N_c}\sum_{i\in \mathcal{I}_c} z_i^{(l)}$
            \State $R_c^{2(l)} \gets \frac{1}{N_c}\sum_{i\in \mathcal{I}_c} \|z_i^{(l)}-\mu_c^{(l)}\|^2$
            \State $S \gets S + R_c^{2(l)}$
        \EndFor
        \State $R^{2(l)} \gets \frac{1}{C}S$ \Comment{class-averaged intra-class radius}
    \EndIf

    \State $\widetilde{R}^{2(l)} \gets \frac{R^{2(l)}}{d_l}$ \Comment{dimension normalization}
\EndFor

\For{$t=2$ to $T$}
    \State $l_{\mathrm{prev}} \gets l_{t-1}$,\; $l_{\mathrm{cur}} \gets l_t$
    \State $R_{\mathrm{drop}}^{2(l_t)} \gets \frac{\widetilde{R}^{2(l_{\mathrm{prev}})}-\widetilde{R}^{2(l_{\mathrm{cur}})}}{\widetilde{R}^{2(l_{\mathrm{prev}})}}\times 100\%$
\EndFor

\State $l_{\mathrm{TP}} \gets \arg\max_{t\in\{2,\dots,T\}} R_{\mathrm{drop}}^{2(l_t)}$
\State $l_{\mathrm{TS}} \gets \arg\max_{t<\mathrm{index}(l_{\mathrm{TP}})} R_{\mathrm{drop}}^{2(l_t)}$ \Comment{largest drop before $l_{\mathrm{TP}}$}

\State $L_{\mathrm{intermediate}} \gets$ layers between $l_{\mathrm{TS}}$ and $l_{\mathrm{TP}}$ in $L_{\mathrm{target}}$
\If{$\forall l\in L_{\mathrm{intermediate}},\; |R_{\mathrm{drop}}^{2(l)}|<\tau$}
    \State \Return transition at $l_{\mathrm{TP}}$
\Else
    \State \Return transition zone from $l_{\mathrm{TS}}$ to $l_{\mathrm{TP}}$
\EndIf
\end{algorithmic}
\end{algorithm}

The layer attaining the maximum $R_{\mathrm{drop}}^{2(l)}$ is selected as the transition-peak layer $l_{\mathrm{TP}}$.
The transition-start layer $l_{\mathrm{TS}}$ is chosen as the largest drop preceding $l_{\mathrm{TP}}$ in the target-layer list, so that $l_{\mathrm{TS}}$ is temporally consistent with the peak.

Finally, the algorithm inspects the intermediate layers between $l_{\mathrm{TS}}$ and $l_{\mathrm{TP}}$.
If the drop magnitudes of all intermediate layers stay below a threshold (e.g., $20\%$), the transition is treated as localized at $l_{\mathrm{TP}}$; otherwise, the transition zone is defined as the contiguous span from $l_{\mathrm{TS}}$ to $l_{\mathrm{TP}}$.

The feasibility of the proposed algorithm is further validated experimentally.
Table~\ref{tab:transition_zone_comparison} compares transition zones identified by empirical inspection with those obtained automatically via $R_c^2$.
The results show that the active procedure is consistent with empirical observations on IR-152, and for VGG19 it yields a more focused subset within the empirically verified transition range, providing a systematic and reproducible localization.
Importantly, without dimension normalization, the drop score can be biased by changes in representation dimensionality, causing the algorithm to select layers where $d_l$ decreases rather than where the class-conditional geometry truly contracts.
After normalizing by $d_l$, the method instead captures the pronounced $R_c^2$ contraction within the constant-dimensional block (e.g., layers 30$\rightarrow$39), which better reflects the intended decision-representation transition.

\paragraph{Stability checks.}
We further perform three stability checks to ensure that the automatically located zone is not an artifact of randomness:
(i) \textbf{Sampling stability.} Under a fixed trained checkpoint, rerunning Algorithm~\ref{al} with different evaluation seeds (thus different shuffled mini-batches) yields identical transition indices (agreement $3/3$, pairwise Jaccard $=1.00$), indicating robustness to batch sampling noise.
(ii) \textbf{Training-seed stability.} Training the target model in multiple independent runs with different random seeds (including shuffled mini-batch orders) leads to consistent located zones across checkpoints, suggesting robustness to typical training stochasticity.
(iii) \textbf{Condition stability w.r.t. label smoothing.} The located zone remains unchanged between standard training and label smoothing, indicating that the transition is not driven by this training-condition change.

The implementation of the algorithm is also publicly available at \textit{\url{https://github.com/GoldenPartitionZone/GoldenPartitionZone}}.

\section{First-order change of $R_c^2$ under one-hot and label smoothing}
\label{appE}

We analyze a virtual gradient-descent step on the representation $z$
to characterize how backpropagated gradients reshape intra-class compactness.
This is a standard first-order sensitivity analysis: although SGD updates the network parameters,
its effect on the current-layer representations can be locally captured by the backpropagated gradient
$\tilde g_i:=\partial \mathcal{L}/\partial z_i$. As such, the analysis characterizes the instantaneous local drift of $R_c^2$,
while the long-term training behavior corresponds to the accumulation of such local effects.

\paragraph{Setup and notation.}
Fix a class $c$ and let $\mathcal I_c=\{i:\, y_i=e_c\}$ be the index set of samples in this class,
with $N_c=|\mathcal I_c|$.
At a given layer $\ell$ (omitted for brevity), each representation satisfies $z_i\in\mathbb{R}^{d_\ell}$, and we define
\begin{equation}
\mu_c := \frac{1}{N_c}\sum_{i\in\mathcal I_c} z_i,
\qquad
R_c^{2} := \frac{1}{N_c}\sum_{i\in\mathcal I_c}\|z_i-\mu_c\|^{2}.
\label{eq:E_def_mu_R}
\end{equation}
Consider one gradient step on $z_i$ with step size (learning rate) $\gamma>0$:
\begin{equation}
z_i^{+} = z_i - \gamma\,\tilde g_i,
\qquad
\tilde g_i := \frac{\partial \mathcal{L}}{\partial z_i}.
\label{eq:E_update_z}
\end{equation}
The class center updates accordingly:
\begin{equation}
\mu_c^{+}
= \frac{1}{N_c}\sum_{i\in\mathcal I_c} z_i^{+}
= \mu_c - \gamma\,\tilde g_c,
\qquad
\tilde g_c := \frac{1}{N_c}\sum_{i\in\mathcal I_c}\tilde g_i.
\label{eq:E_update_mu}
\end{equation}

\paragraph{Step 1: expand the updated radius.}
By definition,
\begin{equation}
R_c^{2+}
:= \frac{1}{N_c}\sum_{i\in\mathcal I_c}\|z_i^{+}-\mu_c^{+}\|^{2}.
\label{eq:E_def_Rplus}
\end{equation}
Substituting \eqref{eq:E_update_z}--\eqref{eq:E_update_mu} gives
\[
z_i^{+}-\mu_c^{+}
= (z_i-\mu_c) - \gamma(\tilde g_i-\tilde g_c).
\]
Hence,
\begin{equation}
R_c^{2+}
= \frac{1}{N_c}\sum_{i\in\mathcal I_c}
\bigl\|(z_i-\mu_c) - \gamma(\tilde g_i-\tilde g_c)\bigr\|^{2}.
\label{eq:E_Rplus_expand0}
\end{equation}

\paragraph{Step 2: isolate the first-order term in $\gamma$.}
Using $\|a-\gamma b\|^{2}=\|a\|^{2}-2\gamma\,a^\top b+\gamma^2\|b\|^{2}$ with
$a=(z_i-\mu_c)$ and $b=(\tilde g_i-\tilde g_c)$ in \eqref{eq:E_Rplus_expand0}, we obtain
\begin{align}
R_c^{2+}
&= \frac{1}{N_c}\sum_{i\in\mathcal I_c}
\Bigl(
\|z_i-\mu_c\|^{2}
-2\gamma\,(z_i-\mu_c)^{\!\top}(\tilde g_i-\tilde g_c)
+\gamma^{2}\|\tilde g_i-\tilde g_c\|^{2}
\Bigr) \nonumber\\
&= R_c^{2}
-\frac{2\gamma}{N_c}\sum_{i\in\mathcal I_c}(z_i-\mu_c)^{\!\top}(\tilde g_i-\tilde g_c)
+\frac{\gamma^{2}}{N_c}\sum_{i\in\mathcal I_c}\|\tilde g_i-\tilde g_c\|^{2}.
\label{eq:E_Rplus_expand1}
\end{align}
Now expand the linear term:
\begin{align}
\sum_{i\in\mathcal I_c}(z_i-\mu_c)^{\!\top}(\tilde g_i-\tilde g_c)
&=
\sum_{i\in\mathcal I_c}(z_i-\mu_c)^{\!\top}\tilde g_i
-\Bigl(\sum_{i\in\mathcal I_c}(z_i-\mu_c)\Bigr)^{\!\top}\tilde g_c \nonumber\\
&=
\sum_{i\in\mathcal I_c}(z_i-\mu_c)^{\!\top}\tilde g_i,
\label{eq:E_center_cancels}
\end{align}
where we used the identity $\sum_{i\in\mathcal I_c}(z_i-\mu_c)=0$.
Combining \eqref{eq:E_Rplus_expand1}--\eqref{eq:E_center_cancels}, the radius increment satisfies
\begin{equation}
\Delta R_c^{2}
:= R_c^{2+}-R_c^{2}
=
-\frac{2\gamma}{N_c}\sum_{i\in\mathcal I_c}(z_i-\mu_c)^{\!\top}\tilde g_i
\;+\;O(\gamma^{2}).
\label{eq:E_DeltaRc_firstorder}
\end{equation}
We denote the first-order term by
\begin{equation}
\Delta R_{c,\mathrm{1st}}^{2}
:=
-\frac{2\gamma}{N_c}\sum_{i\in\mathcal I_c}(z_i-\mu_c)^{\!\top}\tilde g_i,
\qquad
\Delta R_c^{2}=\Delta R_{c,\mathrm{1st}}^{2}+O(\gamma^{2}).
\label{eq:E_DeltaRc_1st_def}
\end{equation}

\paragraph{Step 3: express $\tilde g_i$ via logits and Jacobian.}
Let $o_i\in\mathbb{R}^{K}$ be the logits and $p_i=\mathrm{softmax}(o_i)$.
Define the Jacobian $J_i:=\partial o_i/\partial z_i\in\mathbb{R}^{K\times d_\ell}$.
For any loss whose logit-gradient is $\delta_i := \partial\mathcal{L}/\partial o_i \in\mathbb{R}^K$,
the chain rule gives
\begin{equation}
\tilde g_i = \frac{\partial\mathcal{L}}{\partial z_i} = J_i^{\!\top}\delta_i.
\label{eq:E_chainrule}
\end{equation}
We further define the projection terms
\begin{equation}
T_{\mathrm{corr},i}:=(z_i-\mu_c)^{\!\top}J_i^{\!\top}e_c,
\qquad
T_{k,i}:=(z_i-\mu_c)^{\!\top}J_i^{\!\top}e_k\ (k\neq c),
\label{eq:E_T_defs}
\end{equation}
where $\{e_k\}_{k=1}^K$ are standard basis vectors in $\mathbb{R}^K$.

\subsection{One-hot supervision}
For a sample in class $c$, the cross-entropy logit-gradient is
\[
\delta_i = p_i - y_i = p_i - e_c,
\quad\text{so that}\quad
\delta_{ic}=p_{ic}-1,\ \ \delta_{ik}=p_{ik}\ (k\neq c).
\]
Therefore, \eqref{eq:E_chainrule} becomes
\begin{equation}
\tilde g_i
= J_i^{\!\top}(p_i-e_c)
= (p_{ic}-1)\,J_i^{\!\top}e_c + \sum_{k\neq c} p_{ik}\,J_i^{\!\top}e_k.
\label{eq:E_onehot_g}
\end{equation}
Substituting \eqref{eq:E_onehot_g} into \eqref{eq:E_DeltaRc_1st_def} and using \eqref{eq:E_T_defs} yields
\begin{align}
\Delta R_{c,\mathrm{1st}}^{2}(\text{one-hot})
&=
-\frac{2\gamma}{N_c}\sum_{i\in\mathcal I_c}(z_i-\mu_c)^{\!\top}
\Bigl[(p_{ic}-1)\,J_i^{\!\top}e_c + \sum_{k\neq c} p_{ik}\,J_i^{\!\top}e_k\Bigr] \nonumber\\
&=
-\frac{2\gamma}{N_c}\sum_{i\in\mathcal I_c}
\Bigl[(p_{ic}-1)\,T_{\mathrm{corr},i} + \sum_{k\neq c} p_{ik}\,T_{k,i}\Bigr].
\label{eq:E_DeltaRc_onehot_final}
\end{align}
This is the first-order expression reported in the main text (up to $O(\gamma^2)$).

\subsection{Label smoothing}
Under label smoothing, for a sample in class $c$ the smoothed target is
\[
y'_{ic}=1-\alpha,\qquad y'_{ik}=\frac{\alpha}{K-1}\ (k\neq c),
\]
and the logit-gradient becomes
\[
\tilde\delta_i = p_i - y'_i,
\quad\text{so that}\quad
\tilde\delta_{ic}=p_{ic}-1+\alpha,\ \ 
\tilde\delta_{ik}=p_{ik}-\frac{\alpha}{K-1}\ (k\neq c).
\]
Hence,
\begin{equation}
\tilde g_i^{\mathrm{LS}}
= J_i^{\!\top}\tilde\delta_i
= (p_{ic}-1+\alpha)\,J_i^{\!\top}e_c
+ \sum_{k\neq c}\Bigl(p_{ik}-\frac{\alpha}{K-1}\Bigr)\,J_i^{\!\top}e_k.
\label{eq:E_LS_g}
\end{equation}
Substituting \eqref{eq:E_LS_g} into \eqref{eq:E_DeltaRc_1st_def} gives
\begin{align}
\Delta R_{c,\mathrm{1st}}^{2}(\text{LS})
&=
-\frac{2\gamma}{N_c}\sum_{i\in\mathcal I_c}(z_i-\mu_c)^{\!\top}
\Bigl[(p_{ic}-1+\alpha)\,J_i^{\!\top}e_c
+ \sum_{k\neq c}\Bigl(p_{ik}-\frac{\alpha}{K-1}\Bigr)\,J_i^{\!\top}e_k\Bigr] \nonumber\\
&=
-\frac{2\gamma}{N_c}\sum_{i\in\mathcal I_c}
\Bigl[(p_{ic}-1+\alpha)\,T_{\mathrm{corr},i}
+ \sum_{k\neq c}\Bigl(p_{ik}-\frac{\alpha}{K-1}\Bigr)\,T_{k,i}\Bigr],
\label{eq:E_DeltaRc_LS_final}
\end{align}
which recovers Eq.~(\ref{eq:8}) in the main text under label smoothing, up to $O(\gamma^2)$.

\paragraph{Remark (first-order nature).}
Equations \eqref{eq:E_DeltaRc_onehot_final} and \eqref{eq:E_DeltaRc_LS_final} characterize the
first-order change of $R_c^2$ induced by backpropagated gradients.
All omitted terms are of order $O(\gamma^2)$ and vanish for sufficiently small step size.

\subsection{Prior-weighted smoothing~\cite{szegedy2016rethinking} as a variant of label smoothing}
Let $\pi\in\mathbb{R}^K$ denote the class prior of the training data, with $\pi_k\ge 0$ and
$\sum_{k=1}^K \pi_k = 1$. For a sample in class $c$, define the normalized off-class prior
\[
\rho^{(c)}_k := \frac{\pi_k}{1-\pi_c}\ \ (k\neq c),
\qquad
\rho^{(c)}_c := 0,
\]
so that $\sum_{k\neq c}\rho^{(c)}_k = 1$.
Class-prior smoothing replaces the uniform off-class mass in label smoothing by this prior-weighted
distribution. The smoothed target is
\[
y^{\pi}_{ic}=1-\alpha,\qquad y^{\pi}_{ik}=\alpha\,\rho^{(c)}_k\ \ (k\neq c).
\]
The logit-gradient becomes
\[
\delta^{\pi}_i = p_i - y^{\pi}_i,
\quad\text{so that}\quad
\delta^{\pi}_{ic}=p_{ic}-1+\alpha,\ \ 
\delta^{\pi}_{ik}=p_{ik}-\alpha\,\rho^{(c)}_k\ \ (k\neq c).
\]
Hence,
\begin{equation}
\tilde g_i^{\pi}
= J_i^{\!\top}\delta^{\pi}_i
= (p_{ic}-1+\alpha)\,J_i^{\!\top}e_c
+ \sum_{k\neq c}\Bigl(p_{ik}-\alpha\,\rho^{(c)}_k\Bigr)\,J_i^{\!\top}e_k.
\label{eq:E_prior_g}
\end{equation}
Substituting \eqref{eq:E_prior_g} into \eqref{eq:E_DeltaRc_1st_def} and using \eqref{eq:E_T_defs} yields
\begin{align}
\Delta R_{c,\mathrm{1st}}^{2}(\text{prior})
&=
-\frac{2\gamma}{N_c}\sum_{i\in\mathcal I_c}
\Bigl[(p_{ic}-1+\alpha)\,T_{\mathrm{corr},i}
+ \sum_{k\neq c}\Bigl(p_{ik}-\alpha\,\rho^{(c)}_k\Bigr)\,T_{k,i}\Bigr].
\label{eq:E_DeltaRc_prior_final}
\end{align}
Uniform label smoothing is recovered as a special case when $\pi$ is uniform, since then
$\rho^{(c)}_k = 1/(K-1)$ for all $k\neq c$.

\paragraph{Remark: an $O(\alpha)$ prediction-independent residual under target-distribution reshaping.}
Appendix~E.2 and E.3 are instances of a broader family of target reshaping schemes that replace the
one-hot target $y_i=e_c$ by a softened target $q^{(c)}$ satisfying
$q^{(c)}_c=1-\alpha$ and $\sum_{k\neq c} q^{(c)}_k=\alpha$.
Equivalently, $q^{(c)}$ can be written as
\begin{equation}
q^{(c)} \;=\; (1-\alpha)e_c + \alpha\,r^{(c)},
\qquad
r^{(c)}_c=0,\quad \sum_{k\neq c} r^{(c)}_k=1,
\label{eq:E_general_soft_target}
\end{equation}
where $r^{(c)}$ specifies how the off-class mass is distributed and is independent of the model
prediction $p_i$.
The logit residual then admits the exact decomposition
\begin{equation}
\delta_i^{\mathrm{soft}}
:= p_i - q^{(c)}
= (p_i - e_c) + (e_c - q^{(c)})
= \delta_i^{\mathrm{onehot}} + b^{(c)},
\qquad
b^{(c)} := e_c - q^{(c)} = \alpha\,(e_c-r^{(c)}),
\label{eq:E_soft_residual_decomp_general}
\end{equation}
where $b^{(c)}$ is an $O(\alpha)$ vector that does not depend on $p_i$.
Consequently, in the high-confidence regime where $p_i$ is close to $e_c$, the one-hot residual
$\delta_i^{\mathrm{onehot}}$ becomes small, while $\delta_i^{\mathrm{soft}}$ approaches the non-vanishing
offset $b^{(c)}$.

Substituting \eqref{eq:E_soft_residual_decomp_general} into \eqref{eq:E_chainrule} and
\eqref{eq:E_DeltaRc_1st_def} yields
\begin{align}
\Delta R_{c,\mathrm{1st}}^{2}(\mathrm{soft})
&=
\Delta R_{c,\mathrm{1st}}^{2}(\mathrm{onehot})
-\frac{2\gamma}{N_c}\sum_{i\in\mathcal I_c}(z_i-\mu_c)^{\!\top}J_i^{\!\top} b^{(c)} \nonumber\\
&=
\Delta R_{c,\mathrm{1st}}^{2}(\mathrm{onehot})
-\frac{2\gamma\alpha}{N_c}\sum_{i\in\mathcal I_c}
\Bigl[T_{\mathrm{corr},i} - \sum_{k\neq c} r^{(c)}_k\,T_{k,i}\Bigr].
\label{eq:E_DeltaRc_soft_general}
\end{align}
Thus, any prediction-independent target reshaping of the form \eqref{eq:E_general_soft_target}
adds an extra first-order drive term of order $O(\alpha)$ to the $R_c^2$ dynamics through the same
projection terms $T_{\mathrm{corr},i}$ and $T_{k,i}$.
This explains why such schemes can continue to influence the $R_c^2$ dynamics even when one-hot
training would yield a near-zero logit residual.

\subsection{Angle interpretation of $T_{\mathrm{corr},i}$}
\label{E.4}

We present a geometric interpretation that links the sign of $T_{\mathrm{corr},i}$ to a prevailing phenomenon in supervised classification: within each class, representations often drift inward toward the class center as training progresses.
Intuitively, because all samples in class $c$ are trained against the same target distribution $q^{(c)}$, their correct-class residuals share a common structure, making the induced representation updates more coherent within the class; this coherence encourages feature homogenization and an inward drift toward $\mu_c$.

\paragraph{Geometry of $T_{\mathrm{corr},i}$.}
Let $r_i:=z_i-\mu_c$ denote the outward direction from the class center.
Define $\theta_{c,i}$ as the angle between the correct-logit increasing direction $J_i^{\!\top}e_c$ and $r_i$.
Then
\[
T_{\mathrm{corr},i}
= r_i^{\!\top}J_i^{\!\top}e_c
= \|r_i\|\,\|J_i^{\!\top}e_c\|\,\cos\theta_{c,i}.
\]
Thus, $\theta_{c,i}>90^\circ$ is equivalent to $T_{\mathrm{corr},i}<0$: increasing the correct logit has a
negative radial projection and therefore points partly inward, toward $\mu_c$.

\paragraph{Correct-class update and its radial projection.}
Recall the virtual representation update $z_i^{+}-z_i=-\gamma \tilde g_i$.
Under label smoothing with $q^{(c)}_c=1-\alpha$, the correct-class component is
\[
-\gamma\,(p_{ic}-q^{(c)}_c)\,J_i^{\!\top}e_c
=
-\gamma\,(p_{ic}-1+\alpha)\,J_i^{\!\top}e_c.
\]
Projecting this component onto $r_i$ yields the radial-change identity
\begin{equation}
\label{eq:E4_radial_proj}
r_i^{\!\top}(z_i^{+}-z_i)
=
-\gamma\,(p_{ic}-1+\alpha)\,T_{\mathrm{corr},i}.
\end{equation}
The prevailing inward tendency corresponds to requiring $r_i^{\!\top}(z_i^{+}-z_i)<0$.

\paragraph{Under-confidence: $p_{ic}<1-\alpha$.}
In the one-hot case, $q^{(c)}_c=1$ so $p_{ic}-q^{(c)}_c=p_{ic}-1<0$ always.
Under label smoothing, the same sign holds in the under-confident regime $p_{ic}<1-\alpha$, where
$p_{ic}-1+\alpha<0$.
In both situations, the correct-class term moves along $+J_i^{\!\top}e_c$, increasing the correct logit.
Consistency with inward drift requires $r_i^{\!\top}(z_i^{+}-z_i)<0$; by \eqref{eq:E4_radial_proj} and
$p_{ic}-1+\alpha<0$, this implies $T_{\mathrm{corr},i}<0$, equivalently $\theta_{c,i}>90^\circ$.
Geometrically, when the model is under-confident, increasing the correct logit is compatible with an inward move
only if the correct-logit direction points against the outward radial direction.

\paragraph{Over-confidence: $p_{ic}>1-\alpha$.}
In the over-confident regime, $p_{ic}-1+\alpha>0$ and the correct-class term can be rewritten as
\[
-\gamma\,(p_{ic}-1+\alpha)\,J_i^{\!\top}e_c
=
\gamma\,(p_{ic}-1+\alpha)\,\bigl(-J_i^{\!\top}e_c\bigr),
\]
so the update moves along $-J_i^{\!\top}e_c$, decreasing the correct logit and calibrating over-confidence.
Requiring the same inward drift $r_i^{\!\top}(z_i^{+}-z_i)<0$ and using \eqref{eq:E4_radial_proj} with
$p_{ic}-1+\alpha>0$ now yields $T_{\mathrm{corr},i}>0$, equivalently $\theta_{c,i}<90^\circ$.
Therefore the threshold $p_{ic}=1-\alpha$ flips which angle alignment is compatible with inward motion:
under-confidence prefers $\theta_{c,i}>90^\circ$, whereas over-confidence prefers $\theta_{c,i}<90^\circ$.
Importantly, this flip does not imply an expansion of $R_c^2$; it only changes whether inward motion is achieved
by moving along $+J_i^{\!\top}e_c$ or along $-J_i^{\!\top}e_c$.

\paragraph{High-confidence limit.}
In the high-confidence limit $p_i\to e_c$, \eqref{eq:E_DeltaRc_soft_general} shows that soft targets retain a
non-vanishing first-order drive:
\[
\Delta R_{c,\mathrm{1st}}^{2}(\mathrm{soft})
\approx
-\frac{2\gamma\alpha}{N_c}\sum_{i\in\mathcal I_c}
\Bigl[T_{\mathrm{corr},i}-\sum_{k\neq c} r^{(c)}_k\,T_{k,i}\Bigr].
\]
In this regime, $p_{ic}>1-\alpha$ typically holds, so inward drift is naturally compatible with
$\theta_{c,i}<90^\circ$ (i.e., $T_{\mathrm{corr},i}>0$).
Moreover, the weighted off-class term is a convex average over many directions and its radial projections tend
to cancel out, so it rarely dominates the bracketed quantity.
Consequently, the bracket typically remains positive, the first-order drive stays negative, and $R_c^2$ can keep
decreasing even when $p_{ic}$ is close to $1$.

\paragraph{Remark ($\alpha<0$).}
When $\alpha<0$, the soft target satisfies $q^{(c)}_c=1-\alpha>1$ and $q^{(c)}_k=\alpha/(K-1)<0$ for $k\neq c$,
so the regime $p_{ic}>1-\alpha$ cannot occur since $p_{ic}\le 1$.
Equivalently, the correct-class residual $\tilde\delta_{ic}=p_{ic}-1+\alpha$ is always negative.
Under inward drift, \eqref{eq:E4_radial_proj} then requires $T_{\mathrm{corr},i}<0$ (i.e., $\theta_{c,i}>90^\circ$),
so the sign flip discussed above disappears.

\section{Gradient-norm lower bound under label smoothing}
\label{appF}

This appendix formalizes why label smoothing can prevent the logit-level residuals from vanishing prematurely along over-confident trajectories,
and consequently keeps the feature-side gradients from collapsing too early in the late stage of training.
Throughout, we use the standard LS targets:
for a sample with true class $c$, $y'_c = 1-\alpha$ and $y'_k=\alpha/(K-1)$ for all $k\neq c$.

\paragraph{Setup.}
Let $o \in \mathbb{R}^K$ be the logit vector at the cloud side and
$p=\mathrm{softmax}(o)$ the predicted probabilities.
Let $z_\ell\in\mathbb{R}^{d_\ell}$ denote the representation at layer $\ell$ and define the Jacobian
\[
J_\ell := \frac{\partial o}{\partial z_\ell}\in\mathbb{R}^{K\times d_\ell}.
\]
For cross-entropy loss, the residual at the logit level equals
\[
\delta := \frac{\partial \mathcal{L}_{\mathrm{CE}}}{\partial o} = p - y,
\qquad
\tilde\delta := \frac{\partial \mathcal{L}_{\mathrm{LS}}}{\partial o} = p - y'.
\]
The corresponding gradients with respect to the feature representation are
\[
g_{\ell} := \frac{\partial \mathcal{L}_{\mathrm{CE}}}{\partial z_\ell} = J_\ell^{\!\top}\delta,
\qquad
\tilde g_{\ell} := \frac{\partial \mathcal{L}_{\mathrm{LS}}}{\partial z_\ell} = J_\ell^{\!\top}\tilde\delta.
\]

\paragraph{Step 1: a singular-value lower bound.}
Let the SVD of $J_\ell$ be $J_\ell = U\Sigma V^{\!\top}$,
where $\Sigma=\mathrm{diag}(\sigma_1,\dots,\sigma_r)$ with
$\sigma_1\ge \cdots \ge \sigma_r>0$ and $r=\mathrm{rank}(J_\ell)$.
For any vector $v\in\mathbb{R}^K$ that lies in the column space of $J_\ell$, one has
\begin{equation}
\|J_\ell^{\!\top}v\|_2
= \|\Sigma U^{\!\top}v\|_2
\;\ge\; \sigma_r\,\|v\|_2 .
\label{eq:F1}
\end{equation}
More generally, for an arbitrary $v\in\mathbb{R}^K$,
\begin{equation}
\|J_\ell^{\!\top}v\|_2 \;\ge\; \sigma_r\,\|\Pi_{\mathrm{col}(J_\ell)}v\|_2,
\label{eq:F1_proj}
\end{equation}
where $\Pi_{\mathrm{col}(J_\ell)}$ is the orthogonal projection onto $\mathrm{col}(J_\ell)$.
In particular, if $\sigma_r$ does not collapse to zero, then a non-vanishing component of $v$
inside $\mathrm{col}(J_\ell)$ implies a non-vanishing lower bound on $\|J_\ell^{\!\top}v\|_2$.

\paragraph{Step 2: residual norms on an over-confident trajectory.}
For a sample whose true class is $c$, define the total off-class mass
\[
\varepsilon := 1-p_c = \sum_{k\neq c} p_k \in [0,1].
\]
Under label smoothing, $\tilde\delta=p-y'$ satisfies
\[
\|\tilde\delta\|_2^2
= (p_c-1+\alpha)^2
+ \sum_{k\neq c}\Bigl(p_k-\tfrac{\alpha}{K-1}\Bigr)^2
= (\alpha-\varepsilon)^2 + \sum_{k\neq c}\Bigl(p_k-\tfrac{\alpha}{K-1}\Bigr)^2 .
\]
Using Cauchy--Schwarz,
\[
\sum_{k\neq c}\Bigl(p_k-\tfrac{\alpha}{K-1}\Bigr)^2
\;\ge\; \frac{1}{K-1}\Bigl(\sum_{k\neq c} p_k - \alpha\Bigr)^2
= \frac{(\varepsilon-\alpha)^2}{K-1}.
\]
Therefore,
\begin{equation}
\|\tilde\delta\|_2^2
\;\ge\; (\alpha-\varepsilon)^2\Bigl(1+\frac{1}{K-1}\Bigr)
= (\alpha-\varepsilon)^2 \frac{K}{K-1},
\qquad
\|\tilde\delta\|_2 \;\ge\; |\alpha-\varepsilon|\sqrt{\frac{K}{K-1}} .
\label{eq:F3_eps}
\end{equation}

\noindent
\textit{Interpretation.}
When the model is over-confident relative to the LS target, namely $\varepsilon<\alpha$,
Eq.~\eqref{eq:F3_eps} gives a strictly positive lower bound
$\|\tilde\delta\|_2 \ge (\alpha-\varepsilon)\sqrt{K/(K-1)}$.
Along the limiting one-hot trajectory where $\varepsilon\to 0$, this yields
$\|\tilde\delta\|_2 \ge c_\alpha := \alpha\sqrt{K/(K-1)}$.
At the LS optimum where $p\to y'$, one has $\varepsilon\to\alpha$ and thus $\tilde\delta\to 0$ as expected.

For completeness, under one-hot supervision one has $\delta=p-e_c$, hence
\begin{equation}
\|\delta\|_2^2 = (p_c-1)^2 + \sum_{k\neq c}p_k^2
= \varepsilon^2 + \sum_{k\neq c}p_k^2
\le \varepsilon^2 + \Bigl(\sum_{k\neq c}p_k\Bigr)^2
= 2\varepsilon^2,
\qquad
\|\delta\|_2 \le \sqrt{2}\,\varepsilon.
\label{eq:F3_onehot_eps}
\end{equation}
Thus, as the model becomes increasingly over-confident in the one-hot sense and $\varepsilon\to 0$,
the logit residual $\delta$ necessarily vanishes.

\paragraph{Step 3: feature-gradient norms in the over-confident regime.}
Applying \eqref{eq:F1_proj} with $v=\delta$ and with $v=\tilde\delta$ yields the generic lower bounds
\[
\|g_{\ell}\|_2
= \|J_\ell^{\!\top}\delta\|_2
\;\ge\; \sigma_r\,\|\Pi_{\mathrm{col}(J_\ell)}\delta\|_2,
\qquad
\|\tilde g_{\ell}\|_2
= \|J_\ell^{\!\top}\tilde\delta\|_2
\;\ge\; \sigma_r\,\|\Pi_{\mathrm{col}(J_\ell)}\tilde\delta\|_2,
\]
where $\sigma_r>0$ is the smallest non-zero singular value of $J_\ell$ and
$\Pi_{\mathrm{col}(J_\ell)}$ is the orthogonal projection onto $\mathrm{col}(J_\ell)$.

To make the lower bounds interpretable, define the projection retention factors
\[
\kappa_{\ell,i}^{\mathrm{CE}} \;:=\; \frac{\|\Pi_{\mathrm{col}(J_\ell)}\delta\|_2}{\|\delta\|_2}\in[0,1],
\qquad
\kappa_{\ell,i}^{\mathrm{LS}} \;:=\; \frac{\|\Pi_{\mathrm{col}(J_\ell)}\tilde\delta\|_2}{\|\tilde\delta\|_2}\in[0,1],
\]
with the convention that the ratio is set to $0$ when the denominator is $0$.
Let $\Pi:=\Pi_{\mathrm{col}(J_\ell)}$. Since $\Pi$ is an orthogonal projection, the decompositions
$\delta=\Pi\delta+(I-\Pi)\delta$ and $\tilde\delta=\Pi\tilde\delta+(I-\Pi)\tilde\delta$ are Pythagorean,
hence $\|\Pi\delta\|_2\le\|\delta\|_2$ and $\|\Pi\tilde\delta\|_2\le\|\tilde\delta\|_2$, which implies
$\kappa_{\ell,i}^{\mathrm{CE}},\kappa_{\ell,i}^{\mathrm{LS}}\in[0,1]$.

\medskip
\noindent\textbf{Lower bound in the over-confident regime.}
Combining \eqref{eq:F1_proj} with \eqref{eq:F3_eps}, for any sample that is over-confident relative to the LS target
(i.e., $\varepsilon<\alpha$, equivalently $p_c>1-\alpha$), we have
\begin{equation}
\|\tilde g_{\ell}\|_2
\;\ge\; \sigma_r\,\|\Pi\tilde\delta\|_2
\;=\; \sigma_r\,\kappa_{\ell,i}^{\mathrm{LS}}\,\|\tilde\delta\|_2
\;\ge\; \sigma_r\,\kappa_{\ell,i}^{\mathrm{LS}}\,(\alpha-\varepsilon)\sqrt{\frac{K}{K-1}}.
\label{eq:F4_eps}
\end{equation}
Thus, unless $J_\ell$ becomes ill-conditioned (so that $\sigma_r$ collapses) or the projection degenerates
(so that $\kappa_{\ell,i}^{\mathrm{LS}}\approx 0$), label smoothing yields an explicit non-vanishing feature-gradient
lower bound throughout the over-confident regime $\varepsilon<\alpha$.

\medskip
\noindent\textbf{Gradient collapse as $\varepsilon\to 0$.}
To show collapse under one-hot supervision, we use an \emph{upper} bound via the operator norm:
\begin{equation}
\|g_{\ell}\|_2
= \|J_\ell^{\!\top}\delta\|_2
\;\le\; \|J_\ell^{\!\top}\|_2\,\|\delta\|_2
= \|J_\ell\|_2\,\|\delta\|_2
\;\le\; \sigma_1(J_\ell)\,\sqrt{2}\,\varepsilon,
\label{eq:F4_onehot_eps}
\end{equation}
where $\|J_\ell\|_2=\sigma_1(J_\ell)$ is the largest singular value of $J_\ell$ and we used \eqref{eq:F3_onehot_eps}.
Hence, along an increasingly over-confident one-hot trajectory $\varepsilon\to 0$, the feature-side gradient norm
satisfies $\|g_\ell\|_2\to 0$ provided $\sigma_1(J_\ell)$ remains bounded.

\medskip
\noindent\textbf{Full-row-rank simplification.}
In the common case where $J_\ell$ has full row rank, $\mathrm{col}(J_\ell)=\mathbb{R}^K$ and hence
$\kappa_{\ell,i}^{\mathrm{CE}}=\kappa_{\ell,i}^{\mathrm{LS}}=1$. Eq.~\eqref{eq:F4_eps} then simplifies to
\[
\|\tilde g_{\ell}\|_2 \;\ge\; \sigma_{\min}(J_\ell)\,(\alpha-\varepsilon)\sqrt{\frac{K}{K-1}},
\]
where $\sigma_{\min}(J_\ell)$ denotes the smallest (positive) singular value.

\medskip
\noindent\textbf{Summary.}
One-hot supervision drives the logit residual $\delta$ to zero as $p$ approaches the one-hot target
($\varepsilon\to 0$), and consequently $\|g_\ell\|_2$ vanishes under the mild condition that
$\sigma_1(J_\ell)$ does not blow up.
Label smoothing does not enforce a non-zero gradient at the LS optimum where $p=y'$ (so $\tilde\delta=0$),
but it yields an explicit lower bound on $\|\tilde g_\ell\|_2$ throughout the over-confident regime
$p_c>1-\alpha$ (equivalently $\varepsilon<\alpha$) via \eqref{eq:F3_eps}--\eqref{eq:F4_eps}.
Therefore, LS can keep gradients active during late-stage over-confident training unless $J_\ell$ becomes
ill-conditioned or the projection term degenerates.

\section{Intuitive interpretation for the \textit{Neural Vortex} metaphor}
\label{appG}

This section provides only an intuitive interpretation of the term \textit{Neural Vortex}. The discussion is metaphorical and serves to illustrate the training dynamics analyzed in the main text and Appendices~\ref{appE}--\ref{appF}. It should not be read as a standalone formal mechanism, a physical model of learning, or an independent source of theoretical or empirical evidence.

\paragraph{Metaphor.}
Consider a liquid mixture whose global disorder increases after stirring, which corresponds to a higher entropy state.
Stirring can also induce a coherent local flow structure that is commonly described as a vortex.
Although the mixture becomes globally more mixed, the vortex region can exhibit stronger local organization than its surroundings.
We use this contrast between global entropy increase and local structural organization to motivate the name \textit{Neural Vortex}.

\paragraph{Mapping to our setting.}
In our setting, reshaping the label distribution, for example by label smoothing, increases output-side uncertainty and discourages over-confident predictions.
This corresponds to increasing the entropy of the predictive distribution.
At the same time, the decision-level representations can become more compact, which is reflected by a smaller $R_c^2$.
Here, entropy refers to uncertainty in the output distribution, while local organization refers to geometric tightness in representation space.
This coupling between higher output-side entropy and tighter decision-level geometry is the intuition behind the term \textit{Neural Vortex}.

\section{Supplement for Experiment}
\label{appH}

\subsection{Datasets}
\label{H.1}
We use seven image datasets spanning natural objects, human faces, and handwritten symbols.
All samples are preprocessed to a unified spatial resolution of \(64\times64\) to standardize the
attack and evaluation pipeline across data types.
The data allocation is reported in
Table~\ref{tab:dataset_model_dims}.

\begin{itemize}
    \item \textbf{CIFAR-10~\cite{krizhevsky2009learning}.}
    CIFAR-10 contains 60,000 natural RGB images from 10 classes.
    We used the public \(64\times64\) release\footnote{\url{https://www.kaggle.com/datasets/joaopauloschuler/cifar10-64x64-resized-via-cai-super-resolution}}.
    Moreover, the CAI-super-resolution procedure is not simple pixel duplication or naive interpolation.

    \item \textbf{FaceScrub~\cite{ng2014data}.}
    FaceScrub is a web-collected face dataset with on the order of \(10^5\) RGB images from 530 identities.
    However, since not every URL was available during our data acquisition period, we downloaded a total of 43{,}149 images spanning all 530 individuals.
    We follow the cropping protocol in~\cite{yang2019neural} and resize all images to \(64\times64\).

    \item \textbf{CelebA~\cite{liu2015deep}.}
    CelebA contains 202,599 RGB face images spanning 10,177 identities.
    We adopt the same preprocessing protocol as~\cite{yang2019neural} (cropping and resizing to \(64\times64\)),
    and ensure there is no identity overlap between FaceScrub and CelebA in our setup.

    \item \textbf{MNIST~\cite{lecun1998mnist}.}
    MNIST is a grayscale handwritten digit dataset with 10 classes (\(0\)--\(9\)).

    \item \textbf{EMNIST (Letters)~\cite{cohen2017emnist}.}
    EMNIST extends MNIST-style data to handwritten characters. We use the Letters split with 26 classes.

    \item \textbf{KMNIST (Kuzushiji-49)~\cite{clanuwat2018deep}.}
    We use the Kuzushiji-49 dataset, which contains 49 grayscale classes (48 Kuzushiji characters plus one iteration mark).
    For notational convenience, we refer to Kuzushiji-49 as KMNIST throughout the paper.
\end{itemize}

\subsection{Target Models}
\label{H.2}

In addition to the target models described in the main text, we include \textbf{ResNet-50} as an extra target model.
ResNet-50 is built from bottleneck residual blocks with an explicit channel expansion factor
(\(\texttt{expansion}=4\)). Concretely, each block first projects the input to a lower-dimensional feature space
via a \(1\times 1\) convolution (from \(\texttt{in\_channels}\) to \(\texttt{out\_channels}\)),
processes features with a \(3\times 3\) convolution, and then expands the representation with a final
\(1\times 1\) convolution to \(4\times\texttt{out\_channels}\) channels.
Therefore, the post-addition activation (i.e., the output after the residual addition and ReLU) has a larger
channel dimension than the intermediate bottleneck features.
When the expanded output dimension differs from the identity branch, a downsampling/projection layer is applied
to the skip path to match dimensions before the residual addition.
This within-block dimension expansion provides a structured change in intermediate representation
dimensionality.

\begin{table}[t]
  \centering
  \small
  \caption{Supplementary target-model configuration and intermediate feature dimensions for ResNet-50 on CIFAR-10.}
  \resizebox{0.9\columnwidth}{!}{%
    \begin{tabular}{@{}l l p{0.62\columnwidth}@{}}
      \toprule
      \multicolumn{3}{@{}l@{}}{\textbf{Model allocation and accuracy}} \\
      \midrule
      Task & Model & Accuracy \\
      \midrule
      CIFAR-10 & ResNet-50 & One-hot: 86.11\% \\
      \midrule
      \multicolumn{3}{@{}l@{}}{\textbf{Dimensions of information at different model depths}} \\
      \midrule
      Model & Block & Dimensions \\
      \midrule
      ResNet-50
        & $19,20;\ 21;\ 37,38;\ 39;\ 46,47;\ 48$
        & $(128,8,8);\ (512,8,8);\ (256,4,4);\ (1024,4,4);\ (512,2,2);\ (2048,2,2)$ \\
      \bottomrule
    \end{tabular}%
  }
  \label{tab:resnet50_cifar10_alloc_dims}
\end{table}

\subsection{Inversion Models}
\label{H.3}

Our inversion architecture follows the design principles of~\cite{yang2019neural} and~\cite{zhang2023analysis}, and we adaptively strengthen the inversion capacity
according to the depth and dimensionality of the target representation.
In particular, when the spatial resolution drops to \(4\times4\), the combination of deeper inversion stacks and \texttt{Tanh} activation
can lead to vanishing gradients; to mitigate this issue, we replace \texttt{Tanh} with \texttt{PReLU}~\cite{he2015delving}.
Additionally, we incorporate (i) decoupled enhancement strategies and (ii) an inversion-specific architectural redesign.
The corresponding analysis and ablations are provided in Section~\ref{sec:4.2} (Part~3).
The full inversion architecture is available at
\href{https://github.com/GoldenPartitionZone/GoldenPartitionZone}{GitHub}.

\subsection{Implementation Details}
\label{H.4}

Target models are trained with Adam~\cite{kingma2014adam} using a learning rate of \(3\times10^{-4}\),
except for VGG19 on FaceScrub, which is trained with SGD using a learning rate of \(0.05\).
Inversion models are trained with Adam using \(\beta_1=0.5\) and a learning rate of \(2\times10^{-4}\).
All training employs \texttt{ReduceLROnPlateau}.
Experiments are run on two NVIDIA RTX 4090 GPUs and an Intel Core i9-14900KF CPU (32 threads).

\paragraph{\boldmath{$H(Z)$} enhancement modules.}
To assess whether strengthening intermediate information can improve inversion and to stress-test the robustness of the decision-level zone, we implement three lightweight feature enhancement modules (Table~\ref{tab:ir152_aug}). Given an intermediate representation \(z\in\mathbb{R}^{B\times C\times H\times W}\), we construct an enhanced feature \(\tilde z\) and combine it with \(z\) either by residual addition (\(\tilde z \leftarrow z + \Delta(z)\)) or by channel-wise concatenation (\(\tilde z \leftarrow \mathrm{Concat}(z,\Delta(z))\)), depending on the ablation setting.

\textbf{(i) FFT-based enhancement.}
We compute a 2D orthonormal FFT over spatial dimensions and extract a stabilized spectral-energy feature:
\[
\Delta_{\mathrm{FFT}}(z)
= \mathrm{Norm}\!\Bigl(\log\bigl(1+|\Re(\mathcal{F}(z))|+|\Im(\mathcal{F}(z))|\bigr)\Bigr),
\]
Here \(\mathcal{F}(\cdot)\) denotes the 2D FFT that maps a real feature map to a complex spectrum. We use
\(\Re(\mathcal{F}(z))\) and \(\Im(\mathcal{F}(z))\) to denote its real and imaginary components, respectively,
which correspond to \texttt{fft.real} and \texttt{fft.imag} in our implementation.
Concretely, for each channel \(c\) and spatial frequency index \((u,v)\), the orthonormal 2D FFT coefficient is
\[
\mathcal{F}(z)_{c,u,v}
= \frac{1}{\sqrt{HW}}
\sum_{h=0}^{H-1}\sum_{w=0}^{W-1} z_{c,h,w}\,
\exp\!\Bigl(-j\,2\pi\Bigl(\tfrac{uh}{H}+\tfrac{vw}{W}\Bigr)\Bigr),
\]
so that \(\mathcal{F}(z)_{c,u,v} = A_{c,u,v} + jB_{c,u,v}\) with
\[
\Re(\mathcal{F}(z))_{c,u,v} = A_{c,u,v}, \qquad
\Im(\mathcal{F}(z))_{c,u,v} = B_{c,u,v}.
\]
Equivalently, letting \(\theta_{u,v}(h,w)=2\pi(\tfrac{uh}{H}+\tfrac{vw}{W})\), we have
\[
\Re(\mathcal{F}(z))_{c,u,v}
= \sum_{h,w} z_{c,h,w}\cos\!\bigl(\theta_{u,v}(h,w)\bigr),\quad
\Im(\mathcal{F}(z))_{c,u,v}
= -\sum_{h,w} z_{c,h,w}\sin\!\bigl(\theta_{u,v}(h,w)\bigr).
\]

To avoid scale explosion and distribution shift, we apply per-sample standardization
\(\mathrm{Norm}(u)=(u-\mu(u))/(\sigma(u)+10^{-6})\), where \(\mu(\cdot)\) and \(\sigma(\cdot)\) are computed over \((C,H,W)\).
Unless otherwise stated, we use residual addition \(\tilde z = z + \Delta_{\mathrm{FFT}}(z)\).

\textbf{(ii) Global normalization enhancement.}
We implement a feature normalizer that standardizes \(z\) using pre-computed channel-wise mean and standard deviation:
\[
\Delta_{\mathrm{Norm}}(z)=\frac{z-\mu}{\sigma}, \qquad
\mu,\sigma\in\mathbb{R}^{C},
\]
where \(\mu\) and \(\sigma\) are stored as buffers and broadcast to \((1,C,1,1)\).
This module is used either as an input-level normalization prior to inversion, or as an auxiliary branch that is
combined with \(z\) via residual addition or concatenation.

\textbf{(iii) NN-based enhancement.}
We implement a compact autoencoder-style enhancement network that compresses and then reconstructs features:
two strided \(3\times3\) convolutions reduce the channel width (\(C\!\rightarrow\!C/2\!\rightarrow\!C/4\)) while downsampling,
followed by \texttt{BatchNorm}+\texttt{PReLU}, and two transposed convolutions restore the original resolution and channels
(\(C/4\!\rightarrow\!C/2\!\rightarrow\!C\)).
The output \(\Delta_{\mathrm{NN}}(z)\) is then combined with \(z\) by residual addition or concatenation.
When reported, dropout is applied inside this enhancement branch to mitigate overfitting of the inversion model.

\subsection{Supplementary Analysis for Residual Networks and VGG}
\label{H.5}


\begin{table*}[t]
  \centering
  \small
  \caption{Model inversion attack performance on different residual blocks of IR-152 on CIFAR-10. The table reports reconstruction quality on the auxiliary training set and the test set, including MSE, PSNR, SSIM, and LPIPS. Red entries indicate that, under the same feature dimensionality, increasing depth does not degrade reconstruction quality. \textbf{Bold} numbers indicate the highest MSE / PSNR within each block across the three training conditions.}
  \resizebox{\columnwidth}{!}{%
    \begin{tabular}{@{}l l cccc cccc@{}}
      \toprule
      Block (Dim) & Training Condition & \multicolumn{4}{c}{Auxiliary Data Set} & \multicolumn{4}{c}{Test Data Set} \\
      \cmidrule(lr){3-6} \cmidrule(lr){7-10}
      &  & MSE & PSNR & SSIM & LPIPS & MSE & PSNR & SSIM & LPIPS \\
      \midrule

      \multirow{3}{*}{3 (64, 32, 32)}
        & Normal (90.59\%)            & 0.000951 & 35.039556 & 0.9788 & 0.0072 & \textbf{0.000867} & 35.413233 & 0.9799 & 0.0067 \\
        & LS ($\alpha=0.3$) (90.76\%) & 0.000868 & \textbf{35.440656} & 0.9811 & 0.0067 & 0.000757 & \textbf{36.006741} & 0.9824 & 0.0059 \\
        & LS ($\alpha=-0.05$) (90.49\%) & \textbf{0.000966} & 34.971804 & 0.9803 & 0.0066 & 0.000863 & 35.431071 & 0.9812 & 0.0061 \\
      \midrule

      \multirow{3}{*}{8 (128, 16, 16)}
        & Normal                      & \textbf{0.005040} & 27.783279 & 0.9102 & 0.0468 & \textbf{0.004622} & 28.135101 & 0.9126 & 0.0453 \\
        & LS ($\alpha=0.3$)           & 0.004081 & \textbf{28.699895} & 0.9236 & 0.0387 & 0.003732 & \textbf{29.063014} & 0.9253 & 0.0373 \\
        & LS ($\alpha=-0.05$)         & 0.004898 & 27.907602 & 0.9103 & 0.0458 & 0.004533 & 28.220955 & 0.9112 & 0.0448 \\
      \midrule

      \multirow{3}{*}{25 (256, 8, 8)}
        & Normal                      & \textcolor{red}{\textbf{0.011994}} & \textcolor{red}{24.009093} & 0.8199 & 0.0992 & \textcolor{red}{\textbf{0.018332}} & \textcolor{red}{22.151508} & 0.7845 & 0.1146 \\
        & LS ($\alpha=0.3$)           & \textcolor{red}{0.011375} & \textcolor{red}{\textbf{24.238845}} & 0.8211 & 0.0961 & \textcolor{red}{0.017422} & \textcolor{red}{\textbf{22.372892}} & 0.7901 & 0.1110 \\
        & LS ($\alpha=-0.05$)         & 0.011810 & 24.074101 & 0.8153 & 0.1004 & 0.018181 & 22.186258 & 0.7828 & 0.1152 \\
      \midrule

      \multirow{3}{*}{40 (256, 8, 8)}
        & Normal                      & \textcolor{red}{0.011796} & \textcolor{red}{24.080942} & 0.8205 & 0.0966 & \textcolor{red}{0.018245} & \textcolor{red}{22.171516} & 0.7848 & 0.1131 \\
        & LS ($\alpha=0.3$)           & \textcolor{red}{0.011313} & \textcolor{red}{\textbf{24.260800}} & 0.8255 & 0.0918 & \textcolor{red}{0.017390} & \textcolor{red}{\textbf{22.380720}} & 0.7913 & 0.1079 \\
        & LS ($\alpha=-0.05$)         & \textbf{0.012229} & 23.926252 & 0.8113 & 0.1010 & \textbf{0.018324} & 22.152125 & 0.7798 & 0.1154 \\
      \midrule

      \multirow{3}{*}{48 (256, 8, 8)}
        & Normal                      & \textcolor{red}{0.011440} & \textcolor{red}{24.213205} & 0.8179 & 0.1000 & \textcolor{red}{\textbf{0.018330}} & \textcolor{red}{22.151773} & 0.7832 & 0.1160 \\
        & LS ($\alpha=0.3$)           & \textcolor{red}{0.010778} & \textcolor{red}{\textbf{24.470866}} & 0.8217 & 0.0935 & \textcolor{red}{0.017573} & \textcolor{red}{\textbf{22.334486}} & 0.7906 & 0.1088 \\
        & LS ($\alpha=-0.05$)         & \textbf{0.012018} & 23.997207 & 0.8140 & 0.0988 & 0.018221 & 22.176246 & 0.7822 & 0.1131 \\
      \midrule

      \multirow{3}{*}{49 (512, 4, 4)}
        & Normal                      & 0.021301 & \textbf{21.514565} & 0.6514 & 0.1985 & 0.055127 & \textbf{17.370814} & 0.5327 & 0.2523 \\
        & LS ($\alpha=0.3$)           & \textbf{0.037924} & 19.005789 & 0.5453 & 0.2872 & \textbf{0.066622} & 16.547368 & 0.4511 & 0.3311 \\
        & LS ($\alpha=-0.05$)         & 0.027733 & 20.372392 & 0.6374 & 0.2071 & 0.058004 & 17.148076 & 0.5146 & 0.2613 \\
      \midrule

      \multirow{3}{*}{50 (512, 4, 4)}
        & Normal                      & 0.021366 & \textbf{21.500321} & 0.6425 & 0.2069 & 0.057010 & \textbf{17.224045} & 0.5212 & 0.2616 \\
        & LS ($\alpha=0.3$)           & \textbf{0.038196} & 18.972961 & 0.5116 & 0.3236 & \textbf{0.076292} & 15.959300 & 0.4139 & 0.3642 \\
        & LS ($\alpha=-0.05$)         & 0.031728 & 19.785221 & 0.5728 & 0.2631 & 0.060823 & 16.942471 & 0.4971 & 0.2974 \\
      \midrule

      \multirow{3}{*}{51 (512, 4, 4)}
        & Normal                      & 0.025284 & 20.769392 & 0.6342 & 0.2124 & 0.056566 & \textbf{17.257185} & 0.5192 & 0.2672 \\
        & LS ($\alpha=0.3$)           & \textbf{0.035250} & 19.323376 & 0.5169 & 0.3217 & \textbf{0.078223} & 15.849769 & 0.4089 & 0.3694 \\
        & LS ($\alpha=-0.05$)         & 0.024915 & \textbf{20.832582} & 0.5815 & 0.2552 & 0.061960 & 16.861872 & 0.4892 & 0.2992 \\
      \bottomrule
    \end{tabular}%
  }
  \label{tab:ir152_block_mia}
\end{table*}

\textbf{(i) Analysis for Residual Networks.}
Table~\ref{tab:ir152_block_mia} reports the exact numerical results underlying Fig.~4 in the main text, and further supplements the reconstruction performance on the auxiliary set.

For Blocks 25, 40, and 48, the feature dimensionality remains unchanged at (256, 8, 8) and the representation has not yet undergone a zone shift. Under this fixed-dimensional, pre-transition setting, increasing depth does not yield stronger resistance to MIA. Instead, the inversion performance on both the auxiliary set and the test set can become slightly better at deeper blocks than at Block 25 (highlighted in red), indicating that moving the partition deeper is not always a reliable defence.

At Block 48, label smoothing begins to exhibit a noticeable influence: compared with one-hot training, LS already starts to increase the reconstruction error, consistent with our analysis that LS can gradually induce stronger contraction of $R^2_c$ before the transition fully materializes.

After entering the GPZ, the behavior changes qualitatively. Once the spatial resolution drops to \(4\times4\) (Blocks 49--51), all three blocks share the same dimensionality (512, 4, 4), yet the inversion error increases monotonically with depth on both auxiliary and test sets. This indicates that deeper decision-level representations become progressively harder to invert, and that effective resistance in residual networks emerges only after crossing the GPZ rather than from depth alone.

\begin{figure}[t] 
    \centering
    \includegraphics[width=0.9\columnwidth]{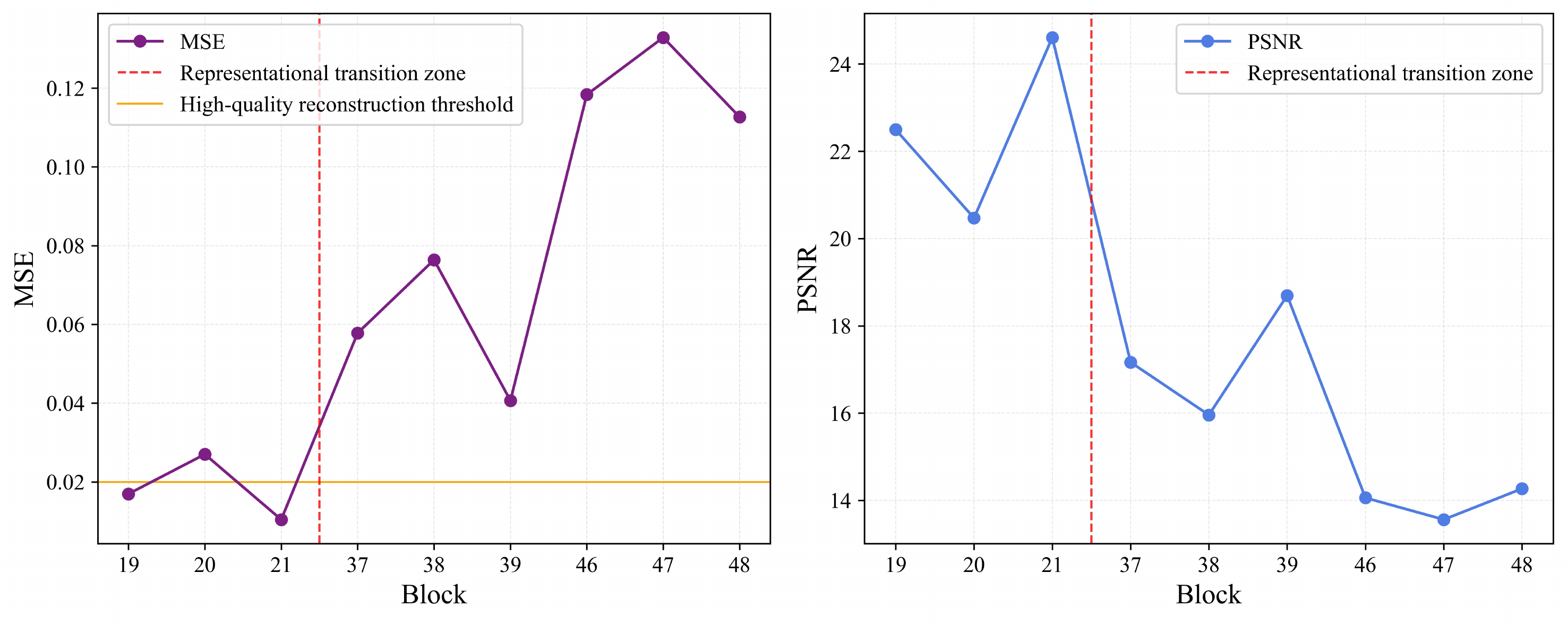}
    \caption{MIA performance across different blocks of ResNet-50 on CIFAR-10.}
    \label{fig:resnet50}
\end{figure}

As shown in Fig.~\ref{fig:resnet50}, besides IR-152, we also conduct experiments on \textbf{ResNet-50}. As introduced in Section~\ref{H.2} and summarized in Table~\ref{tab:resnet50_cifar10_alloc_dims}, ResNet-50 exhibits a characteristic 4$\times$ output-dimension expansion in its bottleneck blocks. Empirically, when depth increases while the representation dimension remains unchanged (e.g., 19$\rightarrow$20, 37$\rightarrow$38, and 46$\rightarrow$47), MSE increases and PSNR decreases, indicating weaker MIA performance. In contrast, under the same spatial resolution, increasing the channel width (thus enlarging the overall representation dimension) leads to lower MSE and higher PSNR (e.g., 19$\rightarrow$21, 37$\rightarrow$39, and 46$\rightarrow$48), i.e., stronger inversion attacks. This architecture-specific behavior further demonstrates that the common intuition that ``deeper layers inherently impede MIA'' is not always correct.

Notably, the relative MSE reductions from 19 to 21, 37 to 39, and 46 to 48 are 38.48\%, 29.67\%, and 4.81\%, respectively. That is, when the $R^2_c$ contraction becomes sufficiently small, the benefit of increasing the representation dimension for MIA diminishes. This trend is consistent with the dependence on \(R^2_c/D\) in the lower bound of \(H(X\!\mid\!Z)\).


\begin{table*}[t]
  \centering
  \small
  \caption{Model inversion attack performance on different depths of VGG19 on CIFAR-10. The table reports reconstruction quality on the auxiliary training set and the test set, including MSE, PSNR, SSIM, and LPIPS. \textbf{Bold} numbers indicate the highest MSE / PSNR within each block across the two training conditions.}
  \resizebox{0.95\columnwidth}{!}{%
    \begin{tabular}{@{}l l cccc cccc@{}}
      \toprule
      Depth (Dim) & Training Condition & \multicolumn{4}{c}{Auxiliary Data Set} & \multicolumn{4}{c}{Test Data Set} \\
      \cmidrule(lr){3-6} \cmidrule(lr){7-10}
      &  & MSE & PSNR & SSIM & LPIPS & MSE & PSNR & SSIM & LPIPS \\
      \midrule

      \multirow{2}{*}{13 (128, 32, 32)}
        & Normal (91.50\%)            & \textbf{0.011416} & 24.229702          & 0.7927 & 0.1083 & \textbf{0.011534} & 24.167314          & 0.7905 & 0.1087 \\
        & LS ($\alpha=0.3$) (91.85\%) & 0.011030          & \textbf{24.387694} & 0.7991 & 0.1051 & 0.011082          & \textbf{24.345436} & 0.8004 & 0.1048 \\
      \midrule

      \multirow{2}{*}{17 (256, 16, 16)}
        & Normal                      & \textbf{0.011754} & 24.101217          & 0.8222 & 0.0965 & \textbf{0.013794} & 23.391311          & 0.8091 & 0.1036 \\
        & LS ($\alpha=0.3$)           & 0.011007          & \textbf{24.390515} & 0.8270 & 0.0954 & 0.013299          & \textbf{23.549928} & 0.8128 & 0.1027 \\
      \midrule

      \multirow{2}{*}{26 (256, 16, 16)}
        & Normal                      & 0.039417          & \textbf{18.838569} & 0.5831 & 0.2598 & \textbf{0.049008} & 17.883296          & 0.5612 & 0.2720 \\
        & LS ($\alpha=0.3$)           & \textbf{0.040755} & 18.693089          & 0.5810 & 0.2559 & 0.048115          & \textbf{17.960066} & 0.5529 & 0.2709 \\
      \midrule

      \multirow{2}{*}{30 (512, 8, 8)}
        & Normal                      & 0.037669          & \textbf{19.039757} & 0.4893 & 0.3477 & 0.066493          & \textbf{16.554013} & 0.4455 & 0.3727 \\
        & LS ($\alpha=0.3$)           & \textbf{0.049440} & 17.855415          & 0.4775 & 0.3562 & \textbf{0.067120} & 16.517554          & 0.4309 & 0.3758 \\
      \midrule

      \multirow{2}{*}{39 (512, 8, 8)}
        & Normal                      & 0.120302          & \textbf{13.997630} & 0.2747 & 0.6320 & 0.136896          & \textbf{13.424092} & 0.2533 & 0.6339 \\
        & LS ($\alpha=0.3$)           & \textbf{0.122318} & 13.929368          & 0.2652 & 0.6367 & \textbf{0.137587} & 13.401025          & 0.2489 & 0.6414 \\
      \midrule

      \multirow{2}{*}{43 (512, 4, 4)}
        & Normal                      & 0.122186          & \textbf{13.934478} & 0.2793 & 0.6407 & 0.152027          & \textbf{12.970874} & 0.2652 & 0.6420 \\
        & LS ($\alpha=0.3$)           & \textbf{0.139969} & 13.352149          & 0.2765 & 0.6573 & \textbf{0.162863} & 12.674551          & 0.2678 & 0.6702 \\
      \midrule

      \multirow{2}{*}{52 (512, 4, 4)}
        & Normal                      & 0.162870          & \textbf{12.696494} & 0.2735 & 0.6771 & 0.162153          & \textbf{12.696784} & 0.2647 & 0.6824 \\
        & LS ($\alpha=0.3$)           & \textbf{0.168025} & 12.562250          & 0.2756 & 0.7002 & \textbf{0.166146} & 12.586002          & 0.2706 & 0.7079 \\
      \bottomrule
    \end{tabular}%
  }
  \label{tab:vgg19_block_mia}
\end{table*}

\textbf{(ii) Analysis for VGG.}
Table~\ref{tab:vgg19_block_mia} provides the detailed numerical results corresponding to Fig.~5 in the main text. A notable pattern is that, for multiple pairs of blocks with identical dimensionality---namely 17$\rightarrow$26, 30$\rightarrow$39, and 43$\rightarrow$52---the reconstruction error increases as depth grows (higher MSE and lower PSNR). More importantly, although 17$\rightarrow$26 and 30$\rightarrow$39 share exactly the same dimensions, the MSE rises substantially, indicating that simply moving to a deeper block can significantly reduce inversion quality even without any explicit dimensional change.

In addition, from Block~30 onward, label smoothing consistently yields higher MSE than one-hot training. This observation aligns with the following mechanism: \textbf{one-hot supervision drives the feature-layer gradients toward zero at convergence, effectively freezing shallow representations; in contrast, LS maintains a strictly positive lower bound on those gradients, so these layers continue to receive small but non-zero updates even in the late stage of training (see Appendix~\ref{appF} for details). Consequently, early features remain more active and less redundant across channels, which increases their entropy \(H(z)\) and makes feature-level representations more susceptible to MIA.} Therefore, the consistently worse inversion performance under LS beyond Block~30 can be interpreted as the stage where LS starts to exert a clear effect on the contraction of \(R_c^2\).

Finally, we emphasize that on VGG, inversion can become ineffective even when the probed representations have dimensions comparable to those in IR-152 and ResNet-50. This further indicates that dimensionality alone does not determine inversion difficulty, and that architecture-dependent representational dynamics play a critical role.

Summarizing our findings, we arrive at three practical ways to locate the GPZ:
\begin{itemize}
  \item \textbf{Theory-driven (via $R_c^2$):} use the theoretical criterion based on the intra-class mean-squared radius $R_c^2$ to identify the transition zone.
  \item \textbf{Attack-driven (post hoc from reconstructions):} infer the GPZ location retrospectively from the reconstruction outcomes produced by a model inversion attacker.
  \item \textbf{Training-contrast (one-hot vs.\ label smoothing):} run MIA against models trained with one-hot labels and with label smoothing, and identify the layer(s) where the LS-trained model's reconstruction MSE begins to rise markedly.
\end{itemize}

\subsection{Reconstruction of input data based on label smoothing prediction}
\label{H.6}
\begin{table*}[t]
  \centering
  \small
  \caption{MIA performance on IR-152’s final outputs using different enhancement methods.}
  \resizebox{\columnwidth}{!}{%
    \begin{tabular}{@{}l l cccc cccc@{}}
      \toprule
      Position & Output Condition & \multicolumn{4}{c}{Auxiliary Data Set} & \multicolumn{4}{c}{Test Data Set} \\
      \cmidrule(lr){3-6} \cmidrule(lr){7-10}
      &  & MSE & PSNR & SSIM & LPIPS & MSE & PSNR & SSIM & LPIPS \\
      \midrule
      Prediction 
        & Normal                                   & 0.170175          & 12.509152          & 0.2760 & 0.7321 & 0.169263          & 12.510290          & 0.2765 & 0.7348 \\
        & LS ($\alpha=0.3$)                        & 0.165342          & 12.634348          & 0.2763 & 0.6955 & 0.163499          & 12.661634          & 0.2711 & 0.6978 \\
        & LS ($\alpha=-0.05$)~\cite{struppek2023careful}
                                                   & 0.171152          & 12.481016          & 0.2777 & 0.7392 & 0.169522          & 12.504991          & 0.2781 & 0.7395 \\
        & Log~\cite{yang2019neural}                & 0.163476          & 12.685201          & 0.2766 & 0.6974 & \textbf{0.162863} & \textbf{12.676754} & 0.2709 & 0.6953 \\
        & Power (1/12)~\cite{zhang2023analysis}    & \textbf{0.160062} & \textbf{12.757273} & 0.2770 & 0.6951 & 0.163184          & 12.667365          & 0.2704 & 0.6939 \\
      \bottomrule
    \end{tabular}%
  }

  \label{tab:prediction_results}
\end{table*}

Moreover, when trained with LS, the model shows a steady increase in MSE regardless of the sign of the smoothing factor \(\alpha\), while one-hot training keeps MSE relatively stable. PSNR follows a similar trend. LS with a positive \(\alpha\) yields better resistance, consistent with the theory in Section~3. Struppek et al.~\cite{struppek2023careful} also observed that positive smoothing enhances MIA effectiveness in centralized settings, since LS increases prediction entropy, and by Eq.~\ref{eq:1}, higher entropy enables stronger MIA. However, as discussed in the threat model, CI inputs are privacy-sensitive. Table~\ref{tab:prediction_results} compares different entropy-enhancing strategies. Although all three improve reconstruction quality the resulting MSE exceeds 0.16, rendering the recovered inputs unlikely to pose a practical threat.

\subsection{The role of data types}
\label{H.7}

\begin{table*}[t]
  \centering
  \small
  \caption{Inverse model trained with different auxiliary data attacks the depths of VGG19 trained with MNIST. The table reports reconstruction quality on the auxiliary set and the test set, including MSE, PSNR, SSIM, and LPIPS. \textbf{Bold} numbers indicate, for each fixed target block, the maximum MSE / PSNR across auxiliary data sources (MNIST/EMNIST/KMNIST), reported separately for the auxiliary set and the test set.}
  \resizebox{0.95\columnwidth}{!}{%
    \begin{tabular}{@{}l l cccc cccc@{}}
      \toprule
      Depth (Dim) & Auxiliary Data & \multicolumn{4}{c}{Auxiliary Data Set} & \multicolumn{4}{c}{Test Data Set} \\
      \cmidrule(lr){3-6} \cmidrule(lr){7-10}
      &  & MSE & PSNR & SSIM & LPIPS & MSE & PSNR & SSIM & LPIPS \\
      \midrule

      \multirow{3}{*}{17 (256, 16, 16)}
        & MNIST  & 0.000383  & 34.171963          & 0.9806 & 0.0217 & 0.000385  & \textbf{34.160216} & 0.9807 & 0.0211 \\
        & EMNIST & 0.0000850 & \textbf{40.748499} & 0.9847 & 0.0042 & \textbf{0.000613} & 32.132613          & 0.9539 & 0.0273 \\
        & KMNIST & \textbf{0.000536} & 32.720340      & 0.9711 & 0.0261 & 0.000395  & 34.036408          & 0.9709 & 0.0326 \\
      \midrule

      \multirow{3}{*}{26 (256, 16, 16)}
        & MNIST  & 0.000962  & 30.178920          & 0.9636 & 0.0366 & 0.001025  & \textbf{29.912627} & 0.9624 & 0.0348 \\
        & EMNIST & 0.0003603 & \textbf{34.461491} & 0.9707 & 0.0126 & \textbf{0.001310} & 28.840008          & 0.9242 & 0.0399 \\
        & KMNIST & \textbf{0.001834} & 27.393657      & 0.9345 & 0.0354 & 0.001194  & 29.241351          & 0.9418 & 0.0417 \\
      \midrule

      \multirow{3}{*}{39 (512, 8, 8)}
        & MNIST  & 0.001687  & \textbf{27.740108} & 0.9309 & 0.0361 & 0.004056  & \textbf{23.947208} & 0.8980 & 0.0445 \\
        & EMNIST & 0.001823  & 27.416396          & 0.9315 & 0.0282 & 0.007511  & 21.265380          & 0.7507 & 0.0770 \\
        & KMNIST & \textbf{0.007273} & 21.406130      & 0.7778 & 0.0663 & \textbf{0.008491} & 20.727759          & 0.7505 & 0.0751 \\
      \midrule

      \multirow{3}{*}{43 (512, 4, 4)}
        & MNIST  & 0.001908  & \textbf{27.209829} & 0.9261 & 0.0383 & 0.006320  & \textbf{22.031106} & 0.8647 & 0.0539 \\
        & EMNIST & 0.004395  & 23.603742          & 0.9010 & 0.0351 & 0.017480  & 17.591984          & 0.6162 & 0.1311 \\
        & KMNIST & \textbf{0.014448} & 18.429939      & 0.7702 & 0.0766 & \textbf{0.021117} & 16.768995          & 0.5984 & 0.1439 \\
      \midrule

      \multirow{3}{*}{52 (512, 4, 4)}
        & MNIST  & 0.028791  & 15.423719          & 0.7583 & 0.0824 & 0.029401  & \textbf{15.336118} & 0.6676 & 0.1142 \\
        & EMNIST & 0.017865  & \textbf{17.512356} & 0.5973 & 0.1403 & 0.056163  & 12.511854          & 0.2169 & 0.3586 \\
        & KMNIST & \textbf{0.044163} & 13.570527      & 0.6734 & 0.1146 & \textbf{0.060255} & 12.210248          & 0.5656 & 0.1639 \\
      \bottomrule
    \end{tabular}%
  }
  \label{tab:aux_attack_vgg19_mnist}
\end{table*}

This subsection supplements Part~4 of Section~\ref{sec:4.2}. Table~\ref{tab:aux_attack_vgg19_mnist} reports the reconstruction performance on the auxiliary training set and the attack performance on the test set for MNIST, under different auxiliary datasets used to train the inversion model.

A consistent pattern with the KMNIST case is observed. Even when the target dataset undergoes a representational transition, the attack remains relatively effective at Block~43 when the auxiliary data are drawn from the same distribution as the target (i.e., MNIST$\rightarrow$MNIST), indicating weaker resistance at this depth under the in-distribution auxiliary setting. In contrast, under approximately matched but out-of-distribution auxiliary data (e.g., EMNIST or KMNIST as auxiliary), Block~43 already exhibits a clear defensive effect: reconstruction quality degrades substantially on the test set compared to the in-distribution case, suggesting that distribution mismatch accelerates the practical onset of inversion failure after the transition.

We further investigate the case of face datasets, which exhibit low intra-class variation. As shown in Fig.~\ref{fig:Vgg_face}, reconstructions ultimately converge to a representative frontal face for each class, resembling the behaviour seen in Fig.~\ref{fig:vgg_kmnist_visual}. The consistency of samples within the 530 classes in FaceScrub causes the transition zone to shift earlier; MSE starts rising around Block~26 and climbs steeply after Block~30. Interestingly, although attacking CIFAR-10 used the same distribution, its large intra-class variance leads to poorer final reconstruction compared to both the structured handwriting datasets and the consistent face data. In such cases, reconstructions tend to regress toward the auxiliary data distribution only when the target representation has low variability.

\begin{figure}[t] 
    \centering
    \includegraphics[width=0.90\columnwidth]{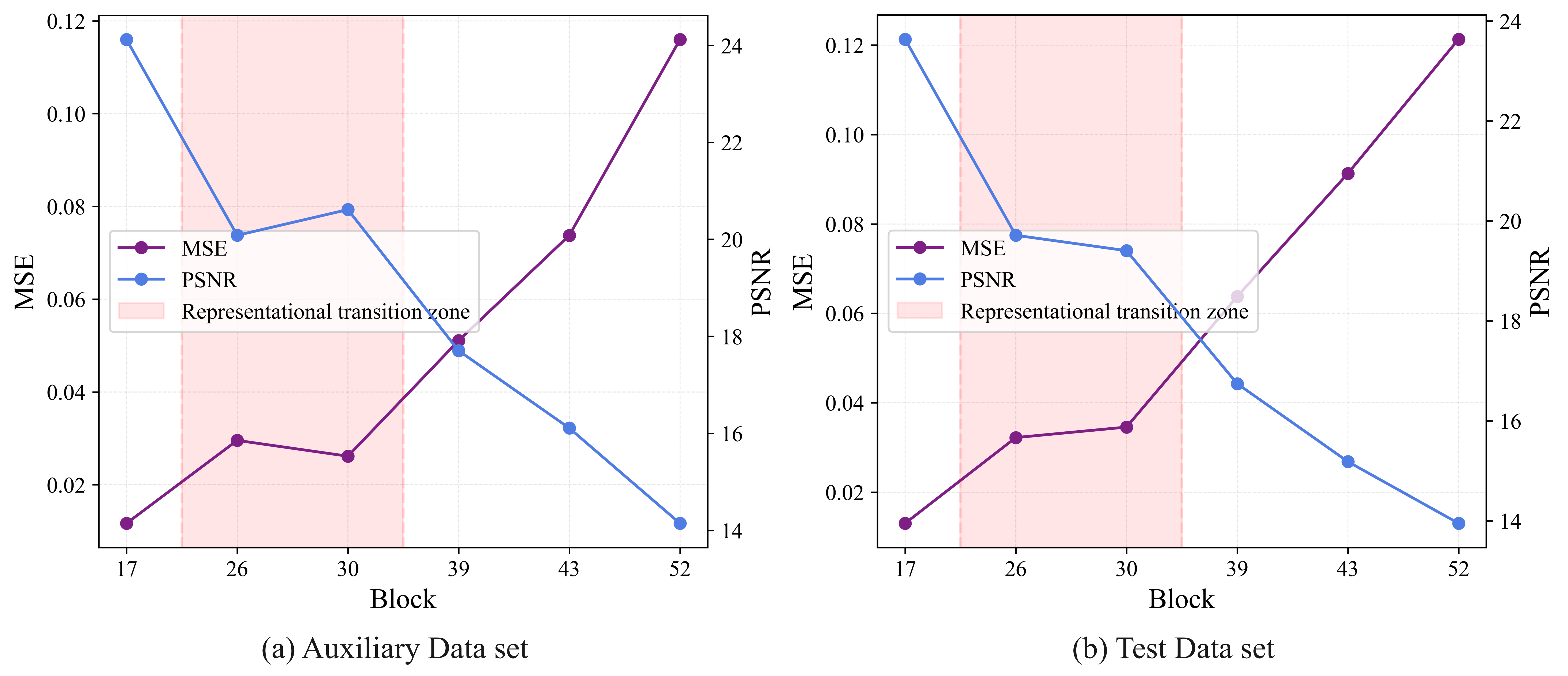}
    \caption{MIA performance across different depths of VGG19 trained with label smoothing on the FaceScrub dataset.}
    \label{fig:vgg_face_mse}
\end{figure}

\begin{figure}[t] 
    \centering
    \includegraphics[width=0.785\columnwidth]{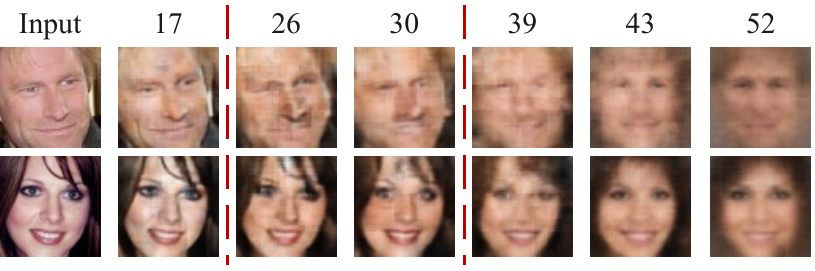}
    \caption{MIA visualization performance of different depths of VGG19 face recognizer after LS training. Dotted lines distinguish the representational transition zone.}
    \label{fig:Vgg_face}
\end{figure}

\begin{table}[t]
\centering
\small
\caption{Edge-side deployment costs under CI splits (batch=1). \textbf{Bold} indicates the largest value within each row.
Tx reports the raw tensor size assuming FP32; FP16/INT8 scales linearly. Params/Buffers are static FP32 memory.
Act peak is the peak live activation tensor footprint (tensor-only). Energy metrics are measured via CPU RAPL on the same host;
EDP/ED$^{2}$P are computed over the same measurement protocol with a fixed number of iterations (warmup excluded).}
\renewcommand{\arraystretch}{1.12}
\setlength{\tabcolsep}{5pt}
\begin{tabular}{lccc}
\toprule
 & \textbf{VGG (depth-26)} & \textbf{VGG (depth-30)} & \textbf{IR152 (block-49)} \\
\midrule

\multicolumn{4}{l}{\textbf{Latency}} \\
Output shape  &\textbf{(1,256,16,16)} & (1,512,8,8) & (1,512,4,4) \\
Edge FLOPs (GFLOPs) & 1.826 & 1.977 & \textbf{7.727} \\
CPU latency (ms, mean/p95) & 2.60 / 4.39 & 2.65 / 4.81 & \textbf{16.11 / 24.84} \\
\midrule

\multicolumn{4}{l}{\textbf{Communication \& Memory Overhead}} \\
Tx (FP32, KB) & \textbf{256} & 128 & 32 \\
Params (MiB, FP32) & 8.88 & 13.39 & \textbf{184.81} \\
Total params (count) & \textbf{139,622,218} & \textbf{139,622,218} & 62,091,978 \\
Edge params (count/share) & 2,328,384 (1.67\%) & 3,509,568 (2.51\%) & \textbf{48,373,696 (77.91\%)} \\
Buffers (MiB) & 0.011 & 0.015 & \textbf{0.172} \\
Act peak (MiB, tensor-only) & 1.00 & 1.00 & \textbf{1.25} \\
\midrule

\multicolumn{4}{l}{\textbf{Energy}} \\
Energy (J) total & 498.6036 & 623.9095 & \textbf{3143.8618} \\
Energy per inf (J/inf) & 0.9972 & 1.2478 & \textbf{6.2877} \\
GFLOPs/W & \textbf{0.003662} & 0.003169 & 0.002458 \\
EDP & 628.1730 & 1076.8269 & \textbf{26785.0955} \\
ED$^{2}$P & 791.4130 & 1858.5325 & \textbf{228203.8402} \\
\bottomrule
\end{tabular}
\label{tab:edge_costs_vertical}
\end{table}

\section{Deployment overhead}
\label{appI}

Table~\ref{tab:edge_costs_vertical} decomposes a CI split into three practical cost families:
(i) \textbf{Latency} on the edge, (ii) \textbf{Communication \& Memory Overhead}, and
(iii) \textbf{Energy}. The goal is not to optimize any single metric in isolation, but to expose
the dominant bottlenecks that constrain real CI deployments and to quantify the tradeoffs
across split points.

All CPU-side measurements use a fixed cap of 8 threads.
Latency statistics are obtained with a warmup phase of $N_{\text{warm}}^{\text{lat}}=50$ iterations, which are excluded, followed by $N_{\text{lat}}=200$ timed forward passes; mean and p95 are computed over these
per-iteration wall-clock latencies.
Communication and memory metrics are derived deterministically from the
activation tensor shapes and model footprints under the same input resolution and split definition,
and therefore do not depend on the iteration count.
Energy measurements use a longer window to reduce variance due to RAPL readout granularity/noise:
we run $N_{\text{warm}}^{\text{eng}}=100$ warmup iterations, which are excluded, then measure over
$N_{\text{eng}}=500$ iterations with the same 8-thread setting; we report
$E_{\text{inf}} = E_{\text{total}}/N_{\text{eng}}$ with $N_{\text{eng}}=500$.

\subsection{Latency}
\textbf{What the metrics mean.}
Edge FLOPs and CPU latency (mean/p95) characterize the edge-side execution cost of the trunk up to the
split point.
\textbf{FLOPs} (floating-point operations) denote the total number of arithmetic operations required by
the trunk for one forward pass under the given input resolution and split configuration. In our accounting,
convolution and linear layers dominate; non-linearities and simple elementwise ops (e.g., ReLU, pooling,
normalization) contribute comparatively little. We follow the common convention that one multiply--accumulate
(MAC) counts as 2 FLOPs (one multiply + one add), so FLOPs scale roughly with the amount of dense compute.
FLOPs are hardware-agnostic and useful for comparing compute across split points, but they do not capture
platform effects such as memory bandwidth, cache locality, vectorization, kernel launch overhead, and
thread scheduling. Therefore, \textbf{CPU latency} is the deciding system metric.
Mean latency reflects typical responsiveness, while p95 captures tail latency and jitter, which are critical
for low-latency interactive CI. The output shape $Z_\Delta$ indicates the activation tensor that will be
produced at the split and then stored/transmitted.

\textbf{How to read high/low values.}
Lower mean/p95 indicates better real-time feasibility. Lower FLOPs usually correlates with lower latency
on the same platform, but the correlation is imperfect; latency reflects both compute and memory-system
behavior, so it should be treated as the primary feasibility indicator.

\textbf{What the table says quantitatively.}
Compared with VGG depth-26, the IR152 split increases edge compute by
$7.727/1.826\approx 4.23\times$ and mean latency by $16.11/2.60\approx 6.20\times$
(p95: $24.84/4.39\approx 5.66\times$).
Within VGG, moving from depth-26 to depth-30 changes mean latency only by
$2.65/2.60\approx 1.017\times$ ($\sim$1.7\%), indicating that depth-30 is nearly latency-neutral
relative to depth-26 under this measurement.

\subsection{Communication \& Memory Overhead}
This block answers two deployment questions: (i) how expensive it is to transmit $Z_\Delta$ and
(ii) whether the edge device can hold and run the trunk reliably within memory constraints.

\textbf{What the metrics mean.}
\begin{itemize}
\item \textbf{Tx (FP32)} is the raw payload size of $Z_\Delta$ if serialized in FP32. It is computed from the
tensor shape as $\lvert Z_\Delta\rvert \times 4$ bytes, and therefore scales linearly with precision:
FP16 halves Tx and INT8 quarters it. Tx directly affects uplink time, networking energy, and end-to-end
CI latency when bandwidth is limited.
\item \textbf{Params/Buffers} measure the static memory footprint of the edge trunk. This is the dominant contributor to model storage and can constrain deployment
on memory-limited devices.
\item \textbf{Act peak} is the peak live activation memory during trunk execution (tensor-only). It approximates
the minimum runtime memory budget required to avoid out-of-memory failures during inference.
\end{itemize}

\textbf{How to read high/low values.}
Lower Tx is beneficial when network bandwidth or uplink latency is the bottleneck.
Lower Params/Buffers is beneficial when the edge device has limited RAM/flash or when model loading and caching
overhead matters. Act peak matters when runtime memory headroom is tight.

\textbf{What the table says quantitatively.}
Within VGG, moving from depth-26 to depth-30 halves the transmitted payload
(256KB$\rightarrow$128KB, a $2\times$ reduction), while increasing the edge-side parameter footprint by
$13.39/8.88\approx 1.51\times$.
IR152 achieves a smaller Tx (32KB), which is $256/32=8\times$ smaller than VGG depth-26 and $128/32=4\times$
smaller than VGG depth-30, but this comes with a substantially larger edge model:
Params are $184.81/8.88\approx 20.8\times$ (vs.\ VGG depth-26) and $184.81/13.39\approx 13.8\times$
(vs.\ depth-30).
Notably, Act peak differs only mildly (1.00--1.25 MiB), so the dominant footprint gap is the edge trunk model size
rather than activations.

\textbf{Deployment implication.}
If the edge is bandwidth-limited, deeper splits that reduce Tx can be attractive, but only if the edge device can
afford the larger trunk (Params/Buffers). Table~\ref{tab:edge_costs_vertical} shows that VGG depth-30 provides a
strong ``Tx reduction with minimal latency change'' option, whereas the IR152 deep split reduces Tx further but
makes the trunk footprint much heavier.

\subsection{Energy}
Energy metrics determine whether a split is viable in always-on, thermally constrained, or battery-limited settings.
They also reflect latency from a different angle because energy accumulates with sustained execution time.

\textbf{What the metrics mean.}
\begin{itemize}
\item \textbf{Energy total (J)} is the total CPU energy consumed over the measurement window (fixed protocol).
\item \textbf{Energy per inf (J/inf)} is the amortized energy per inference, typically the most deployment-relevant
quantity for battery-limited or always-on CI.
\item \textbf{GFLOPs/W} is achieved compute throughput per watt under the same FLOPs accounting. Since FLOPs are a
model-side proxy, GFLOPs/W is most meaningful for relative comparison across split points on the same platform and
measurement protocol.
\item \textbf{EDP} (J$\cdot$s) and \textbf{ED$^{2}$P} (J$\cdot$s$^{2}$) jointly penalize energy and delay.
ED$^{2}$P penalizes slow execution more strongly and is useful when tail latency is unacceptable.
\end{itemize}

\textbf{How we compute energy metrics.}
We run warmup (excluded), then measure over $N$ inference iterations.
Let $E_d^{(0)}$ and $E_d^{(1)}$ be RAPL energy readings (Joules) for domain $d$ at start/end, and
$\Delta E_d=E_d^{(1)}-E_d^{(0)}$. Then:
\begin{align}
E_{\text{total}} &= \sum_{d\in\mathcal{D}} \Delta E_d, \qquad
T = t_1-t_0, \qquad
P_{\text{avg}} = \frac{E_{\text{total}}}{T}, \qquad
E_{\text{inf}} = \frac{E_{\text{total}}}{N}, \\
\text{GFLOPs/W} &= \frac{(\text{FLOPs}/T)/10^9}{P_{\text{avg}}}, \qquad
\text{EDP}=E_{\text{total}}\cdot T, \qquad
\text{ED}^{2}\text{P}=E_{\text{total}}\cdot T^{2}.
\end{align}

\textbf{How to read high/low values.}
Lower $E_{\text{inf}}$ means lower energy cost per edge inference.
Lower EDP/ED$^{2}$P indicates better joint energy--latency efficiency; even if average power is similar, a method
can consume substantially more energy if it runs longer (since $E_{\text{inf}}\approx P_{\text{avg}}\times \text{latency}$).

\textbf{What the table says quantitatively.}
IR152 consumes $6.2877/0.9972\approx 6.31\times$ more energy per inference than VGG depth-26 and
$6.2877/1.2478\approx 5.04\times$ more than VGG depth-30.
The gap is magnified in energy--delay metrics:
EDP is $26785/628\approx 42.6\times$ (vs.\ depth-26) and $26785/1077\approx 24.9\times$ (vs.\ depth-30),
and ED$^{2}$P is $228204/791\approx 288\times$ (vs.\ depth-26) and $228204/1859\approx 123\times$ (vs.\ depth-30),
highlighting that deep residual splits can be particularly unfavorable when low-latency operation is required.
Within VGG, moving from depth-26 to depth-30 increases $E_{\text{inf}}$ by $1.2478/0.9972\approx 1.25\times$
while halving Tx, offering a practical ``spend $\sim$25\% more energy to halve communication'' tradeoff knob.

\section{Related Work}
\label{appJ}
Model inversion attacks in collaborative inference were first explored by He et al.~\cite{he2019model}, who demonstrated that shallow feature representations could be exploited to reconstruct input data. Subsequent research observed that increasing model depth helps resist such attacks. Liu et al.~\cite{2025arXiv250100824L} further clarified that the success of MIA is governed by information-theoretic factors such as mutual information, conditional entropy and effective information content.

During this period, advancements in attack methodology were relatively slow, as the original decoder architectures were already effective in reconstructing inputs from shallow features~\cite{yang2019neural}. Notable developments include Yin et al.~\cite{yin2023ginver}, who proposed a generative approach capable of performing attacks without auxiliary data, and Chen et al.~\cite{chen2024dia}, who introduced a conditional diffusion-based method to approximate the inverse of forward models.

In contrast, defence strategies have seen more rapid innovation. Representative works include Wang et al.~\cite{wang2022privacy}, who proposed deepening the model dynamically and adding differential privacy noise; Xia et al.~\cite{xia2025theoretical} and Duan et al.~\cite{duan2023privascissors}, who introduced information-theoretic defences by maximizing conditional entropy and estimating mutual information via CLUB (Contrastive Log Upper Bound), respectively; and Azizian et al.~\cite{azizian2024privacy}, who incorporated an autoencoder architecture to increase inversion resistance.

Across these defences, two common patterns emerge: first, many approaches increase model depth, which can facilitate the transition from feature- to decision-level representations; second, they aim to increase $H(X\!\mid\!Z)$ or reduce $H(Z)$, in line with the theoretical framework in \cite{2025arXiv250100824L}. However, these defences often incur accuracy degradation and do not explicitly target the essence of representational transition. Moreover, most existing works still place the partition point in shallow layers, overlooking the central role of split location in CI.

Another important observation highlighted in prior CI defence discussions is that protocol-level cryptographic protections (e.g., HE and secure computation) can in principle hide transmitted representations from an untrusted cloud, but they often introduce substantial computation and communication overhead, which can be challenging, and often impractical, under the real-time and resource-constrained assumptions typical in CI. For example, early CI work explicitly notes that homomorphic encryption suffers from ``huge inefficiency'' and is not applicable to all DNN operations~\cite{he2019model}. Moreover, while the broader private inference literature has made progress on optimizing encrypted inference (e.g., 2PC-based cryptographic inference services and HE-based encrypted split inference)~\cite{srinivasan2019delphi,pereteanu2022split,elkhatib2025prism}, CI--encryption integration remains comparatively under-explored and is typically evaluated under limited model scales or constrained deployment settings, with end-to-end latency and system complexity still being major barriers for low-latency edge--cloud CI. Rather than proposing yet another defence, we address the upstream split-location problem and identify a theoretically grounded partition zone that provides intrinsic resistance; this guidance can inform future work on designing stronger defences, developing more targeted attacks, and making practical deployment choices in CI.

\textbf{Scope clarification.} Our goal is not to propose an incremental defence that perturbs representations, but to characterize and predict intrinsic inversion resistance in CI, and to provide actionable guidance on split selection that is complementary to existing defences. Therefore, the appropriate evaluation is theory validity, cross-architecture verification, and robustness under strengthened attackers, rather than a head-to-head comparison with a particular defence.


\end{document}